\newtheorem{Thm}{Theorem}
\newcommand{\lscr}{\textcolor{black}}
\newcommand{\lscc}{\textcolor{black}}
\newcommand{\lsccc}{\textcolor{black}}
\newcommand{\lscca}{\textcolor{black}}
\newcommand{\lscv}{\textcolor{black}}
\newcommand{\lscn}{\textcolor{black}}
\newcommand{\lscb}{\textcolor{black}}
\newcommand{\wqm}{\textcolor{black}}
\newcommand{\nwqa}{\textcolor{black}}
\newcommand{\nwqc}{\textcolor{black}}
\newcommand{\nwqd}{\textcolor{black}}
\newcommand{\nwqf}{\textcolor{black}}
\newcommand{\nwqh}{\textcolor{black}}
\newcommand{\wq}{\textcolor{black}}
\begin{document}

\title{Jointly Sparse Signal Recovery and Support Recovery via Deep Learning {with} Applications in MIMO-based Grant-Free Random Access}

\author{
\IEEEauthorblockN{Ying Cui, Shuaichao Li, Wanqing Zhang}
\thanks{The authors are with the Department of Electronic Engineering, Shanghai Jiao Tong University, Shanghai 200240, China. Ying Cui is the corresponding author (cuiying@sjtu.edu.cn).
}
\thanks{This work was supported in part by the National Key R\&D Program of China under Grant 2018YFB1801102 \lscca{and in part by Natural Science Foundation of Shanghai under Grant 20ZR1425300}. This article was presented in part at IEEE WCNC 2020 \cite{LS2019} and IEEE SPAWC 2020 \cite{zhang2020}.
}}

\maketitle

\vspace{-8mm}
\begin{abstract}
In this paper, we investigate jointly sparse signal recovery and jointly sparse support recovery in Multiple Measurement Vector (MMV) models for complex signals, which arise in many applications in communications and signal processing. Recent key applications include channel estimation and device activity detection in MIMO-based grant-free random access which is proposed to support massive machine-type communications (mMTC) for \lscr{ Internet of Things (IoT)}. Utilizing techniques in compressive sensing, optimization and deep learning, we propose \wq{two model-driven} approaches, based on the standard auto-encoder structure for real numbers. One is to jointly design the common measurement matrix and jointly sparse signal recovery method, and the other aims to jointly design the common measurement matrix and jointly sparse support recovery method. The proposed model-driven approaches can effectively \lscr{utilize features of} sparsity patterns in \lscr{designing common} measurement matrices and \lscr{adjusting \wq{model-driven}} decoders, and can greatly benefit from the underlying state-of-the-art recovery methods with theoretical guarantee. Hence, \lscr{the obtained common measurement matrices and recovery methods} \lscb{can significantly outperform} the underlying advanced recovery methods. We conduct extensive numerical results on channel estimation and device activity detection in MIMO-based grant-free random access. The numerical results show that the proposed approaches \lscb{provide} pilot sequences and channel estimation or device activity detection methods \lscb{which can} achieve \lscb{higher} \lscr{estimation or detection} accuracy with shorter computation time than \lscb{existing ones}. Furthermore, the numerical results explain how such gains are achieved \lscb{via the} proposed approaches.
\end{abstract}

\begin{IEEEkeywords}
Compressive sensing, jointly sparse support recovery, jointly sparse signal recovery, mMTC, grant-free random access, device activity detection, channel estimation, optimization, auto-encoder, deep learning.
\end{IEEEkeywords}

\section{Introduction}
\label{intro}
Jointly sparse signal recovery and jointly sparse support recovery in Multiple Measurement Vector (MMV) models refer to the estimation of $M$ jointly sparse $N$-dimensional signals and the estimation of the common support of $M$ jointly sparse $N$-dimensional signals, respectively, from $L$ ($\ll$${N}$) limited noisy linear measurements based on a common measurement matrix. When $M=1$, they reduce to sparse signal recovery and sparse support recovery in Single Measurement Vector (SMV) models, respectively. Two main challenges exist in jointly sparse signal recovery and jointly sparse support recovery. One is to design a common measurement matrix which maximally retains the information on jointly sparse signals or their common support when reducing signal dimension. The other is to recover the jointly sparse signals or the common support with high recovery accuracy and short computation time.

The jointly sparse signal recovery problem and the jointly sparse support recovery problem arise in many applications in communications and signal processing \cite{zhang2018ista,7952561,hammernik2018learning, 8437359,chen2019covariance,DD2019,liu2018sparse,8264818,8323217,8323218,yang2018admm}. Recently, their key applications include channel estimation and device activity detection in MIMO-based grant-free random access. \lscr{Grant-free random access has been widely regarded as one promising solution for supporting} massive machine-type communications (mMTC) for Internet of Things (IoT), \lscr{which is} one of the three main use cases in 5G \cite{8437359,chen2019covariance,DD2019,liu2018sparse,8264818,8323217,8323218}. In mMTC, there are massive \lscr{IoT devices in each cell. However}, only a small number of devices are active at a time, and a very small amount of data is transmitted from each active device. In grant-free random access, each device is assigned a specific pilot sequence, all active devices send their pilot sequences, and each base station (BS) detects the activities of its associated devices or estimates their channel states, \lscr{depending on specific application requirements}. Common measurement matrix design, jointly sparse signal recovery and jointly sparse support recovery for complex signals \lscr{correspond to design of pilot sequences}, channel estimation \cite{8323217,8323218} and device activity detection \cite{8437359,chen2019covariance,DD2019,liu2018sparse,8264818}, respectively, \lscr{in MIMO-based grant-free random access, which has} gained increasing interest and attention. In this paper, our primary \lscr{goals are} to address the aforementioned two challenges in jointly sparse signal recovery and jointly sparse support recovery for complex signals, \lscr{and to} provide practical solutions with high recovery accuracy and short computation time for channel estimation and device activity detection in MIMO-based grant-free random access.

Most existing works on sparse support recovery \lscr{for SMV models \cite{5319750, 5319742,4839045}} and jointly \lscb{sparse} support recovery \lscr{for MMV models \cite{6994860,8437359,chen2019covariance,DD2019,liu2018sparse,8264818,8323218,ke2020compressive, wei2016approximate}} focus on \lscr{investigating \lscb{sparse} support recovery methods} for a given measurement matrix. For example, optimization-based methods, such as least absolute shrinkage and selection operator (LASSO) \cite{4839045, 6994860}, maximum likelihood (ML) estimation  \cite{8437359,chen2019covariance} and maximum a posterior (MAP) estimation \cite{DD2019}\lscr{,} are adopted for sparse support recovery. In particular, \cite{4839045} directly deals with noisy linear measurements, while \cite{6994860,8437359,chen2019covariance,DD2019} operate on the covariance matrix of noisy linear measurements. \lscr{In addition}, iterative thresholding methods\lscr{,} such as approximate message passing (AMP) \cite{liu2018sparse,8264818} and \lscr{expectation maximization-based approximate message passing (EM-AMP) \cite{ke2020compressive, wei2016approximate}, are} \lscb{adopted} to achieve jointly sparse support recovery. Sparse signal recovery \lscr{for SMV models} \cite{tibshirani1996regression,donoho2009message} and jointly sparse signal recovery \lscr{for MMV models} \cite{qin2013efficient,ziniel2012efficient,8323217,8323218,ke2020compressive, wei2016approximate} are more widely investigated. Similarly, most \lscb{existing studies on sparse signal recovery for SMV models \cite{tibshirani1996regression,donoho2009message} and jointly sparse signal recovery for MMV models  \cite{qin2013efficient,ziniel2012efficient,8323217,8323218,ke2020compressive, wei2016approximate} focus on tackling sparse} signal recovery problems for a given measurement matrix. Classic methods include optimization-based methods\lscr{, such as LASSO \cite{tibshirani1996regression} and GROUP LASSO \cite{qin2013efficient}, and iterative thresholding methods, such as AMP \cite{donoho2009message,8323217,8323218,ziniel2012efficient} and EM-AMP \cite{ke2020compressive, wei2016approximate}.}

\lscr{Notice that LASSO \cite{4839045, 6994860,tibshirani1996regression} and GROUP LASSO \cite{qin2013efficient} do not rely on any information of signals or noise; ML \cite{8437359,chen2019covariance} relies on the distributions of non-zero components of signals and noise; MAP \cite{DD2019} and AMP \cite{liu2018sparse,8264818,donoho2009message,8323217,8323218,ziniel2012efficient} are based on the distributions of all components of signals and noise; and EM-AMP \cite{ke2020compressive, wei2016approximate} requires the forms of the distributions of all components of signals and noise with unknown parameters, which are learnt by EM. Furthermore, note that ML  \cite{8437359,chen2019covariance}, MAP \cite{DD2019}, AMP \cite{liu2018sparse,8264818,donoho2009message,8323217,8323218,ziniel2012efficient} and EM-AMP \cite{ke2020compressive, wei2016approximate} all assume that the components of signals are independent, and hence their recovery performance may be unsatisfactory when such assumption is not satisfied.} \lscr{When the components of signals are correlated and prior distributions do not have analytic models, neural networks are recently utilized to exploit properties of sparse signals from training samples, \lscv{for the purpose of} designing effective sparse signal recovery methods \lscv{in such scenario}. For instance}, \cite{gregor2010learning,yao2017deepiot,he2018deep,taha2019enabling,yang2018admm,zhang2019deep,wu2019learning} exploit properties of sparsity patterns of real signals \cite{gregor2010learning,yao2017deepiot,he2018deep,wu2019learning} and complex signals \cite{taha2019enabling,yang2018admm,zhang2019deep} from training samples\lscr{,} using data-driven \cite{gregor2010learning,yao2017deepiot,taha2019enabling} or model-driven \cite{he2018deep,yang2018admm,zhang2019deep,wu2019learning} approaches\lscr{,} to improve sparse signal recovery performance in SMV models. More specifically, \cite{yang2018admm} adopts a sequential alternating direction method of multipliers (ADMM) algorithm (which \lscv{does not allow parallel computation and thus} cannot make use of the parallelizable neural network architecture) for solving the LASSO problem, \cite{wu2019learning} \lscv{utilizes} a gradient-based algorithm for solving the basis pursuit problem, and \cite{zhang2019deep} relies on \wq{an} iterative thresholding algorithm. \lscv{It is worth noting that} in general, a model-driven approach uses fewer training samples, has \lscb{better} performance guarantee, and provides \lscb{more} design insight than a data-driven approach.

\lscr{Design of the common measurement matrix is seldomly studied.} Very few papers \cite{candes2008restricted, 8264817} investigate the impact of the measurement matrix \lscr{in sparse \wq{signal} recovery}. For instance, \lscr{in \cite{candes2008restricted}, the \wq{authors show}} that a measurement matrix can preserve \lscr{sparsity information} in sparse signals, if it satisfies the restricted isometry property (RIP). In addition, in \cite{8264817}, the authors consider group sparse signals and show that block-coherence and sub-coherence of a measurement matrix affect signal recovery performance. However, the results in \cite{candes2008restricted, 8264817} may not hold \lscr{for sparse support recovery}.

It has been shown that joint design of signal compression and recovery methods for real signals \cite{wu2019learning,nguyen2017deep,8262812,adler2017block} or complex signals \cite{8322184,8861085} using \lscr{deep auto-encoders} can significantly improve recovery performance. \lscv{Note that} \cite{wu2019learning,nguyen2017deep,8262812,adler2017block,8322184,8861085} are all \lscr{for SMV} models, and their extensions to MMV models are highly nontrivial. In addition, note that neither the neural network for complex signals in \cite{8322184} nor direct extensions of the neural networks for real signals \lscr{in} \cite{wu2019learning,nguyen2017deep,8262812,adler2017block} to complex signals can achieve linear compression for complex signals. Thus, how to jointly design the common measurement matrix and jointly sparse signal or support recovery methods   in MMV models for complex signals remains open.

In this paper, we investigate jointly sparse signal recovery and jointly sparse support recovery in MMV models for complex signals. Utilizing techniques in compressive sensing, optimization, and deep learning, we propose two-model driven approaches, based on the standard auto-encoder structure for real numbers. \lscv{Specifically}, one is to jointly design the common measurement matrix and jointly sparse signal recovery method, and the other aims to jointly design the common measurement matrix and jointly sparse support recovery method, both for complex signals. \lscr{A common goal is to effectively utilize features of sparsity patterns in designing common measurement matrices \wq{and recovery methods} and to greatly benefit from the underlying state-of-the-art recovery methods with theoretical guarantee}. The main contributions of this paper are summarized as follows.
\begin{itemize}
\item The model-driven approach for jointly sparse signal recovery consists of an encoder that mimics the noisy linear measurement process and a model-driven decoder that mimics the jointly sparse signal recovery process of a particular method \lscr{using an approximation part and a correction part}. We further propose two instances for the model-driven decoder, namely, the GROUP LASSO-based decoder and the AMP-based decoder. \lscv{They} are based on \lscb{an} \lscr{ADMM} algorithm for GROUP LASSO \cite{boyd2011distributed} and the AMP algorithm with the \lscv{Minimum Mean Squared Error (MMSE)} denoiser \cite{8323218}, respectively, and are effective at different system parameters. \lscr{Note that the ADMM algorithm for GROUP LASSO can make use of the parallelizable neural network structure, while the block coordinate descent algorithm for GROUP LASSO in \cite{qin2013efficient,LS2019} cannot. In addition}, note that the LASSO-based \lscr{neural networks in \cite{yang2018admm,wu2019learning}} and the AMP-based neural network in \cite{zhang2019deep} are \lscr{for SMV} models.
\item The model-driven approach for jointly sparse support recovery consists of an encoder which is the same as in the model-driven approach for jointly sparse signal recovery, a model-driven decoder that mimics the jointly sparse support recovery process of a certain method \lscr{via an approximation part and a correction part}, and a thresholding module \lscv{that implements} binary approximation to obtain the common support. We also propose two instances for the model-driven decoder, namely, the covariance-based decoder and the MAP-based decoder. \lscv{They} are based on a covariance-based estimation method \cite{6994860} and a coordinate decent algorithm for the MAP estimation \cite{DD2019}, respectively, and are suitable for different system parameters. \lscr{It is worth noting that the proposed covariance-based decoder can achieve much better recovery performance than the covariance-based estimation method in \cite{6994860} when $M$ is not large.} To our knowledge, this is the first time that the MAP estimation \lscv{for jointly sparse support recovery} is implemented using \lscr{a} neural network, thanks to the structure of the coordinate descent algorithm recently proposed in \cite{DD2019}. Besides, \lscv{when followed by a thresholding module,} the AMP-based decoder for jointly sparse signal recovery can also be used as a model-driven decoder for jointly sparse support recovery.
\item We consider two application examples, namely, channel estimation and device activity detection in MIMO-based grant-free random access, and apply our proposed approaches therein. By extensive numerical results, we demonstrate the substantial gains of the proposed approaches over existing methods in terms of both recovery accuracy and computation time, and show \lscr{how such gains} are achieved. Specifically, the proposed model-driven approaches can effectively \lscr{utilize features of sparsity patterns} in designing the common measurement matrix and adjusting the correction layers of the model-driven decoders; they benefit from the state-of-the-art underlying recovery methods via the approximation \lscb{parts} of the \lscv{model-driven} decoders; and they provide recovery methods with higher recovery accuracy and \lscb{shorter} computation time, owing to the careful choices for the numbers of the approximation layers and correction layers in the \lscv{model-driven} decoders.
\end{itemize}


\subsection*{\textbf{Notation}}
We represent vectors by boldface small letters (e.g., $\mathbf{x}$), matrices by boldface capital letters (e.g., $\mathbf{X}$), scalar constants by non-boldface letters (e.g., $x$ or $X$) and sets by calligraphic letters (e.g., $\mathcal{X}$). The notation $X(i,j)$ denotes the $(i,j)$-th element of matrix $\mathbf{X}$, $\mathbf{X}_{i,:}$ represents the $i$-th row of matrix $\mathbf{X}$, $\mathbf{X}_{:,i}$ represents the $i$-th column of matrix $\mathbf{X}$, and $x(i)$ represents the $i$-th element of vector $\mathbf{x}$. Superscript $^H$ , superscript $^T$ and superscript $^*$ denote transpose conjugate, transpose and conjugation, respectively. The notation $\||\cdot\||_F$ represents the Frobenius norm of a matrix, ${\rm vec}(\cdot)$ denotes the column vectorization of a matrix, ${\rm tr}(\cdot)$ denotes the trace of a matrix, ${\rm Cov}(\cdot)$ represents the covariance matrix of a random vector, $\odot$ represents the Khatri-Rao product between two matrices, $\mathbb{I}[\cdot]$ denotes the indicator function, and $\Re(\cdot)$ and $\Im(\cdot)$ represent the real part and imaginary part, respectively. $\mathbf{0}_{m\times n}$ and $\mathbf{I}_{n\times n}$ represent the $m\times n$ zero matrix and the $n\times n$ identity matrix, respectively. The complex field and real field are denoted by $\mathbb{C}$ and $\mathbb{R}$, respectively.

\section{Problems and Applications}
\label{app}
In this section, we first introduce jointly sparse signal recovery and jointly sparse support recovery \lscr{in MMV models for complex signals}. Then, we illustrate their application examples, i.e., \lscr{channel estimation and device activity detection} in MIMO-based grant-free random access.

The support of a sparse $N$-dimensional complex signal $\mathbf{x}\in \mathbb{C}^N$ is defined as the set of locations of non-zero elements of $\mathbf{x}$, and is denoted by ${\rm supp}(\mathbf{x}) \triangleq \{n\in\mathcal{N}|x(n) \neq 0\}$, where $\mathcal{N} \triangleq \{1,\cdots,N\}$. If the number of non-zero elements of $\mathbf{x}$ is much smaller than \lscr{the total number of elements of $\mathbf{x}$}, i.e., $|{\rm supp}(\mathbf{x})| \ll N $, $\mathbf{x}$ is sparse. \lscr{Consider $M>1$ jointly sparse signals $\mathbf{x}_m\in \mathbb{C}^N, m \in \mathcal{M} \triangleq \{1,\cdots,M\}$, which share a common support, i.e., ${\rm supp}(\mathbf{x}_m), m \in \mathcal{M}$ are identical. Let $\mathcal{S}$ denote the common support. Denote $\alpha(n)\triangleq \mathbb{I}[n \in \mathcal{S}]$. Equivalently, $\mathcal{S}$ can be expressed using $\boldsymbol{\alpha}\in\{0,1\}^{N}$, i.e.,} $\mathcal{S}=\{n \in \mathcal{N}|\alpha(n)=1\}$. For all $m\in \mathcal{M}$, consider $L \ll N$ noisy linear measurements $\mathbf{y}_m\in \mathbb{C}^L$ of $\mathbf{x}_m$, i.e., $\mathbf{y}_m = \mathbf{A}\mathbf{x}_m+\mathbf{z}_m$, where $\mathbf{A}\in \mathbb{C}^{L\times N}$ is the common measurement matrix, and $\mathbf{z}_m\sim\mathcal{CN}(\mathbf{0}_{L\times 1},\sigma^2\mathbf{I}_{L\times L})$ is the additive white Gaussian noise. \lscr{Define} $\mathbf{X} \in \mathbb{C}^{N\times M}$ with $\mathbf{X}_{:,m} = \mathbf{x}_m, m\in \mathcal{M}$, $\mathbf{Y}\in \mathbb{C}^{L\times M}$ with $\mathbf{Y}_{:,m} = \mathbf{y}_m, m\in \mathcal{M}$ and $\mathbf{Z}\in \mathbb{C}^{L\times M}$ with $\mathbf{Z}_{:,m} = \mathbf{z}_m, m\in \mathcal{M}$, respectively. \lscr{More compactly}, we have:
\begin{align}
\mathbf{Y} =\mathbf{A}\mathbf{X}+\mathbf{Z}
\end{align}
\begin{itemize}
\item Jointly sparse signal recovery in MMV models aims to \lscr{estimate $M$} jointly sparse signals $\mathbf{x}_m,m \in \mathcal{M}$ (i.e., $\mathbf{X}$) from $M$ noisy linear measurement vectors $\mathbf{y}_m,m \in \mathcal{M}$ (i.e., $\mathbf{Y}$), obtained through a common measurement matrix $\mathbf{A}$.

\item Jointly sparse support recovery in MMV models aims to identify the common support $\mathcal{S}$ (or $\boldsymbol{\alpha}$) shared by $M$ jointly sparse signals $\mathbf{x}_m,m \in \mathcal{M}$ (i.e., $\mathbf{X}$) from $M$ noisy linear measurement vectors $\mathbf{y}_m,m \in \mathcal{M}$ (i.e., $\mathbf{Y}$), obtained through a common measurement matrix $\mathbf{A}$.
\end{itemize}

\begin{figure*}[htp]
\begin{center}
 \subfigure[\scriptsize{\lscr{Proposed approach with GROUP LASSO-based decoder. The \lsccc{approximation} part implements $U$ iterations of Algorithm~1.}}]
 {\resizebox{17.4cm}{!}{\includegraphics{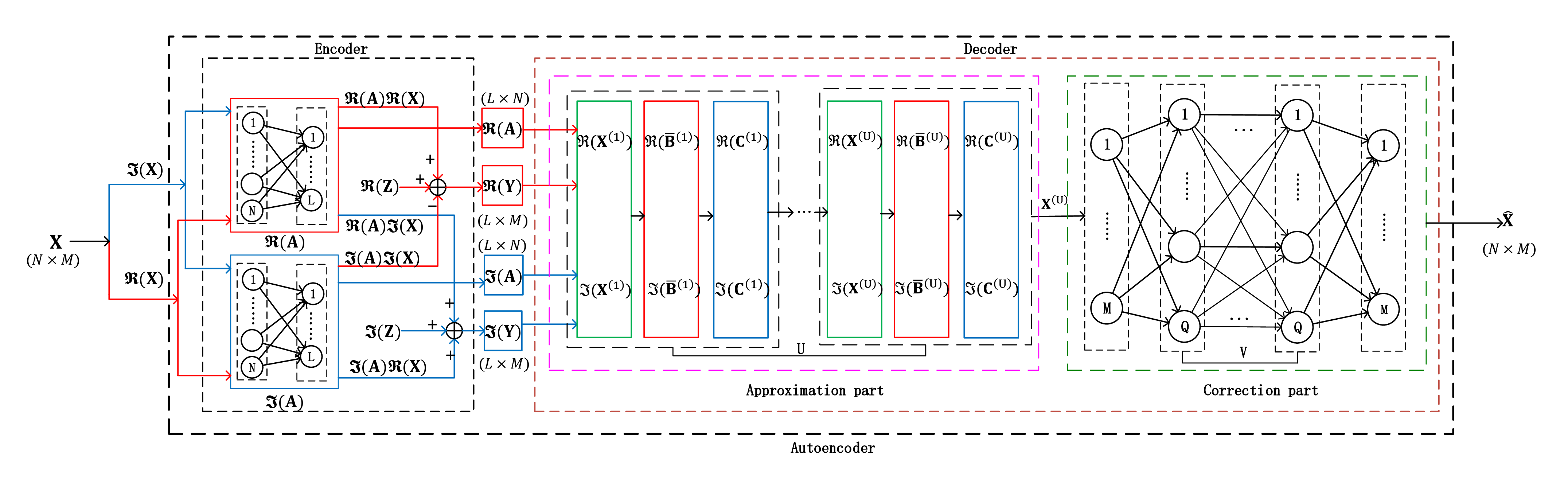}}}
 \subfigure[\scriptsize{Proposed approach with AMP-based decoder. \lscr{The \lsccc{approximation} part implements $U$ iterations of Algorithm~2.}}]
 {\resizebox{17.4cm}{!}{\includegraphics{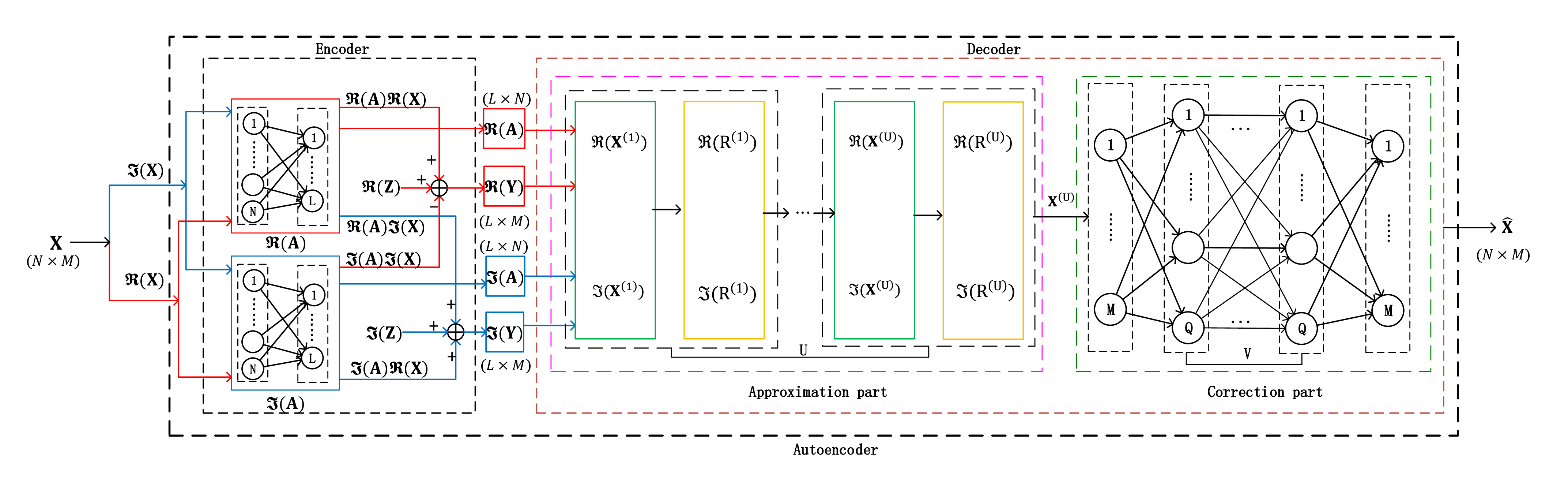}}}
  \end{center}
  \vspace*{-0.3cm}
     \caption{Proposed model-driven approach for jointly sparse signal recovery.}
     \vspace*{-0.3cm}
\label{networkmodel-sig}
\end{figure*}

As important application examples, we consider \lscr{channel estimation and device activity detection in} MIMO-based grant-free random access, which is recently proposed to support massive access in mMTC for IoT \cite{8323218,8264818,8323217,8846802,liu2018sparse}. Consider a single cell with one $M$-antenna base station (BS) and $N$ single-antenna (IoT) devices. \lscr{For all $n\in\mathcal{N},$ let $\alpha(n) \in \{0,1\}$ represent} the active state of device $n$, where $\alpha(n)=1$ means that \lscr{device $n$} accesses the channel (i.e., is active), and $\alpha(n)=0$ otherwise. \lscr{For all $m \in \mathcal{M}$ and $n\in\mathcal{N}$}, let $h_{m}(n)\in \mathbb{C}$ denote the state of the wireless channel between the $m$-th antenna at the BS and device $n$, and view $\alpha(n)h_{m}(n)$ as $x_m(n)$. That is, $x_m(n)$ represents the corresponding channel state if it is not zero. \lscr{As device} activity patterns for IoT traffic are typically sporadic, $\mathbf{x}_{m}\in \mathbb{C}^N,m \in \mathcal{M}$ are jointly sparse \lscr{with common} support $\mathcal{S}=\{n \in \mathcal{N}|\alpha(n)=1\}$. In grant-free random access, each device $n$ has a unique pilot sequence $\mathbf{a}_n \in \mathbb{C}^L$ of length $L \ll N$. View the pilot matrix $\mathbf{A} \in \mathbb{C}^{L\times N}$ with $\mathbf{A}_{:,n} = \mathbf{a}_n, n\in \mathcal{N}$, which is known at the BS, as the common measurement matrix. In the pilot transmission phase, active devices synchronously send their pilot sequences to the BS. Then, $\mathbf{Y}$ in (1) represents the received signal at the BS. There are two potential tasks for the BS in the pilot transmission phase. One is channel estimation, and the other is device activity detection.

\begin{itemize}
\item In channel estimation, the BS aims to estimate the channel states of all active devices, i.e., $h_{m}(n), m \in \mathcal{M}$ for all $n \in \mathcal{N}$ with $\alpha(n)=1$, via estimating $\mathbf{x}_{m}\in \mathbb{C}^N, m \in \mathcal{M}$ \big(i.e., $\mathbf{X}$\big) from $\mathbf{Y}$.  \lscb{Obviously, channel estimation in MIMO-based grant-free random access} corresponds to jointly sparse signal recovery in MMV models \cite{8437359,chen2019covariance,DD2019,liu2018sparse,8264818}.

\item In device activity detection, the BS tries to estimate the device activities, i.e., $\alpha(n)$, $n \in \mathcal{N}$, by estimating \lscr{the common support  $\mathcal{S}$ of $\mathbf{x}_{m}\in \mathbb{C}^N,m \in \mathcal{M}$ from $\mathbf{Y}$. \lscb{Apparently, device activity detection in MIMO-based grant-free random access} corresponds to} jointly sparse support recovery in MMV models \cite{8323217,8323218}.
\end{itemize}

\lscr{There are} some applications where devices have no data to transmit or only very limited data to transmit (which can be embedded into the pilots \cite{chen2019covariance,8323217}). \lscr{For such applications, there exits only the pilot transmission phase, and the BS performs device activity detection in the pilot transmission phase. There are also applications where each device has certain amount of data to transmit. For those applications, there is a \lscb{subsequent} phase following the pilot transmission phase, namely the data transmission phase, and channel states are necessary for the BS to receive data from active devices in the data transmission phase. Hence, in the pilot transmission phase, the BS directly performs channel estimation rather than device activity detection.}



\section{A Model-driven Approach for jointly sparse signal recovery}
\label{approachsig}
In this section, we propose a model-driven approach, based on the standard auto-encoder structure for real numbers in deep learning, to jointly\footnote{\lscb{Note that the proposed model-driven approach can also be used for designing a sparse signal recovery method for a given common measurement matrix via fixing the encoder while training the decoder.}} design the common measurement matrix and jointly sparse signal recovery method for sparse complex signals. As shown in Fig.~\ref{networkmodel-sig}, the model-driven approach consists of an encoder that mimics the noisy linear measurement process and a model-driven decoder that mimics the jointly sparse signal recovery process of a particular method. We further propose two instances for the model-driven decoder for jointly sparse signal recovery, namely the GROUP LASSO-based decoder and the AMP-based decoder. \lscb{An encoder that is} jointly trained with \lscb{a model-driven decoder provides} the respective common measurement \lscb{matrix}. After training, \lscr{the design} of the common measurement matrix \lscr{can be obtained by} extracting the weights of the encoder\lscr{, and} the model-driven decoder can be directly used for jointly sparse signal recovery. Note that the obtained design of the common measurement matrix and model-driven decoder for jointly sparse signal recovery should be jointly utilized, as a common measurement matrix designed for one jointly sparse signal recovery method may not be effective for another.

\subsection{Encoder for Jointly Sparse Signal Recovery}
The encoder mimics the noisy linear measurement process for jointly sparse signals with a common measurement matrix in (1). To mimic (1) using \lscr{a} standard auto-encoder for real numbers in deep learning, we equivalently express (1) as:
\begin{align}
\Re(\mathbf{Y})=\Re(\mathbf{A})\Re(\mathbf{X})-\Im(\mathbf{A})\Im(\mathbf{X})+\Re(\mathbf{Z})\\
\Im(\mathbf{Y})=\Im(\mathbf{A})\Re(\mathbf{X})+\Re(\mathbf{A})\Im(\mathbf{X})+\Im(\mathbf{Z})
\end{align}
Two fully-connected neural networks, each with two layers, are built to implement the multiplications with matrices $\Re(\mathbf{A})\in \mathbb{R}^{L\times N}$ and $\Im(\mathbf{A})\in \mathbb{R}^{L\times N}$, respectively. For each neural network, there are $N$ neurons and $L$ neurons in the input layer and output layer, respectively; the weight of the connection from the $n$-th neuron in the input layer to the $l$-th neuron in the output layer corresponds to the $(l,n)$-th element of the corresponding matrix; and no activation functions are used in the output layer. The elements of $\Re(\mathbf{Z})\in \mathbb{R}^{L\times M}$ and $\Im(\mathbf{Z})\in \mathbb{R}^{L\times M}$ are generated independently according to $\mathcal{N}(0,\frac{\sigma^2}{2})$. As shown in Fig.~\ref{networkmodel-sig}, when $\Re(\mathbf{X})\in \mathbb{R}^{N\times M}$ and $\Im(\mathbf{X})\in \mathbb{R}^{N\times M}$ are input to the encoder, $\Im(\mathbf{Y})\in \mathbb{R}^{L\times M}$ and $\Re(\mathbf{Y})\in \mathbb{R}^{L\times M}$ can be readily obtained.

\subsection{Model-driven Decoder for Jointly Sparse Signal Recovery}
\label{signal}
\lscr{In this subsection}, we propose a model-driven decoder \lscr{that} consists of two parts, namely the approximation part and the correction part. As illustrated in Fig.~\ref{networkmodel-sig}, the approximation part which is used to approximate  a particular method for jointly sparse signal recovery has $U$ ($\geq 0$) \wq{building blocks}. We can directly input $\mathbf{Y}$ and $\mathbf{A}$ to the approximation part and obtain $\mathbf{X}^{(U)}$ as the output. Note that $\mathbf{X}^{(U)}$ can be treated as an approximation of the estimate obtained by the \lscr{underlying} method, \lscr{and the approximation error decreases with $U$.} Two instances for the approximation part will be illustrated shortly. The correction part \lscr{which aims to reduce the difference between the obtained approximate estimation $\mathbf{X}^{(U)}$ and the actual jointly sparse signals $\mathbf{X}$} consists of $V$ ($\geq 1$) fully connected layers. Specially, \lscc{the \lscv{first correction layer} has \lsccc{$M$ neurons with $\left(\Re(\mathbf{X}^{(U)}_{:,m}),\ \Im(\mathbf{X}^{(U)}_{:,m})\right), \ m\in\mathcal{M}$ as the input.}} The last correction layer has \lsccc{$M$} neurons with \lsccc{$\left(\Re(\mathbf{\hat{X}}_{:,m}),\ \Im(\mathbf{\hat{X}}_{:,m})\right), \ m\in\mathcal{M}$} \lscv{as the output.} In each of the first $V-1$ correction layers, rectified linear unit (ReLU) is chosen as the activation function. \lscv{In the last correction layer, there is no activation function.} \lscr{$V$ influences the training error and generalization error.} Note that $U$ and $V$ \lscr{should be} jointly chosen \lscr{so that} the proposed model-driven decoder can achieve higher recovery accuracy \lscr{(i.e., a \wq{smaller} gap between $\hat{\mathbf{X}}$ and $\mathbf{X}$)} and shorter computation time than \lscr{the underlying} method. When $U$ is sufficiently large and $V=0$,  the proposed model-driven decoder reduces to a particular method, \lscr{and} the proposed model-driven approach is used for designing a common measurement matrix for \lscr{such} method. When $U=0$ and $V>0$, the proposed model-driven approach degrades to a purely data-driven one for \lscr{joint} design of \lscr{the} common measurement matrix and jointly sparse signal recovery method.

In the following, we propose two instances of the model-driven decoder, more precisely, two instances of its approximation part, \lscr{based} on a \lscr{carefully designed ADMM} algorithm for GROUP LASSO and the AMP algorithm in \cite{8323218}, respectively. They are referred as the GROUP LASSO-based decoder and the AMP-based decoder, and are illustrated in Fig.~\ref{networkmodel-sig} (a) and Fig.~\ref{networkmodel-sig} (b), respectively.

\subsubsection{GROUP LASSO-based Decoder}
\label{sigGL}
In this part, we \lscb{present} the GROUP LASSO-based decoder. First, we introduce GROUP LASSO, which is a natural optimization-theoretic formulation of the jointly sparse signal recovery problem in MMV models \cite{qin2013efficient}\lscr{:}
\vspace*{-0.2cm}

\begin{align}
\label{GL}
\min_\mathbf{X}\;\frac{1}{2}\||\mathbf{A}\mathbf{X}-\mathbf{Y}\||_F^2+\lambda\lscb{\sum_{n\in\mathcal{N}}\|\mathbf{X}_{n,:}\|_2}
\end{align}
where $\lambda \geq 0$ is a \lscr{regularization} parameter. Let $\mathbf{X}^\star$ denote an optimal solution of this problem. \lscr{Note that GROUP LASSO does not rely on any information of sparse signals or noise, and the difference between $\mathbf{X}^\star$ and the actual jointly sparse signals \wq{$\mathbf{X}$} varies with $\lambda$. When} $M=1$, the formulation in (\ref{GL}) reduces to LASSO \cite{tibshirani1996regression}. \lscr{The problem in \eqref{GL} is convex and can be solved optimally. Applying standard convex optimization methods for solving it may not be computationally efficient. For instance, the computational complexity of the gradient descent method is $\mathcal{O}(N^2M)$, which is prohibitively high at large $N$. In \cite{qin2013efficient,LS2019}, the block coordinate descent method with computational complexity $\mathcal{O}(LNM)$ is adopted. \lscb{In particular, $\mathbf{X}_{n,:}$, $n\in\mathcal{N}$ are updated in a sequential manner.} Although the computational complexity \lscb{at large $N$} is reduced, the sequential operations in the block coordinate descent method cannot make use of the parallelizable neural network architecture, and hence still has unsatisfactory computation time.}

\lscr{Next, by carefully exploiting structural properties of the problem in \eqref{GL} and using ADMM \cite{boyd2011distributed}, we develop a fast algorithm which allows parallel operations and can utilize the parallelizable neural network architecture. \lscb{The problem in \eqref{GL} can be rewritten as:
$$\min_{\mathbf{X}}\ \frac{1}{2}\|\sum_{n\in\mathcal{N}}\mathbf{A}_{:,n}\mathbf{X}_{n,:}-\mathbf{Y}\|_F^2+\lambda\sum_{n\in\mathcal{N}} \|\mathbf{X}_{n,:}\|_2$$
We follow the approach used for the sharing problem in \cite{boyd2011distributed}.} \lscb{Specifically,} introducing auxiliary variables $\mathbf{B}_{n}\in\mathbb{C}^{L\times M}, n\in\mathcal{N}$ and extra constraints $\mathbf{A}_{:,n}\mathbf{X}_{n,:}-\mathbf{B}_n=0,n\in\mathcal{N}$, we can obtain the following equivalent problem in ADMM form:}
\lscr{\begin{align}
\label{admm}
	\min_{\mathbf{X},\mathbf{B}_n,n\in\mathcal{N}}\ &\frac{1}{2}\|\sum_{n\in\mathcal{N}}\mathbf{B}_n-\mathbf{Y}\|_F^2+\lambda\sum_{n\in\mathcal{N}} \|\mathbf{X}_{n,:}\|_2\nonumber\\
	\mathrm{s.t.} \quad \ &\mathbf{A}_{:,n}\mathbf{X}_{n,:}-\mathbf{B}_n=0, \quad n \in \mathcal{N}
\end{align}}By the scaled form of ADMM \cite{boyd2011distributed}, we \lscb{can obtain the update equations in \eqref{xx}, \eqref{zz} and \eqref{uu}, as shown at the top of the next page,} where $k$ is the iteration index, $\mathbf{\overline{AX}}^{(k)}=\frac{1}{N}\lscb{\sum_{n\in\mathcal{N}}}\mathbf{A}_{:,n}\mathbf{X}^{(k)}_{n,:}$, $\mathbf{\overline{B}}^{(k)} = \frac{1}{N}\lscb{\sum_{n\in\mathcal{N}}}\mathbf{B}^{(k)}_{n}$, \lscb{$\mathbf{C}^{(k)} \in \mathbb{C}^{L\times M}$ correspond to the dual variables with respect to the constraints in \eqref{admm}}, and $\rho>0$ is the augmented Lagrangian parameter. The problems in \eqref{xx} and \eqref{zz} are convex. By setting the gradients of their objective functions to be zero, we can obtain the optimal solutions.
\begin{figure*}[htp]
\begin{align}
\label{xx}
\mathbf{X}^{(k+1)}_{n,:} = \ &\arg \min_{\mathbf{X}_{n,:}} \ \lambda\|\mathbf{X}_{n,:}\|_2 + \lscb{\frac{\rho}{2}}\|\mathbf{A}_{:,n}\mathbf{X}_{n,:} -\mathbf{A}_{:,n}\mathbf{X}_{n,:}^{(k)}-\mathbf{\overline{B}}^{(k)} +\mathbf{\overline{AX}}^{(k)} +\mathbf{C}^{(k)}\|_F^2, \quad n\in\mathcal{N}\\
\label{zz}
\mathbf{\overline{B}}^{(k+1)} = \ &\arg\min_{\mathbf{\overline{B}}} \ \frac{1}{2}\|N\mathbf{\overline{B}}-\mathbf{Y}\|_F^2 + \lscb{\frac{N\rho}{2}}\|\mathbf{\overline{B}} -\mathbf{\overline{AX}}^{(k+1)} -\mathbf{C}^{(k)}\|_F^2 \\
\label{uu}
\mathbf{C}^{(k+1)} = \ &\mathbf{C}^{(k)} + \mathbf{\overline{AX}}^{(k+1)} - \mathbf{\overline{B}}^{(k+1)}
\end{align}
\hrulefill
\end{figure*}
\begin{algorithm}[tp]
\label{ADMM}
    \caption{ADMM for GROUP LASSO}
    \begin{algorithmic}[1]
        \STATE Set ${\mathbf{X}^{(0)}}=\mathbf{0}_{N\times M}$, $\mathbf{B}^{(0)}=\mathbf{0}_{N\times M}$, $\mathbf{C}^{(0)}=\mathbf{0}_{N\times M}$ and $k=0$.
	 \REPEAT
	\STATE For all $n\in\mathcal{N}$, compute $\mathbf{X}^{(k+1)}_{n,:}$ according to \eqref{cfxx}.

	\STATE Compute $\mathbf{\overline{B}}^{(k+1)}$ according to \eqref{cfzz}.

	\STATE Compute $\mathbf{C}^{(k+1)}$ according to \eqref{uu}.

	\STATE Set $k=k+1$.
	 \UNTIL $k=k_{max}$ or $\mathbf{X}^{(k)}$ satisfies some stopping criterion.
    \end{algorithmic}
\end{algorithm}
\begin{Thm}[Optimal Solutions]
The optimal solutions of the problems in \eqref{xx} and \eqref{zz} are given by:
\begin{align}
\label{cfxx}
\mathbf{X}^{(k+1)}_{n,:} =& \frac{\max \left\{1- \frac{\lambda}{\rho\|\mathbf{t}^{(k)}\|_2}, 0\right\} \mathbf{t}^{(k)}}{ \mathbf{A}^H_{:,n}\mathbf{A}_{:,n}}\\
\label{cfzz}
\mathbf{\overline{B}}^{(k+1)} =& \frac{1}{N+\rho} \left(\mathbf{Y}+\rho \mathbf{\overline{AX}}^{(k+1)}+\rho \mathbf{C}^{(k)}\right)
\end{align}
where $\mathbf{t}^{(k)} = \mathbf{A}^H_{:,n}\left(\mathbf{A}_{:,n}\mathbf{X}^{(k)}_{n,:} + \mathbf{\overline{B}}^{(k)}-\mathbf{\overline{AX}}^{(k)}-\mathbf{C}^{(k)}\right)$.
\end{Thm}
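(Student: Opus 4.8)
The plan is to treat the two minimizations separately, since \eqref{zz} is a smooth strongly convex quadratic while \eqref{xx} is nonsmooth, with all the nonsmoothness coming from the $\ell_2$-norm penalty $\lambda\|\mathbf{X}_{n,:}\|_2$. Both objectives are convex and coercive, so in each case a point is globally optimal if and only if zero lies in its (Wirtinger) subdifferential; the entire argument amounts to writing down that optimality condition and solving it in closed form.

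For \eqref{zz} the objective is differentiable everywhere, so I would simply set its gradient with respect to $\mathbf{\overline{B}}$ to zero. Differentiating $\tfrac{1}{2}\|N\mathbf{\overline{B}}-\mathbf{Y}\|_F^2+\tfrac{N\rho}{2}\|\mathbf{\overline{B}}-\mathbf{\overline{AX}}^{(k+1)}-\mathbf{C}^{(k)}\|_F^2$ yields $N(N\mathbf{\overline{B}}-\mathbf{Y})+N\rho(\mathbf{\overline{B}}-\mathbf{\overline{AX}}^{(k+1)}-\mathbf{C}^{(k)})=\mathbf{0}$. Dividing by $N$ and collecting the terms in $\mathbf{\overline{B}}$ gives $(N+\rho)\mathbf{\overline{B}}=\mathbf{Y}+\rho\mathbf{\overline{AX}}^{(k+1)}+\rho\mathbf{C}^{(k)}$, which is exactly \eqref{cfzz}. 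Strong convexity guarantees this stationary point is the unique global minimizer, so nothing further is needed here.

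For \eqref{xx} I would first exploit that $\mathbf{A}_{:,n}\mathbf{X}_{n,:}$ is a rank-one (outer-product) term, so that the quadratic expands with the positive scalar $\mathbf{A}^H_{:,n}\mathbf{A}_{:,n}=\|\mathbf{A}_{:,n}\|_2^2$ as its only coupling. The gradient of the smooth part with respect to the row vector $\mathbf{X}_{n,:}$ is then $\rho\big((\mathbf{A}^H_{:,n}\mathbf{A}_{:,n})\mathbf{X}_{n,:}-\mathbf{t}^{(k)}\big)$, where $\mathbf{t}^{(k)}=\mathbf{A}^H_{:,n}(\mathbf{A}_{:,n}\mathbf{X}^{(k)}_{n,:}+\mathbf{\overline{B}}^{(k)}-\mathbf{\overline{AX}}^{(k)}-\mathbf{C}^{(k)})$ is precisely the quantity in the statement. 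The optimality condition $\mathbf{0}\in\rho\big((\mathbf{A}^H_{:,n}\mathbf{A}_{:,n})\mathbf{X}_{n,:}-\mathbf{t}^{(k)}\big)+\lambda\,\partial\|\mathbf{X}_{n,:}\|_2$ then splits into two cases along the subdifferential of the $\ell_2$ norm. When $\mathbf{X}_{n,:}\neq\mathbf{0}$ the penalty is differentiable with gradient $\lambda\mathbf{X}_{n,:}/\|\mathbf{X}_{n,:}\|_2$, and the condition forces $\mathbf{X}_{n,:}$ to be a positive scalar multiple of $\mathbf{t}^{(k)}$; substituting $\mathbf{X}_{n,:}=c\,\mathbf{t}^{(k)}$ and solving the resulting scalar equation for $c$ produces the shrinkage factor $\big(1-\lambda/(\rho\|\mathbf{t}^{(k)}\|_2)\big)/(\mathbf{A}^H_{:,n}\mathbf{A}_{:,n})$, valid exactly when $\|\mathbf{t}^{(k)}\|_2>\lambda/\rho$. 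When $\mathbf{X}_{n,:}=\mathbf{0}$ the subdifferential is the ball $\{\mathbf{g}:\|\mathbf{g}\|_2\leq\lambda\}$, and the condition reduces to $\|\rho\mathbf{t}^{(k)}\|_2\leq\lambda$, i.e.\ $\|\mathbf{t}^{(k)}\|_2\leq\lambda/\rho$. Merging the two regimes through the factor $\max\{1-\lambda/(\rho\|\mathbf{t}^{(k)}\|_2),0\}$ recovers \eqref{cfxx}.

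I expect the only delicate step to be the nonsmooth case analysis for \eqref{xx}: one must argue, using the complex subdifferential of $\|\cdot\|_2$, that the minimizer is aligned with $\mathbf{t}^{(k)}$ and that the two regimes are separated exactly at the threshold $\|\mathbf{t}^{(k)}\|_2=\lambda/\rho$. The rank-one structure is what makes this tractable, since it collapses the matrix least-squares coupling to the single positive scalar $\mathbf{A}^H_{:,n}\mathbf{A}_{:,n}$ and turns \eqref{xx} into the standard block soft-thresholding (proximal) computation; everything else is routine complex differentiation.
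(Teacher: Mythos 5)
Your proposal is correct and follows essentially the same route as the paper: for \eqref{zz} you set the gradient to zero and solve the resulting linear equation exactly as the paper does, and for \eqref{xx} you write the first-order optimality condition and reduce it, via the rank-one structure and the scalar $\mathbf{A}^H_{:,n}\mathbf{A}_{:,n}$, to the block soft-thresholding formula. The only difference is that you carry out the subdifferential case analysis (the $\mathbf{X}_{n,:}=\mathbf{0}$ regime and the alignment argument) explicitly, whereas the paper states the stationarity equation and delegates that step to equations (3)--(5) of the cited GROUP LASSO reference; your version is the more self-contained of the two.
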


\begin{proof}
	First, by setting the gradient of the objective function of the problem in \eqref{xx} to be zero, we have $\lambda \frac{\mathbf{X}_{n,:}}{||\mathbf{X}_{n,:}||_2}+\rho \mathbf{A}_{:,n}^H\Big(\mathbf{A}_{:,n}\mathbf{X}_{n,:}-\mathbf{A}_{:,n}\mathbf{X}_{n,:}^{(k)}-\mathbf{\overline{B}}^{(k)}+\mathbf{\overline{AX}}^{(k)}+\mathbf{\overline{C}}^{(k)}\Big)	=0 $,  implying
$\mathbf{X}_{n,:}+\frac{\lambda}{\rho \mathbf{A}_{:,n}^H\mathbf{A}_{:,n}}\frac{\mathbf{X}_{n,:}}{||\mathbf{X}_{n,:}||_2}=-\rho\mathbf{A}_{:,n}^H -\mathbf{A}_{:,n}\mathbf{X}_{n,:}^{(k)}-\mathbf{\overline{B}}^{(k)}
		+\mathbf{\overline{AX}}^{(k)}+\mathbf{\overline{C}}^{(k)}$.
	By the equations in (3)-(5) in [22], we can obtain \eqref{cfxx}.
	Next, by setting the gradient of the objective function of the problem in \eqref{zz} to be zero, we have $N(N\mathbf{\overline{B}}-\mathbf{Y})+N\rho (\mathbf{\overline{B}}-\mathbf{\overline{AX}}^{(k+1)}-\mathbf{C}^{(k)})=0$.
	Thus, we can obtain \eqref{cfzz}. Hence, we complete the proof.
\end{proof}


\lscr{The details of the ADMM algorithm are summarized in Algorithm~1. By Appendix A in \cite{boyd2011distributed}, we know that $\mathbf{X}^{(k)}$ converges to an optimal solution of the problem in \eqref{GL}, as $k\to \infty$. \wqm{Algorithm~1 has two parameters, i.e., $\lambda>0$ and $\rho>0$, which influence the recovery accuracy and convergence speed, respectively.} The computational complexity is $\mathcal{O}(LNM)$. As $\mathbf{X}^{(k)}_{n,:} \in \mathbb{C}^{1\times M} ,n\in \mathcal{N}$ can be computed in parallel and the size of $\mathbf{\overline{B}}^{(k)} \in \mathbb{C}^{L\times M}$ is \lscb{usually not large} (due to $L\ll N$), the computation time of Algorithm~1 can be greatly reduced, }\wqm{compared to the block coordinate descent algorithm in \cite{qin2013efficient,LS2019}.}
\wq{To our knowledge, Algorithm~1 is so far the most efficient algorithm for GROUP LASSO.}

\wqm{Finally, we introduce the approximation part of the GROUP LASSO-based decoder, which is to approximate the estimation obtained by Algorithm~1.}
\lscr{Note that the operations for complex numbers in Algorithm~1 can be easily expressed in terms of $\Re(\mathbf{A})$, $\Im(\mathbf{A})$, $\Re(\mathbf{Y})$, $\Im(\mathbf{Y})$, $\Re(\mathbf{\overline{B}})$, $\Im(\mathbf{\overline{B}})$, $\Re(\mathbf{C})$ and $\Im(\mathbf{C})$. Thus, the operations for complex numbers in Algorithm~1 are readily implemented with operations for real numbers using a standard neural network.} As illustrated in Fig.~\ref{networkmodel-sig} (a), each \lscr{building block} of the approximation part of the GROUP LASSO-based decoder realizes one iteration of Algorithm~1. We can directly input $\mathbf{Y}$ and $\mathbf{A}$ to the approximation part to obtain $\mathbf{X}^{(U)}$, which is the estimate of $\mathbf{X}$ at the $U$-th iteration of Algorithm~1, as the output. \wqm{Note that $\lambda>0$ and $\rho>0$ are tunable parameters.}

\begin{figure*}[htp]
\begin{align}
\label{computeX}
\mathbf{X}_{n,:}^{(k+1)} &= \left(\frac{\frac{1}{(\tau^{(k)})^2+1}\left((\mathbf{R}^{(k)})^H\mathbf{A}_{:,n}+(\mathbf{X}_{n,:}^{(k)})^H\right)}{1+\frac{1-\epsilon(n)}{\epsilon(n)}\lsccc{\left(\frac{(\tau^{(k)})^2+1}{(\tau^{(k)})^2}\right)}^M\exp\left(\frac{-\|(\mathbf{R}^{(k)})^H\mathbf{A}_{:,n}+(\mathbf{X}_{n,:}^{(k)})^H\|_2^2}{(\tau^{(k)})^2((\tau^{(k)})^2+1)}\right)}\right)^H&\\
\label{computeR}
\mathbf{R}^{(k+1)} &= \mathbf{Y}-\mathbf{A}\mathbf{X}^{(k+1)}+\frac{N}{L}\mathbf{R}^{(k)}\sum_{n\in\mathcal{N}}\Bigg(\frac{\frac{1}{\lscca{(\tau^{(k)})^2+1}}\mathbf{I}_{M\times M}}{1+\frac{1-\epsilon(n)}{\epsilon(n)}\lsccc{\left(\frac{(\tau^{(k)})^2+1}{(\tau^{(k)})^2}\right)}^M\exp\left(\frac{-\|(\mathbf{R}^{(k)})^H\mathbf{A}_{:,n}+(\mathbf{X}_{n,:}^{(k)})^H\|_2^2}{(\tau^{(k)})^2((\tau^{(k)})^2+1)}\right)}\nonumber\\
&\quad\quad\quad\quad\quad\quad\quad\quad\quad\quad\quad\quad\quad\quad+\frac{(t(n))^{(k)}\|(\mathbf{R}^{(k)})^H\mathbf{A}_{:,n}+(\mathbf{X}_{n,:}^{(k)})^H\|_2^2}{\lsccc{\lscca{((\tau^{(k)})^2+1)^2(\tau^{(k)})^4}(1+(t(n))^{(k)})^2}}\Bigg)
\end{align}
\hrulefill
\end{figure*}
\subsubsection{AMP-based Decoder}
\begin{algorithm}[tp]
    \caption{Generalization of AMP \cite{8323218}}
    \begin{algorithmic}[1]
        \STATE Set $\mathbf{X}^{(0)}=\mathbf{0}_{M\times N}$, $\mathbf{R}^{(0)}=\mathbf{Y}$ and $k=0$.
        \REPEAT
        \STATE For all $n\in\mathcal{N}$, compute $\mathbf{X}_{n,:}^{(k+1)}$ according to \eqref{computeX}, \lscb{as shown at the top of the next page,} where $\tau^{(k)} = \sqrt{\frac{1}{ML}}\||\mathbf{R}^{(k)}\||_F$.
        \STATE Compute $\mathbf{R}^{(k+1)}$ according to \eqref{computeR}, \lscb{as shown at the top of the next page,} where $(t(n))^{(k)}=\frac{1-\epsilon(n)}{\epsilon(n)}\lsccc{\left(\frac{(\tau^{(k)})^2+1}{(\tau^{(k)})^2}\right)}^M\exp\left(\frac{-\|(\mathbf{R}^{(k)})^H\mathbf{A}_{:,n}+(\mathbf{X}_{n,:}^{(k)})^H\|_2^2}{(\tau^{(k)})^2((\tau^{(k)})^2+1)}\right)$.
        \STATE Set $k=k+1$.
        \UNTIL $k=k_{max}$ or $\mathbf{X}^{(k)}$ satisfies some stopping criterion.
    \end{algorithmic}
\end{algorithm}
In this part, we \lscb{present} the AMP-based decoder. First, we introduce the state-of-the-art AMP algorithm with the MMSE denoiser \cite{8323218}, which achieves \lscr{excellent recovery accuracy and short computation time for large $N$, $M$ and $L$}. Note that the AMP algorithm in \cite{8323218} is for the scenario where $\alpha(n)$, $n \in \mathcal{N}$ are \lscb{independently and identically distributed (i.i.d.) } Bernoulli random variables, each with probability $\epsilon\in(0,1)$ being 1,  \lscr{non-zero elements of $\mathbf{X}$ are i.i.d. complex Gaussian random variables with zero mean and unit variance}, and $\epsilon$ is assumed to be known. We slightly extend it to a more general scenario where $\alpha(n)$, $n\in\mathcal{N}$ are independent Bernoulli random variables and the probability of $\alpha(n)$ being 1 is $\epsilon(n) \in (0,1)$. In particular, we generalize the AMP algorithm in \cite{8323218} by replacing $\epsilon$ in the update for the estimate \lscb{of} $\mathbf{X}_{n,:}$ with $\epsilon(n)$. The details of the generalized version of the AMP algorithm \cite{8323218} are summarized in Algorithm~2. In Algorithm~2, $\mathbf{X}^{(k)}$ represents the estimation of $\mathbf{X}$ at the $k$-th iteration and $\mathbf{R}^{(k)}$ represents the corresponding residual. \lscr{Algorithm~2 has \wq{$N$} parameters, \lscb{i.e., } $\epsilon(n)$, $n \in \mathcal{N}$, which will influence the recovery
accuracy. The computational complexity of Algorithm~2 is $\mathcal{O}(LNM)$.}


Next, we introduce the approximation part of the AMP-based decoder, which is to approximate the estimate obtained by Algorithm~2. Note that the operations for complex numbers in Step 4 of Algorithm~2 can be easily expressed in terms of \lscb{$\Re(\mathbf{R}^{(k)})$}, $\Re(\mathbf{A}_{:,n})$, \lscb{$\Re(\mathbf{X}_{n,:}^{(k)})$}, \lscb{$\Im(\mathbf{R}^{(k)})$}, $\Im(\mathbf{A}_{:,n})$ and \lscb{$\Re(\mathbf{X}_{n,:}^{(k)})$}. Thus, the operations for complex numbers in  Algorithm~2 are readily implemented with operations for real numbers using a standard neural network. As illustrated in Fig.~\ref{networkmodel-sig} (b), each \lscr{building block} of the approximation part of the AMP-based decoder realizes one iteration of Algorithm~2. We can directly input $\mathbf{Y}$ and $\mathbf{A}$ to the approximation part and obtain $\mathbf{X}^{(U)}$, which is the estimate of $\mathbf{X}$ at the $U$-th iteration of Algorithm~2, as the output. \lscr{Note that $\epsilon(n)$, $n \in \mathcal{N}$  are tunable parameters.}

\subsection{Training Process for Jointly Sparse Signal Recovery}
We introduce the training procedure for our proposed model-driven approach for jointly sparse signal recovery. Consider $I$ training samples $\mathbf{X}^{[i]},i=1,\cdots,I$. Let $\hat{\mathbf{X}}^{[i]}$ denote the output of the auto-encoder corresponding to input $\mathbf{X}^{[i]}$. To measure the difference between $\hat{\mathbf{X}}^{[i]}$ and $\mathbf{X}^{[i]}$, we adopt the mean squared error (MSE) loss function:
\vspace*{-0.2cm}
\begin{align*}&{\rm Loss}\left((\mathbf{X}^{[i]})_{i=1,\cdots,I},(\hat{\mathbf{X}}^{[i]})_{i=1,\cdots,I}\right)\\
\nonumber
&=\frac{1}{MNI}\sum_{i=1}^{I}\|\mathbf{X}^{[i]}-\hat{\mathbf{X}}^{[i]}\|_F^2
\end{align*}
We train the auto-encoder using the \lscb{adaptive moment estimation (ADAM)} algorithm \cite{kingma2015adam}.\footnote{\lscr{ADAM is an algorithm for first-order gradient-based optimization of stochastic objective functions. It is well suited for problems that are large in terms of data and/or parameters due to its high computation efficiency and little memory requirement.}}

\begin{figure*}[t]
\begin{center}
 \subfigure[\scriptsize{Proposed approach with covariance-based decoder ($U=0$).}]
 {\resizebox{17.4cm}{!}{\includegraphics{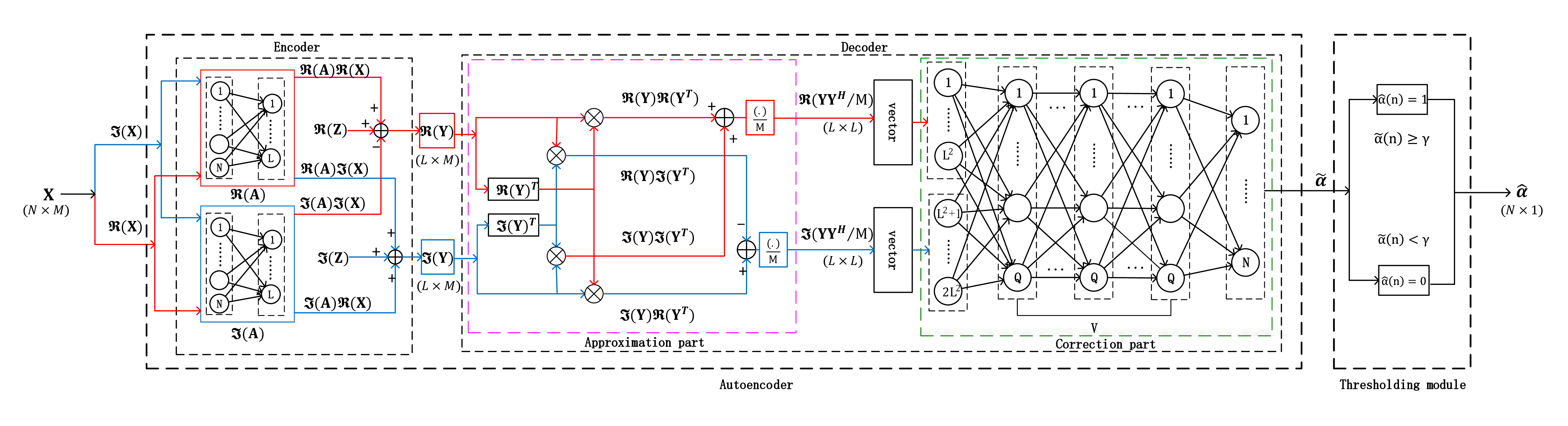}}}
 \subfigure[\scriptsize{Proposed approach with MAP-based decoder. \lscr{The operations for generating $\Re(\mathbf{YY}^H/M)$ and $\Im(\mathbf{YY}^H/M)$ are the same as those in Fig.~\ref{networkmodel-sup}(a) and are omitted here. The \lsccc{approximation} part implements $U$ iterations of Algorithm~3.}}]
 {\resizebox{17.4cm}{!}{\includegraphics{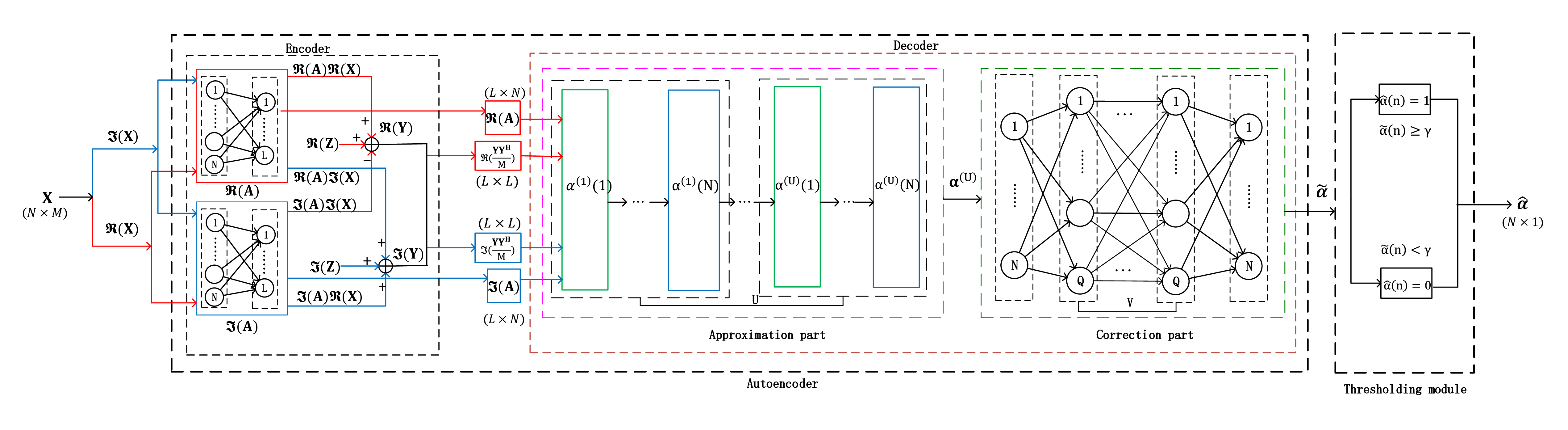}}}
  \subfigure[\scriptsize{Proposed approach with AMP-based decoder. The auto-encoder is the same as that in Fig.~\ref{networkmodel-sig} (b). \lscr{The \lsccc{approximation} part implements $U$ iterations of Algorithm~2.}}]
 {\resizebox{17.4cm}{!}{\includegraphics{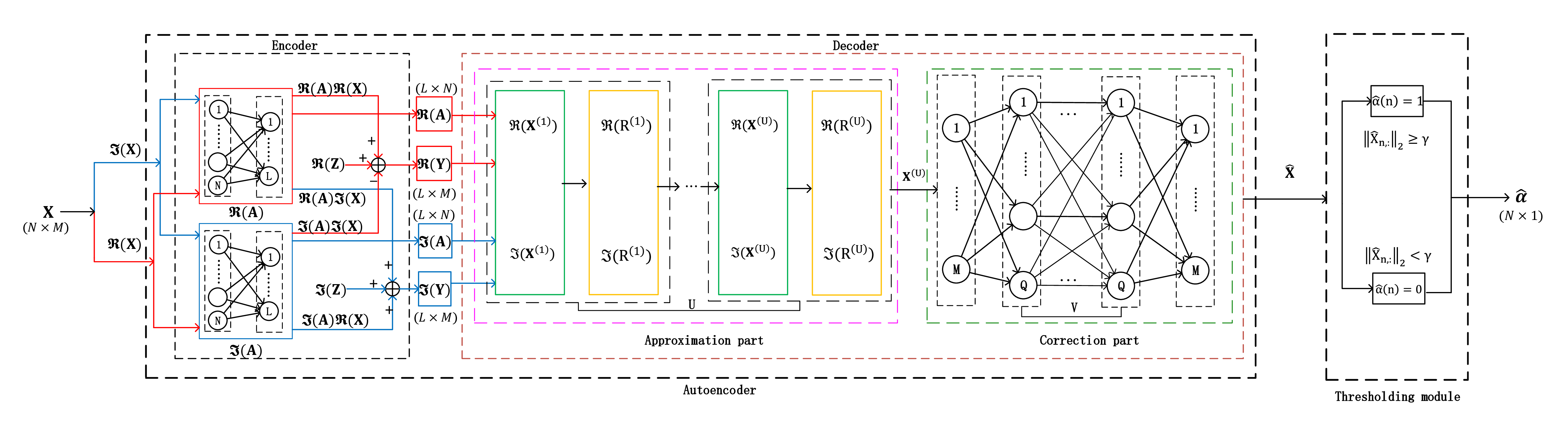}}}
  \end{center}
  \vspace*{-0.3cm}
     \caption{Proposed model-driven approach for jointly sparse support recovery.}
     \vspace*{-0.3cm}
\label{networkmodel-sup}
\end{figure*}
\section{A Model-driven Approach for Jointly Sparse Support Recovery}
In this section, we propose a model-driven approach, based on the standard auto-encoder structure for real numbers in deep learning, to jointly  design the common measurement matrix and the jointly sparse support recovery method for sparse complex signals. As shown in Fig.~\ref{networkmodel-sup}, the model-driven approach consists of an encoder that mimics the noisy linear measurement process, a model-driven decoder that mimics the jointly sparse support recovery process of a particular method, and a thresholding module that generates binary approximations to obtain the common support. The encoder is the same as that in Section~\ref{signal}. Thus, in \lscr{the following,} we focus on elaborating the model-driven decoder for jointly sparse support recovery and the thresholding module. \lscb{We} propose two instances for the model-driven decoder for jointly sparse support recovery, namely the covariance-based decoder and the MAP-based decoder. As shown in Fig.~\ref{networkmodel-sup}(c), the AMP-based decoder in Section~\ref{signal} can also be viewed as an instance for the model-driven decoder for jointly sparse support recovery when followed by the thresholding module.\footnote{When used for jointly sparse support recovery, the training process for the encoder and the AMP-based decoder remains the same as that in Section~\ref{approachsig}. C for jointly sparse signal recovery. The only difference is that a thresholding module is used for producing the common support.} Analogously, \lscb{an encoder that is jointly trained with a model-driven decoder provides} the respective common measurement \lscb{matrix}. After training, we obtain the design of the common measurement matrix via extracting the weights of the encoder, and directly use the model-driven decoder \lsccc{and the hard thresholding module with the optimized threshold} for jointly sparse support recovery together with the obtained common measurement matrix. Note that a common measurement matrix designed for one jointly sparse support recovery method may not be effective for another, and a common measurement matrix designed for jointly sparse signal recovery may not be effective for jointly sparse support recovery.

\subsection{Model-driven Decoder \lscb{ for Jointly Sparse Support Recovery}}
\label{supco}
\lscr{In this subsection}, we propose a model-driven decoder \lscr{that} consists of two parts, namely the approximation part and the correction part. As illustrated in Fig.~\ref{networkmodel-sup}, the approximation part which is used to approximate a particular method for jointly sparse support recovery has $U$ ($\geq 0$) \lscr{building blocks} and generates \lscr{$\boldsymbol{\alpha}^{(U)}$} as the output. \lscr{Generally, the gap between $\boldsymbol{\alpha}^{(U)}$ and the actual common support $\boldsymbol{\alpha}$ reduces with $U$.} The correction part \lscr{which  aims to reduce the difference between the obtained approximate estimation $\boldsymbol{\alpha}^{(U)}$ and the actual common support $\boldsymbol{\alpha}$} consists of $V$ ($\geq 1$) fully connected layers.  \lscc{Specially, in each of the first $V-1$ correction layers, ReLU is chosen as the activation function. The} last correction layer has $N$ neurons, and uses the sigmoid function as the activation function for producing output $\tilde{\boldsymbol{\alpha}}$. \lscr{Similarly, $V$ influences the training error and generalization error. } As in the model-driven decoder for jointly sparse signal recovery, $U$ and $V$ are jointly chosen \lscr{so that the proposed model-driven decoder can achieve higher recovery accuracy \lscr{(i.e., a \wq{smaller} gap between $\tilde{\boldsymbol{\alpha}}$ and $\boldsymbol{\alpha}$)} and shorter computation time than the underlying method.} The same rules for the choices of $U$ and $V$ also apply here.

In the following, we propose two instances for the model-driven decoder for jointly sparse support recovery, \lscr{based} on \lscb{the} covariance-based estimation method \lscb{in \cite{6994860}} and \lscb{the} coordinate descend algorithm for the MAP estimation \lscb{in \cite{DD2019}}, respectively. They are referred as the covariance-based decoder and the MAP-based decoder, as illustrated in Fig.~\ref{networkmodel-sup} (a) and Fig.~\ref{networkmodel-sup} (b), respectively.


\subsubsection{Covariance-based Decoder}
In this part, we \lscb{present} the covariance-based decoder, which is motivated by the covariance-based estimation method for jointly sparse support recovery \cite{6994860}, but successfully remedies its defect.

First, we introduce the covariance-based method in \cite{6994860}. Specially, by (1), we have:
\begin{align}
\mathbf{YY}^H/M = (\mathbf{A}\mathbf{XX}^H\mathbf{A}^H+\mathbf{A}\mathbf{X}\mathbf{Z}^H+\mathbf{Z}\mathbf{X}^H\mathbf{A}^H+\mathbf{ZZ}^H)/M\nonumber
\end{align}
which can be equivalently expressed as:
\begin{align}
\label{cov}
{\rm vec}(\mathbf{YY}^H/M) = \mathbf{A}^*\odot\mathbf{A}\mathbf{r}+{\rm vec}(\mathbf{E_1})+{\rm vec}(\mathbf{E_2})
\end{align}
\lsccc{Here,} ${\rm vec}(\mathbf{YY}^H/M) \in\mathbb{R}^{L^2}$, $\mathbf{r}\in\mathbb{R}^N$, $\mathbf{E}_1\in \mathbb{C}^{L\times L}$ and $\mathbf{E}_2\triangleq(\mathbf{A}\mathbf{X}\mathbf{Z}^H+\mathbf{Z}\mathbf{X}^H\mathbf{A}^H+\mathbf{ZZ}^H)/M$, where
\begin{align*}
	r(n) &= \|\mathbf{X}_{n,:}\|^2_2/M\nonumber\\
	E_1(k,l)&\triangleq\sum_{i,j\in \mathcal{N},i\neq j}A(k,i)A^*(l,j)\sum_{m\in \mathcal{M}}x_m(i)x_m^*(j)\nonumber
\end{align*}
For any given $\mathbf{A}$, if the non-zero elements of $\mathbf{X}$ are i.i.d. random variables with zero mean, then $\mathbf{y}_m,m\in \mathcal{M}$ are i.i.d. random vectors and $\mathbf{YY}^H/M\to{\rm Cov}(\mathbf{y}_m)$, $\mathbf{E}_1\to\mathbf{0}_{L\times L}$ and $\mathbf{E}_2\to\sigma^2\mathbf{I}_{L\times L}$ as $M\to\infty$. Thus, when the non-zero elements of $\mathbf{X}$ are i.i.d. random variables with zero mean and $M\to\infty$, \lscb{\eqref{cov}} provides \lsccc{noiseless linear} measurements of $\mathbf{r}$ with ${\rm supp}(\mathbf{r})={\rm supp}(\mathbf{x}_m),m\in \mathcal{M}$, and hence can be used for jointly sparse support recovery for $\mathbf{X}$. Based on \lscb{\eqref{cov}}, the authors in \cite{6994860} use LASSO for SMV models to achieve jointly sparse support recovery for MMV models in the case of very large $M$. In Section~\ref{sim}, we shall see that the covariance-based method via LASSO in \cite{6994860} does not work well for small $M$ (as $\mathbf{E}_1$ is nonnegligible and $\mathbf{E}_2$ is non-diagonal at small $M$), while the proposed approach with the covariance-based decoder can perfectly resolve this issue.

Next, we introduce the covariance-based decoder. \lsccc{As}
\vspace*{-0.2cm}
\begin{align}
\Re(\mathbf{YY}^H)/M=(\Re(\mathbf{Y})\Re(\mathbf{Y}^T)+\Im(\mathbf{Y})\Im(\mathbf{Y}^T))/M\\
\Im(\mathbf{YY}^H)/M=(\Im(\mathbf{Y})\Re(\mathbf{Y}^T)-\Re(\mathbf{Y})\Im(\mathbf{Y}^T))/M
\end{align}
we can obtain \lscr{$\Re(\mathbf{YY}^H)/M \in \mathbb{R}^{L\times L}$} and \lscr{$\Im(\mathbf{YY}^H)/M\in \mathbb{R}^{L\times L}$} based on the output of the encoder $\Im(\mathbf{Y})$ and $\Re(\mathbf{Y})$\lscb{,} and use them as the input of the covariance-based decoder. Note that the input has already captured the concept of the covariance-based method in \cite{6994860}, \lscc{and the recovery accuracy increases with $M$ (which can be seen from the \lscv{above} asymptotic analysis)}. Thus, we no longer need the approximation part, i.e., we set $U=0$\lscr{,} and rely \wq{only} on the correction part to perform the estimation based on the empirical covariance matrix $\mathbf{YY}^H/M$. The \lscc{first correction layer} has $2L^2$ neurons with ${\rm vec}(\Re(\mathbf{YY}^H)/M)$ as the input of the first $L^2$ neurons and ${\rm vec}(\Im(\mathbf{YY}^H)/M)$ as the input of the last $L^2$ neurons.


\subsubsection{MAP-based Decoder}
\label{map-based}

In this part, we \lscb{present} the MAP-based decoder. \lscr{First, we introduce} a coordinate descent algorithm for the MAP estimation which incorporates prior marginal distributions in sparse patterns \cite{DD2019}. \lscb{In \cite{DD2019}}, $\alpha(n)$, $n\in\mathcal{N}$ are independent Bernoulli random variables and the probability of $\alpha(n)$ being 1 is $\lscn{\epsilon(n)} \in (0,1)$; \lscb{$\epsilon(n)$, $n\in\mathcal{N}$ are known}; \lscv{and} non-zero elements of $\mathbf{X}$ are i.i.d. complex Gaussian random variables \lscr{with zero mean and unit variance}. Then, the prior distribution of $\boldsymbol{\alpha}$ is given by $p(\boldsymbol{\alpha})  \triangleq \prod_{n=1}^{N}\lscn{\epsilon(n)}^{\alpha(n)}(1-\lscn{\epsilon(n)})^{1-\alpha(n)}$. By multiplying $p(\boldsymbol{\alpha})$ with the conditional density of $\mathbf{Y}$ given $\boldsymbol{\alpha}$ \cite{DD2019}, we can obtain the conditional density of $\boldsymbol{\alpha}$ given $\mathbf{Y}$, and express the negative log function of it as:
\begin{align}
\label{map1}
&f_{MAP}(\boldsymbol{\alpha}) \nonumber\\
&\quad \quad = \log|\mathbf{A}\boldsymbol{\Gamma}\mathbf{A}+\sigma^2\mathbf{I}_{L\times L}|+{\rm tr}((\mathbf{A}\boldsymbol{\Gamma}\mathbf{A}+\sigma^2\mathbf{I}_{L\times L})^{-1}\hat{\boldsymbol{\Sigma}}) \nonumber\\
&\quad \quad \quad -\frac{1}{M}\sum_{n\in\mathcal{N}}(\alpha(n)\log \lscn{\epsilon(n)}+(1-\alpha(n))\log(1-\lscn{\epsilon(n)}))
\end{align}
where $\boldsymbol{\Gamma}$ is a diagonal matrix with diagonal elements $\Gamma(n,n)=\alpha(n),n\in \mathcal{N}$, \lscb{$\hat{\boldsymbol{\Sigma}} = \mathbf{YY}^H/M$}, and an additive constant is omitted in \eqref{map1} for simplicity (without loss of optimality). Note that the first \lscb{two terms} in $f_{MAP}(\boldsymbol{\alpha})$ corresponds to the conditional density of $\mathbf{Y}$, and the \lscb{last} term in $f_{MAP}(\boldsymbol{\alpha})$ corresponds to the prior distribution of $\boldsymbol{\alpha}$ \cite{DD2019}. Thus, we formulate the MAP estimation problem for the common support recovery as follows:
\begin{align}
\label{map}
\min_{\boldsymbol{\alpha}\succeq 0} f_{MAP}(\boldsymbol{\alpha})
\end{align}

\begin{figure*}[htp]
\begin{align}
\label{dd}
&(d(n))^{(k)}= {\rm max}\lsccc{\left\{\frac{\left(\mathbf{A}_{:,n}^H(\boldsymbol{\Sigma}^{-1})^{(k)}\mathbf{A}_{:,n}\right)^2-
\frac{2}{M}\log\left(\frac{\epsilon(n)}{1-\epsilon(n)}\right)\mathbf{A}_{:,n}^H(\boldsymbol{\Sigma}^{-1})^{(k)}\mathbf{A}_{:,n}
-\sqrt{(\Delta(n))^{(k)}}}{\frac{2}{M}\log\left(\frac{\epsilon(n)}{1-\epsilon(n)}\right)(\mathbf{A}_{:,n}^H(\boldsymbol{\Sigma}^{-1})^{(k)}\mathbf{A}_{:,n})^2}, -(\alpha(n))^{(k)}\right\}}\\
\label{de}
&\lsccc{(\Delta(n))^{(k)}}=\left(\left(\mathbf{A}_{:,n}^H(\boldsymbol{\Sigma}^{-1})^{(k)}\mathbf{A}_{:,n}\right)^2+\frac{2}{M}\log\left(\frac{\epsilon(n)}{1-\epsilon(n)}\right)\mathbf{A}_{:,n}^H(\boldsymbol{\Sigma}^{-1})^{(k)}\mathbf{A}_{:,n}\right)^2\nonumber\\
&+\frac{4}{M}\log\left(\frac{\epsilon(n)}{1-\epsilon(n)}\right)\left(\mathbf{A}_{:,n}^H(\boldsymbol{\Sigma}^{-1})^{(k)}\mathbf{A}_{:,n}\right)^2
\left(\mathbf{A}_{:,n}^H(\boldsymbol{\Sigma}^{-1})^{(k)}\mathbf{A}_{:,n}-\mathbf{A}_{:,n}^H(\boldsymbol{\Sigma}^{-1})^{(k)}\hat{\boldsymbol{\Sigma}}(\boldsymbol{\Sigma}^{-1})^{(k)}\mathbf{A}_{:,n}-\frac{1}{M}\log\left(\frac{\epsilon(n)}{1-\epsilon(n)}\right)\right)
\end{align}
\hrulefill
\end{figure*}
Note that this problem is non-convex. By extending the coordinate descent algorithm for the MAP estimation in \cite{DD2019}, we obtain a coordinate descent algorithm for the MAP estimation in \lscb{the problem in} \eqref{map}, which is summarized in Algorithm~3. In Algorithm~3, $\boldsymbol{\alpha}^{(k)}$ represents the estimate at the $k$-th iteration. Following the proof of convergence in \cite{DD2019}, we can show that $\boldsymbol{\alpha}^{(k)}$ converges to a \lscr{locally} optimal solution of \lscr{the problem in \eqref{map}, as $k \to\infty$. The computational complexity of Algorithm~3 is $\mathcal{O}(NL^2)$.  Algorithm~3 has parameters $\lscn{\epsilon(n)}$, $n\in\mathcal{N}$, which will influence the recovery accuracy.}

\begin{algorithm}[tp]
\caption{MAP}
\begin{algorithmic}[1]
\STATE Set $(\boldsymbol{\Sigma}^{-1})^{(0)}=\frac{1}{\sigma^2}\mathbf{I}_{L\times L}, \boldsymbol{\alpha}^{(0)}=\mathbf{0}_{N\times 1}$ and $k=0$.
\REPEAT
\FOR{$n=1,\dots,N$}
\STATE Calculate \lscb{$(d(n))^{(k)}$ according to \eqref{dd}, where \lsccc{$(\Delta(n))^{(k)}$} is given by \eqref{de}}, \lscb{as shown at the top of the next page}.
\STATE Update $(\alpha(n))^{(k+1)} = (\alpha(n))^{(k)}+(d(n))^{(k)}$.
\STATE Update $(\boldsymbol{\Sigma}^{-1})^{(k)} = (\boldsymbol{\Sigma}^{-1})^{(k)}- \frac{(d(n))^{(k)}(\boldsymbol{\Sigma}^{-1})^{(k)}\mathbf{A}_{:,n}\mathbf{A}_{:,n}^H(\boldsymbol{\Sigma}^{-1})^{(k)}}{1+(d(n))^{(k)}\mathbf{A}_{:,n}(\boldsymbol{\Sigma}^{-1})^{(k)}\mathbf{A}_{:,n}^H}$.
\ENDFOR
\STATE Update $(\boldsymbol{\Sigma}^{-1})^{(k+1)} = (\boldsymbol{\Sigma}^{-1})^{(k)}$.
\STATE Set $k=k+1$.
\UNTIL $k=k_{max}$ or $\boldsymbol{\alpha}^{(k)}$ satisfies some stopping criterion.
\end{algorithmic}
\end{algorithm}

Next, we introduce \lscr{the MAP-based decoder. As in Section~\ref{map-based}, we can obtain $\Re(\mathbf{YY}^H/M)$ and $\Im(\mathbf{YY}^H/M)$. $\Re(\mathbf{YY}^H/M)$, $\Im(\mathbf{YY}^H/M)$, $\Re(\mathbf{A})$ and $\Im(\mathbf{A})$ are \wq{the} input of the MAP-based decoder.} The approximation part of the MAP-based decoder is to approximate the estimate obtained by Algorithm~3. Note that \lscr{the operations for complex numbers in Steps 4, 6 of Algorithm~3 can be easily expressed in terms of $\Re(\hat{\boldsymbol{\Sigma}})$, $\Re(\mathbf{A})$, $\Im(\hat{\boldsymbol{\Sigma}})$, $\Im(\mathbf{A})$ and $(\boldsymbol{\Sigma}^{-1})^{(k)}$}. Thus, the operations for complex numbers in Algorithm~3 are readily implemented with operations for real numbers \lscr{using a neural network.} As illustrated in Fig.~\ref{networkmodel-sup} (b), \lscr{each building block of the approximation part of the MAP-based decoder realizes one iteration of Algorithm~3}. We can directly obtain $\boldsymbol{\alpha}^{(U)}$, which is the estimation of $\boldsymbol{\alpha}$ obtained at the $U$-th iteration of Algorithm~3, as the output \lscr{of the approximation part. Note that $\lscn{\epsilon(n)}$, $n\in\mathcal{N}$ are tunable parameters. The \lscc{first correction layer} has $N$ neurons with $\boldsymbol{\alpha}^{(U)}$ as the input.}
\begin{figure*}[htp]
\begin{align}
\label{loss}
{\rm Loss}\left((\boldsymbol{\alpha}^{[i]})_{i=1,\cdots,I},(\tilde{\boldsymbol{\alpha}}^{[i]})_{i=1,\cdots,I}\right)=\lsccc{-\frac{1}{NI}\sum_{i=1}^{I}\sum_{n\in\mathcal{N}}\Bigg((\alpha(n))^{[i]}
 \log((\tilde{\alpha}(n))^{[i]}
+\left(1-(\alpha(n))^{[i]}\right)\log\left(1-(\tilde{\alpha}(n))^{[i]}\right)\Bigg)}
\end{align}
\hrulefill
\end{figure*}
\subsection{Training Process \lscb{for Jointly Sparse Support Recovery}}
We introduce the training procedure for the model-driven approach for jointly sparse support recovery. Consider $I$ training samples $(\mathbf{X}^{[i]}, \boldsymbol{\alpha}^{[i]}),i=1,\cdots,I$. Let $\tilde{\boldsymbol{\alpha}}^{[i]}$ denote the output of the \lsccc{auto-encoder} corresponding to input $\mathbf{X}^{[i]}$. To measure the distance between $\boldsymbol{\alpha}^{[i]}$ and $\tilde{\boldsymbol{\alpha}}^{[i]}$, we use the binary cross-entropy loss function \lscb{given by \eqref{loss}, as shown at the top
of the next page}.
We train the auto-encoder using the ADAM algorithm.

\subsection{Thresholding Module \lscb{for Jointly Sparse Support Recovery}}
Note that even after training, there is no guarantee that the proposed auto-encoder can produce an output $\tilde{\boldsymbol{\alpha}}$ that is in $\{0,1\}^N$. Thus, we require a hard thresholding module parameterized by threshold $\gamma$ to convert the output of the auto-encoder to the final output of the proposed approach \lscb{$\hat{\boldsymbol{\alpha}} \in \{0,1\}^N$}. We adopt the hard thresholding module proposed in our previous work \cite{8861085}, and present the details here for completeness. Given the input of the thresholding module $\tilde{\boldsymbol{\alpha}}\in \mathbb{R}^N$, $\hat{\alpha}(n) = \mathbb{I}[\tilde{\alpha}(n)\geq \gamma], n\in \mathcal{N}$. \lscb{Consider} \lscb{$I$} training samples \lscb{$(\mathbf{X}^{[i]},\boldsymbol{\alpha}^{[i]}),i=1,\cdots,I$}. \lscb{Let} $P_E(\gamma) \triangleq \lscb{\frac{1}{NI} \sum_{i=1}^I\|\boldsymbol{\alpha}^{[i]}-\hat{\boldsymbol{\alpha}}^{[i]} \|_1}$ denote the error rate for the given threshold $\gamma$. We choose the optimal threshold $\gamma^*=\mathop{\arg\min}_{\gamma}P_E(\gamma)$ as the threshold for the hard thresholding module.

\section{Numerical results}
\label{simulation}
In this section, we conduct numerical experiments on device activity detection and channel estimation in MIMO-based grant free random access, respectively, \lscb{which are} illustrated in Section~\ref{app}.
We generate the jointly sparse signal $\mathbf{X} \in \mathbb{C}^{N\times M}$ with $x_m(n) = \alpha(n)h_{m}(n), m\in \mathcal M, n\in \mathcal N$ according to \nwqh{$h_m(n) \sim \mathcal{CN}(0, 1), m \in \mathcal{M},n\in \mathcal N$.}
To demonstrate the ability of the proposed approaches for effectively utilizing the properties of sparse patterns, we consider three types of distributions of $\boldsymbol{\alpha}$, i.e., the independent case, the correlated case with a single active group, and the correlated case with i.i.d group activity. In the independent case, $N$ devices are divided into two groups, i.e., $\mathcal{N}_1$ and $\mathcal{N}_2$, of the same size with the devices in $\mathcal{N}_i$ accessing the channel in an i.i.d. manner with access probability $\Pr[\alpha(n)=1]=p_i$, $n \in \mathcal{N}_i$, \lscb{where $i=1,2$;} and let $p = \frac{p_1+p_2}{2}$ denote the average access probability. In the correlated case with a single active group, $N$ devices are divided into $G$ groups of the same size $N/G$; the active states of the devices within each group are the same; only one of the $G$ groups is selected to be active uniformly at random; \lscb{and} $1/G$ can be viewed as the access probability $p$. In the correlated case with i.i.d. group activity, $N$ devices are divided into $G$ groups of the same size $N/G$; the active states of the devices within each group are the same; and all groups access the channel in an i.i.d. manner with group access probability $p$. Note that the two correlated cases are particularly constructed to test performance for channel estimation and device activity detection, respectively.
\nwqc{We generate the additive white Gaussian noise  $\mathbf{Z}\in \mathbb{C}^{L\times M}$ according to  $\mathbf{Z}_{:,m} \sim\mathcal{CN}(\mathbf{0}_{L\times 1},\sigma^2\mathbf{I}_{L\times L}), m\in \mathcal{M}$, where $\sigma^2=0.1$.} We consider two choices for $N$, i.e., $N=100$ and $N=1000$.

We compare the proposed model-driven approaches with the state-of-the-art methods. Specifically, all the baseline schemes adopt the same set of pilot sequences for the $N$ devices whose entries are independently generated from $\mathcal{CN}(0,1)$.
For fair comparison, we require ${\|\mathbf{a}_n\|}_2=\sqrt{L}$ in training the architectures of the proposed approaches, as in \cite{8861085}. The proposed approaches adopt the common measurement matrices obtained from the encoders of the respective trained architectures as pilot sequences, and use the decoders of the respective trained architectures for recovery. The sizes of training samples and validation samples for training the architectures of the proposed approaches and the size of testing samples for evaluating the proposed approaches and the baseline schemes are selected as $9\times 10^3$, $1\times10^3$ and $1\times10^3$, respectively. The maximization epochs, learning rate and batch size in training the proposed architectures are set as $1\times10^5$, 0.0001 and 32, respectively. When the value of a loss function on the validation set does not change for five \lsccc{epochs}, the training process is stopped and the corresponding parameters of the auto-encoder are saved.\footnote{\nwqf{When $N=100$, the training times for NN, AMP-NN, GROUP LASSO-NN and MAP-NN are about 10 minutes, 10 minutes, 30 minutes and 30 minutes, respectively. When $N=1000$, the training time for AMP-NN is about 1 hour. Note that we use a computer node with one 16-core AMD Ryzen processor, one Nvidia RTX 2080 GPU and 32 GB of memory. \lscc{In grant-free random access for static devices, the \lscv{distributions} of sparse signals and noise do not vary much. Hence, the obtained designs are effective for a relatively long time duration, and the training cost is neglegible.} }}

\subsection{Channel Estimation}
\label{resultsig}
In this subsection, we present numerical results on channel estimation in MIMO-based grant-free random access. We evaluate the proposed model-driven approach with the GROUP LASSO-based decoder, i.e., GROUP LASSO-NN, and with the AMP-based decoder, i.e., AMP-NN, and three baseline schemes, namely GROUP LASSO \cite{qin2013efficient,LS2019}, AMP \cite{8323218} and ML-MMSE \cite{8437359}. GROUP LASSO conducts channel estimation using the \nwqa{block coordinate descent} algorithm \cite{qin2013efficient,LS2019}; AMP conducts channel estimation using the AMP algorithm with MMSE denoiser based on the known (average) access probability $p$ \cite{8323218}; ML-MMSE uses the coordinate descent algorithm for the ML estimation in \cite{8437359} to detect device activities, and then uses the standard MMSE method to estimate the channel states of the devices that have been detected to be active.
\nwqa{The underlying assumptions for the baseline schemes have been illustrated in Section~\ref{intro}}.
\nwqa{The computational complexities for \nwqa{GROUP LASSO}, AMP and ML-MMSE are $\mathcal{O}(LNM)$, $\mathcal{O}(LNM)$ and $\mathcal{O}(NL^2)$, respectively.}
Considering computational complexity, we evaluate GROUP LASSO-NN only at $N=100$, but \lscb{evaluate AMP-NN} at $N=100$ and $N=1000$.\footnote{\lscc{When $N$ is large, it takes a large number of iterations for Algorithm~1 to \lscv{converge to a \lscb{reasonable}} estimate. Thus, we do not adopt GROUP LASSO-NN at $N=1000$}.} We set \nwqd{$U=200$} and $V=3$ for the proposed GROUP LASSO-based decoder, and  set $U=50$ and $V=3$ for the proposed AMP-based decoder. Those choices are based on a large number of experiments and the tradeoff between recovery accuracy and computation time. \lscn{For \lscb{ease} of comparison, the total numbers of iterations for the block coordinate descent algorithm for GROUP LASSO, the AMP algorithm and the block coordiante descent algorithm for ML are chosen as $200$, $50$ and $55$, respectively.}\footnote{\lscb{Note that when increasing the numbers of iterations for GROUP LASSO and AMP, the corresponding computation times will increase, but the MSEs are still larger than those of GROUP LASSO-NN and AMP-NN, respectively. That is, GROUP LASSO-NN and AMP-NN can achieve smaller MSEs with shorter computation times than GROUP LASSO and AMP, respectively.}}  We numerically evaluate the MSE
$\frac{1}{NT}\sum_{t=1}^{T}\||\mathbf{X}^{[t]}-\hat{\mathbf{X}}^{[t]}\||_F^2$ and computation time (on the same server) of each scheme using the same set of $T=10^3$ testing samples.

\begin{figure}[tp]
\begin{center}
 \subfigure[\scriptsize{MSE versus $L/N$ at $p=0.1$, $M=4$, $p_1/p_2=3$.}]
 {\resizebox{4.3cm}{!}{\includegraphics{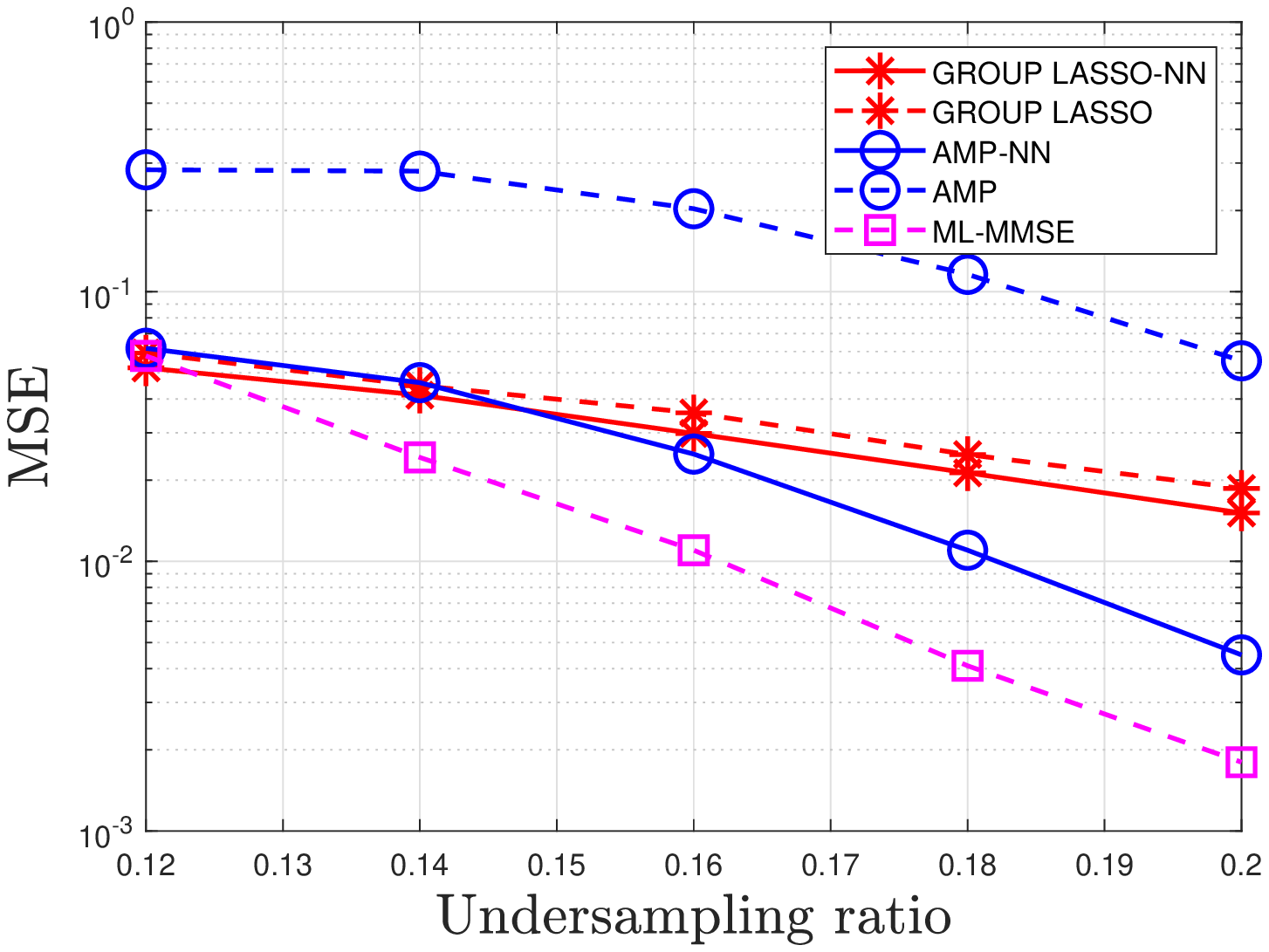}}}
 \subfigure[\scriptsize{MSE versus $p$ at $L/N=0.2$, $M=4$, $p_1/p_2=3$.}]
 {\resizebox{4.3cm}{!}{\includegraphics{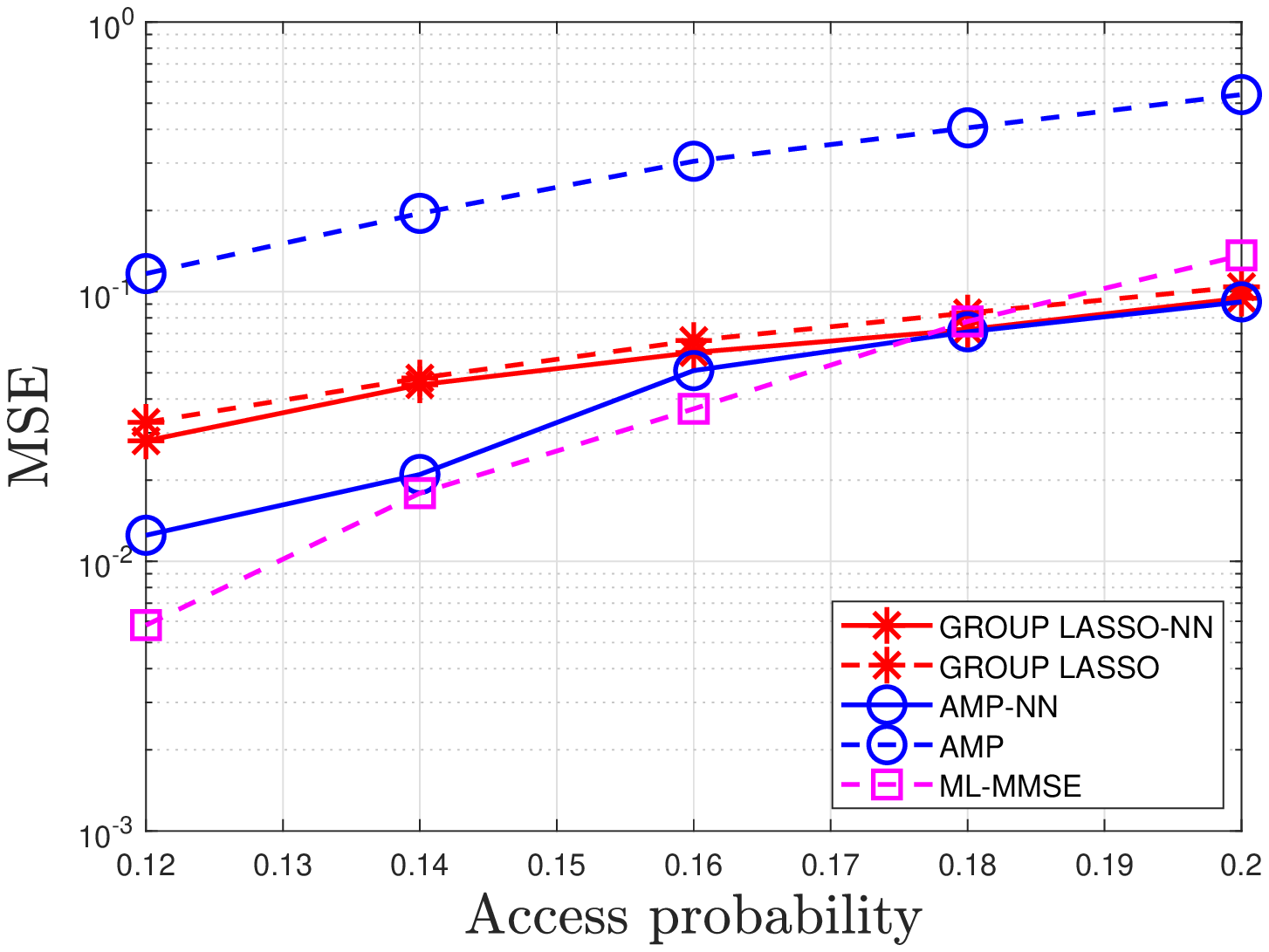}}}
 \subfigure[\scriptsize{MSE versus $M$ at $L/N=0.12$, $p=0.1$, $p_1/p_2=3$.}]
 {\resizebox{4.3cm}{!}{\includegraphics{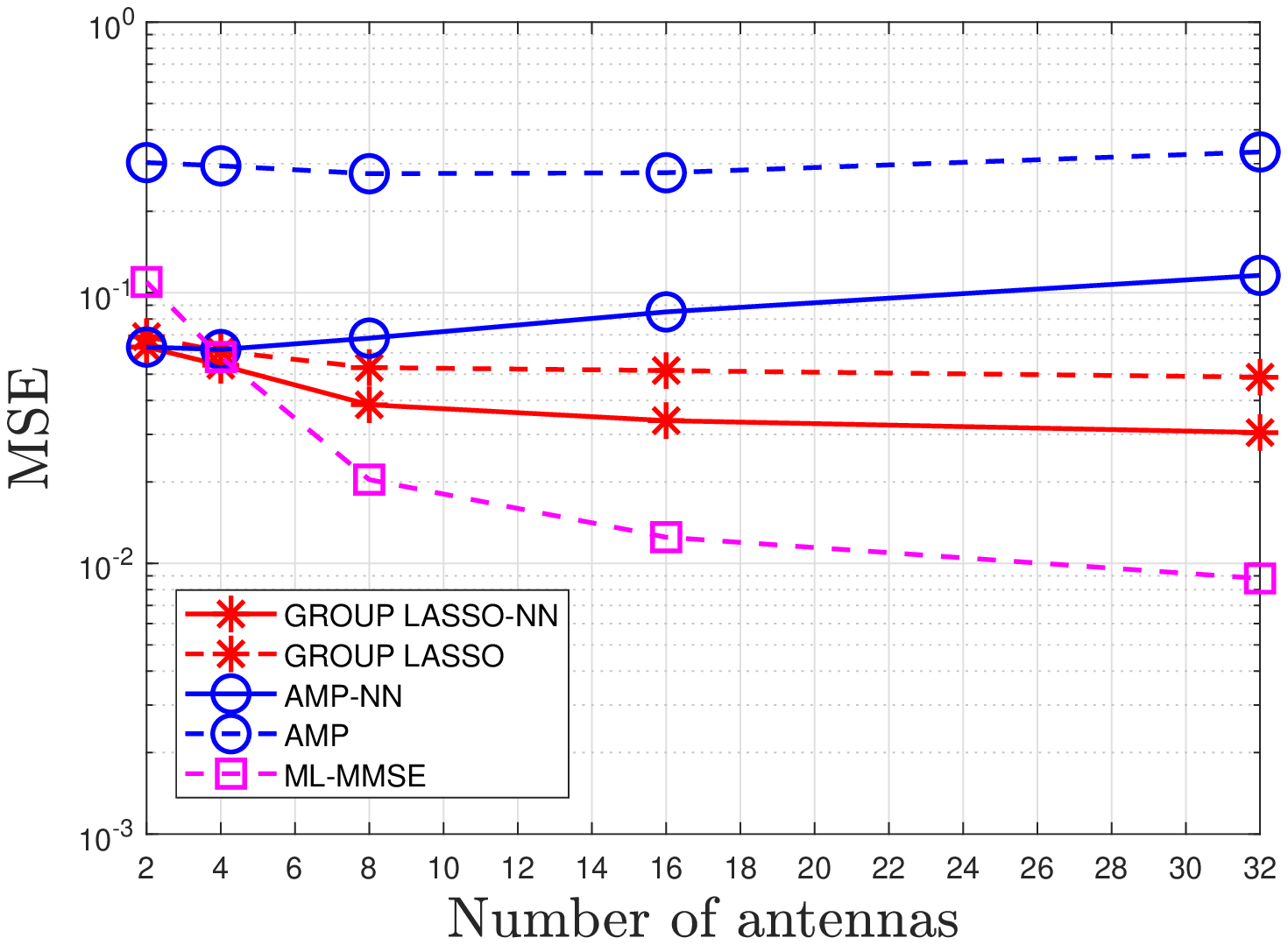}}}
 \subfigure[\scriptsize{MSE versus $p_1/p_2$ at $L/N=0.12$, $M=4$, $p=0.1$.}]
 {\resizebox{4.3cm}{!}{\includegraphics{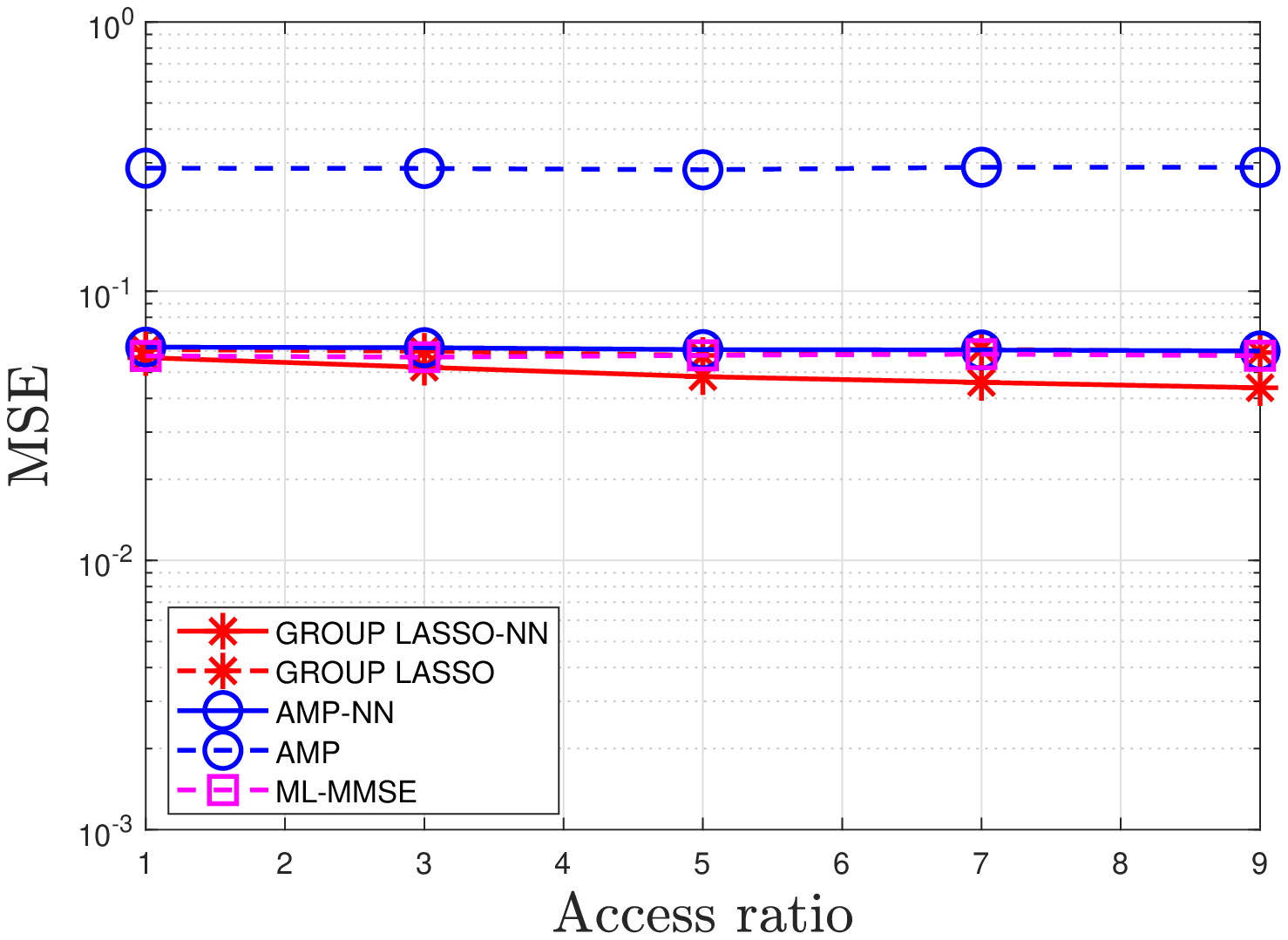}}}
 \subfigure[\scriptsize{$\epsilon_1/\epsilon_2$ versus $p_1/p_2$ at $L/N=0.12$, $M=4$, $p=0.1$.}]
 {\resizebox{4.3cm}{!}{\includegraphics{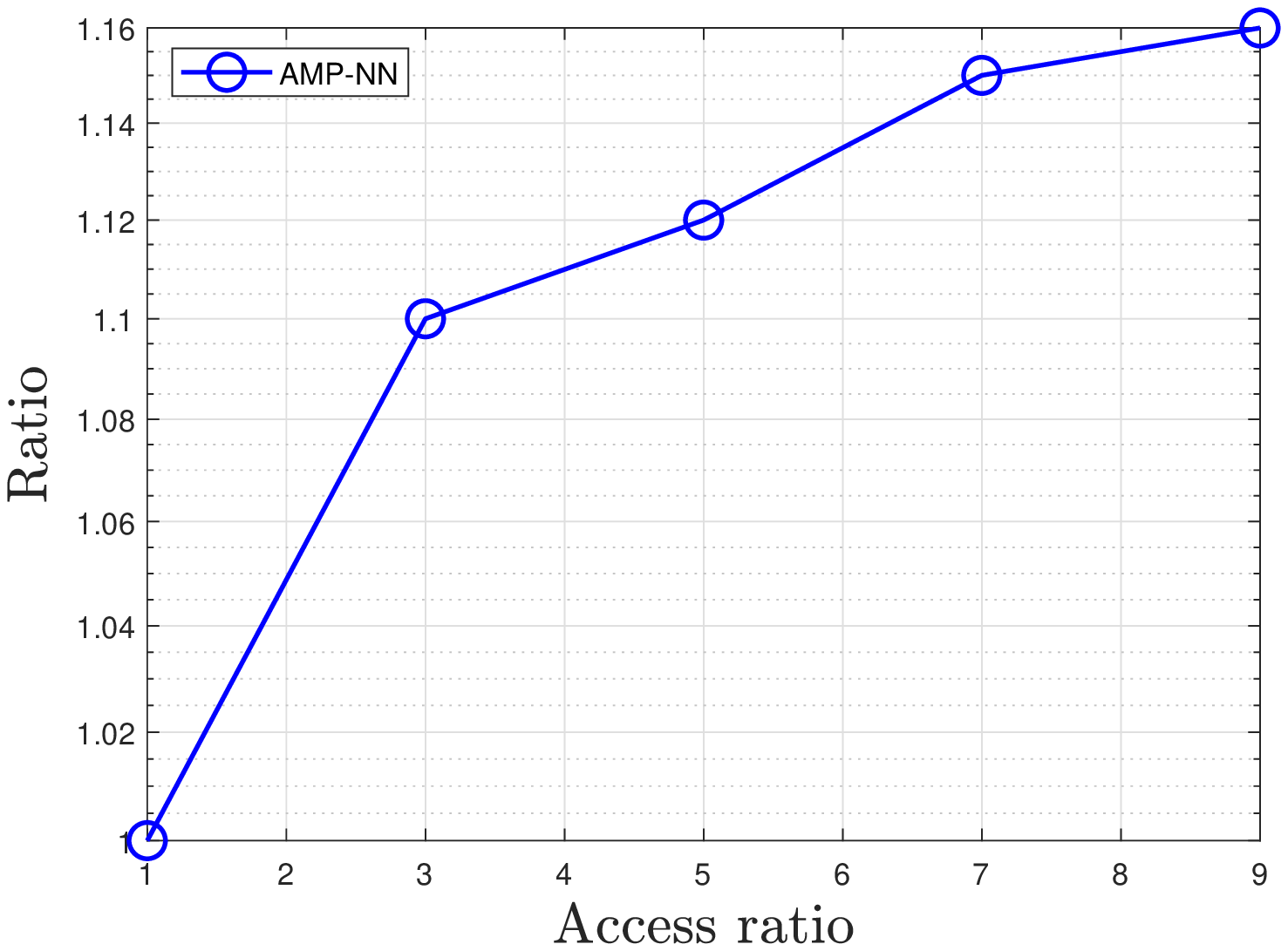}}}
  \end{center}
 \vspace*{-0.3cm}
   \caption{\small{\lsccc{Channel estimation in the} independent case at $N=100$. $\epsilon_1=\frac{2}{N}\sum_{n\in\mathcal{N}_1}\epsilon(n)$ and $\epsilon_2=\frac{2}{N}\sum_{n\in\mathcal{N}_2}\epsilon(n)$, where $\epsilon(n),n\in \mathbb{N}$, play the role of the access probabilities in Algorithm~2, and are extracted from the trained AMP-based decoder.}}
   \label{signal100}
\end{figure}

\begin{figure}[tp]
\begin{center}
 \subfigure[\scriptsize{MSE versus $L/N$ at $p=0.1$, $M=4$.}]
 {\resizebox{4.3cm}{!}{\includegraphics{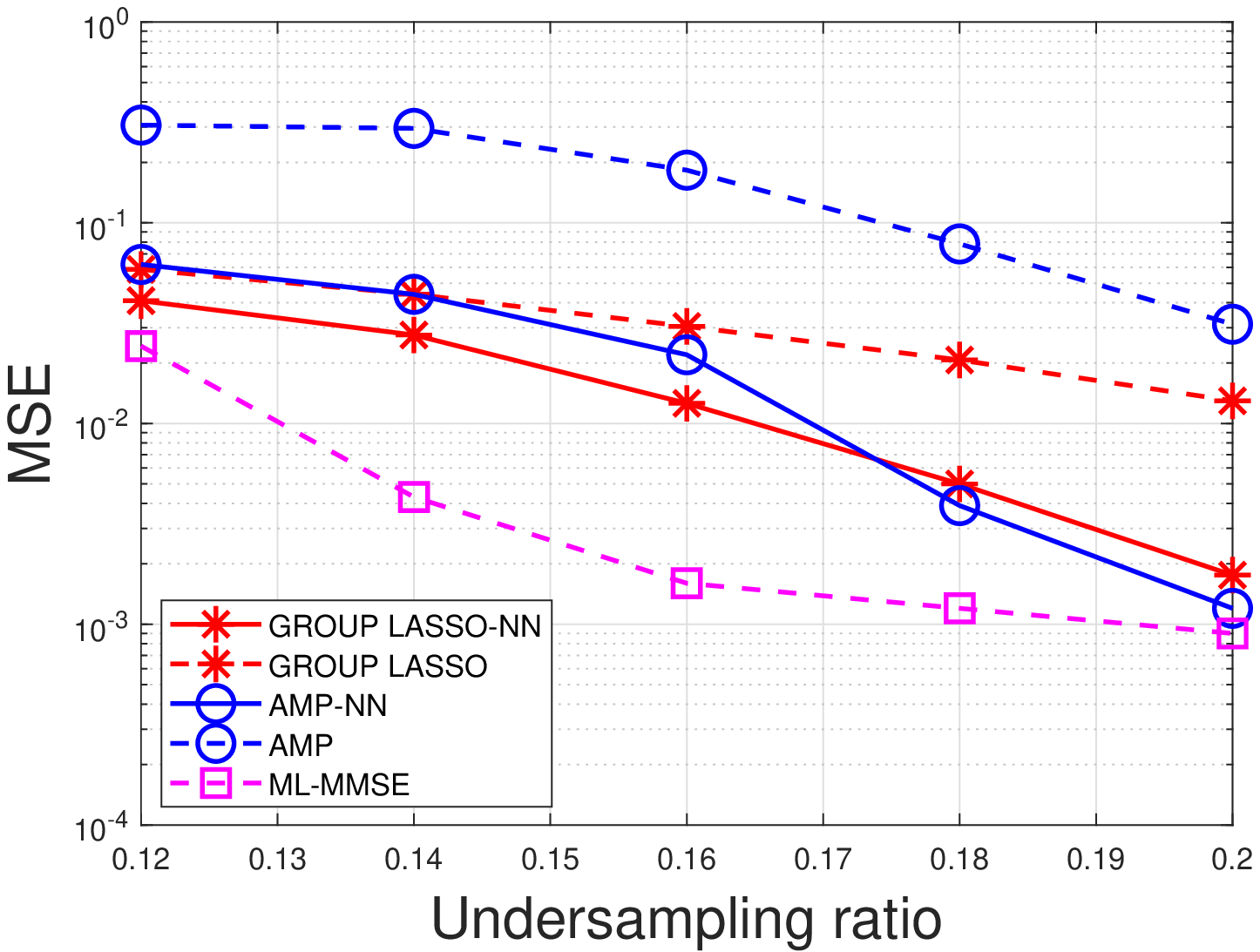}}}
 \subfigure[\scriptsize{MSE versus $p$ at $L/N=0.2$, $M=4$.}]
 {\resizebox{4.3cm}{!}{\includegraphics{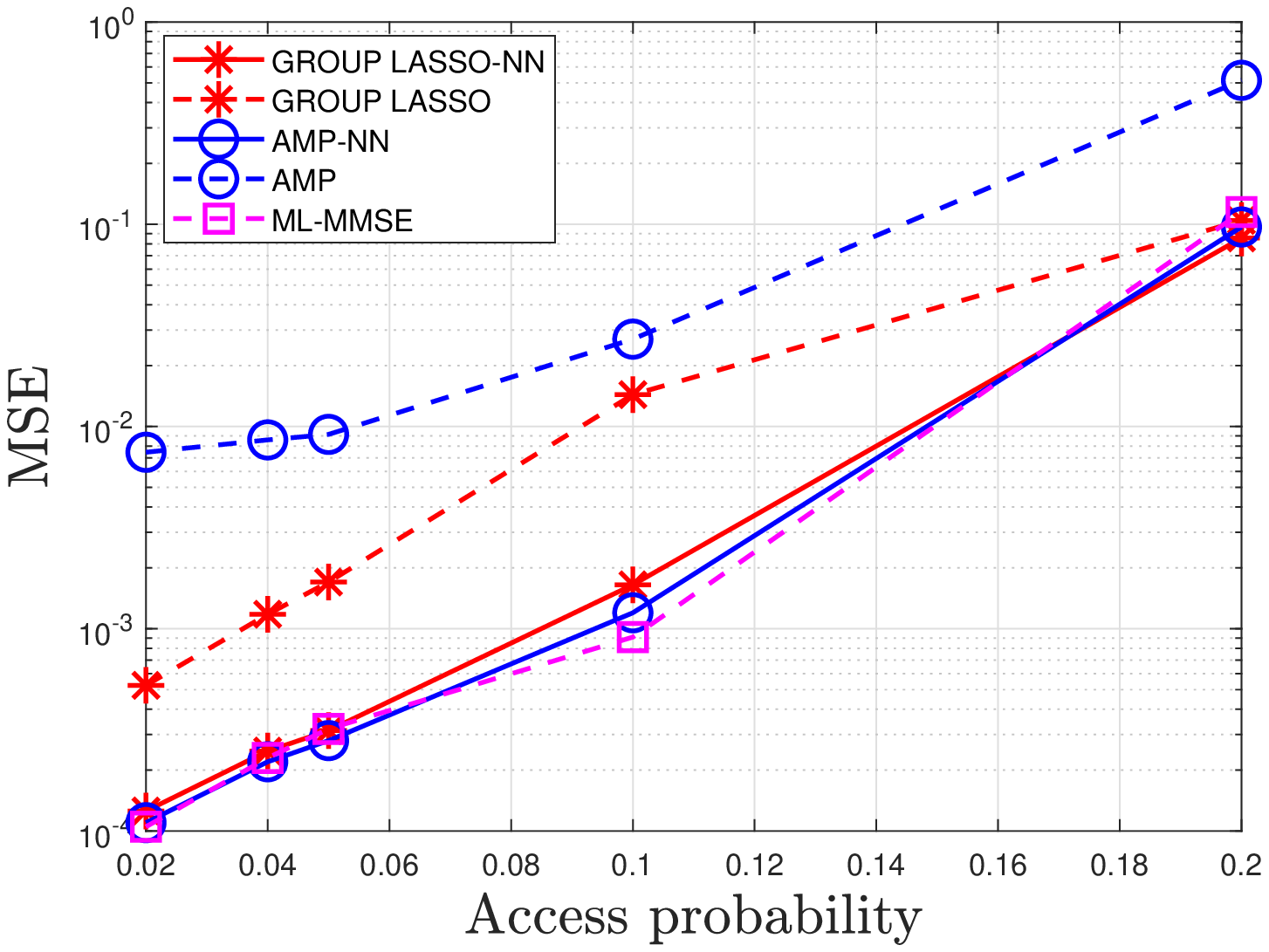}}}
 \subfigure[\scriptsize{MSE versus $M$ at $L/N=0.12$, $p=0.1$.}]
 {\resizebox{4.3cm}{!}{\includegraphics{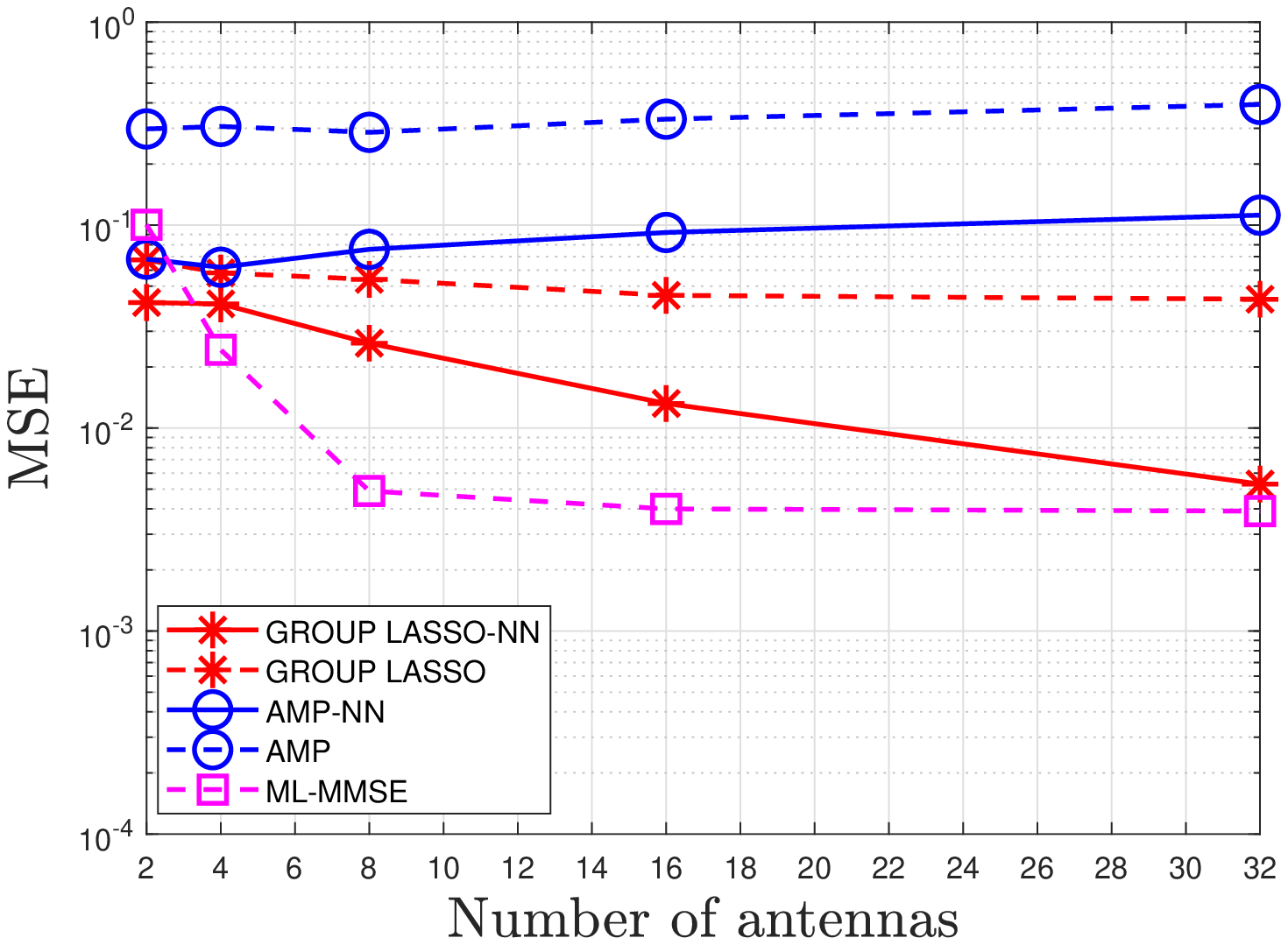}}}
  \end{center}
 \vspace*{-0.3cm}
   \caption{\small{\lsccc{Channel estimation in the} correlated case with a single active group at $N=100$.}}
   \label{signalco100}
\end{figure}

\begin{figure}[tp]
\begin{center}
 \subfigure[\scriptsize{MSE versus $L/N$ at $p=0.1$, $M=16$, $p_1/p_2=3$.}]
 {\resizebox{4.3cm}{!}{\includegraphics{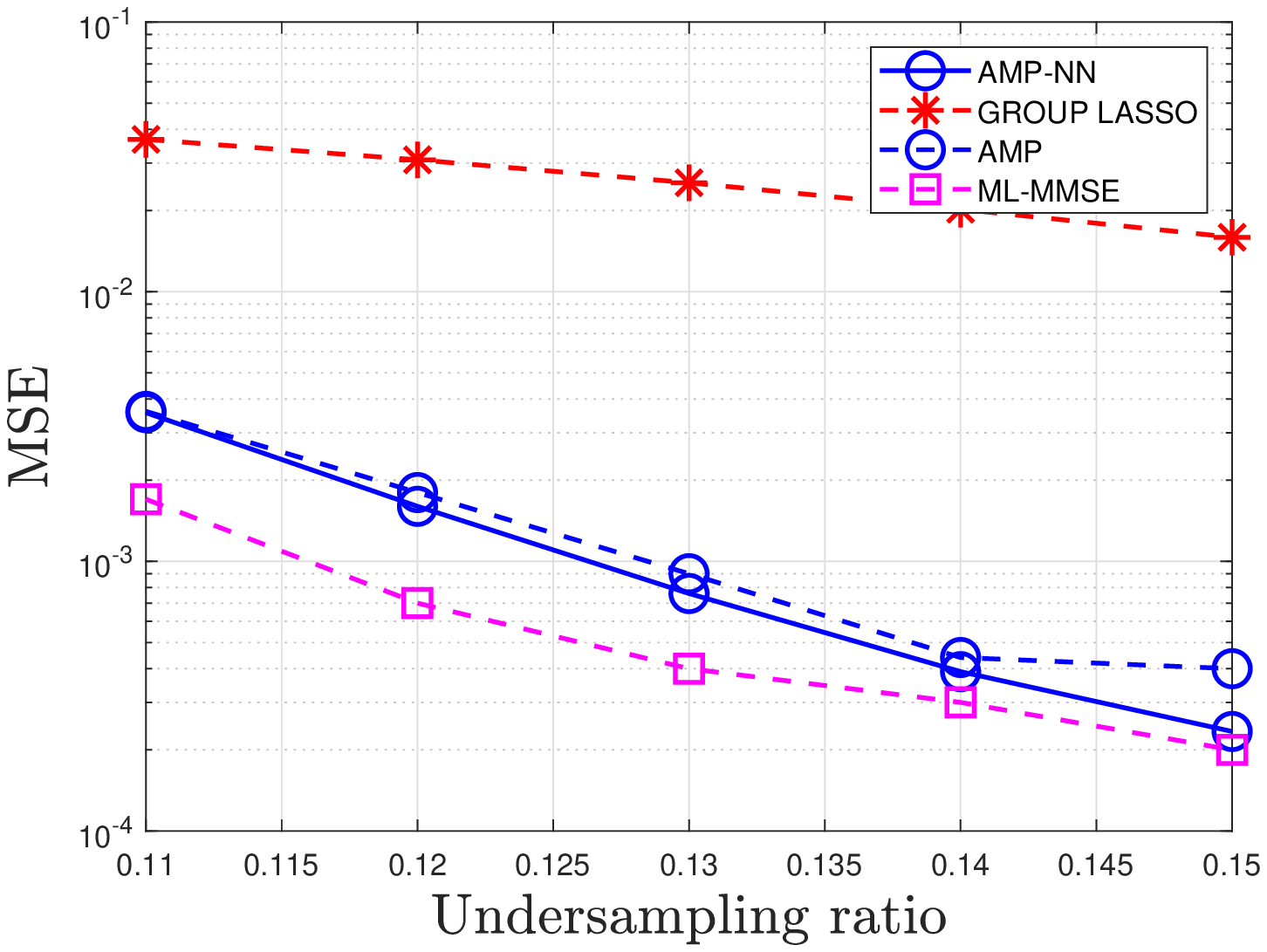}}}
 \subfigure[\scriptsize{MSE versus $p$ at $L/N=0.15$, $M=16$, $p_1/p_2=3$.}]
 {\resizebox{4.3cm}{!}{\includegraphics{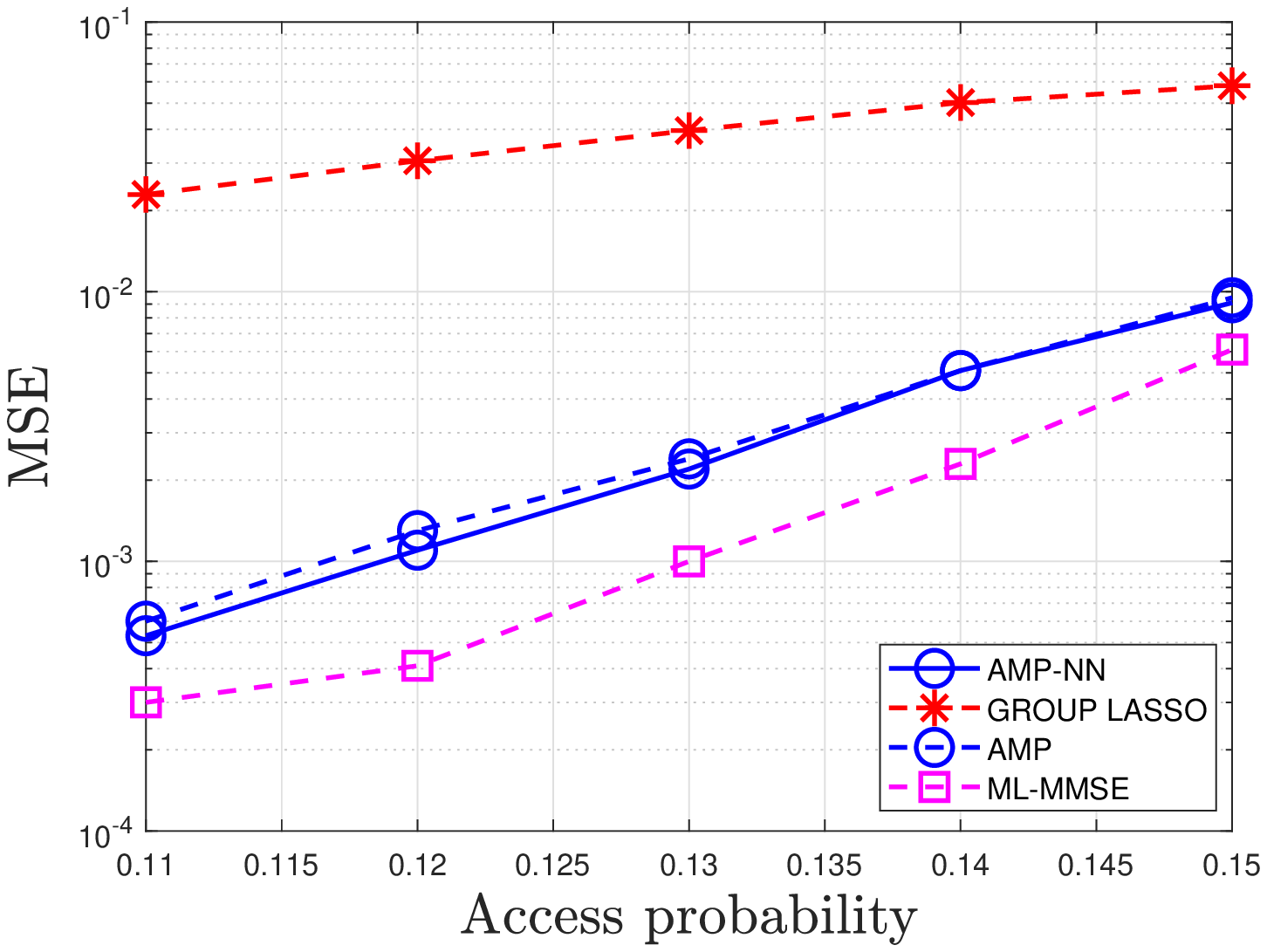}}}
 \subfigure[\scriptsize{\lscr{MSE versus $M$ at $L/N=0.11$, $p=0.1$, $p_1/p_2=3$.}}]
 {\resizebox{4.3cm}{!}{\includegraphics{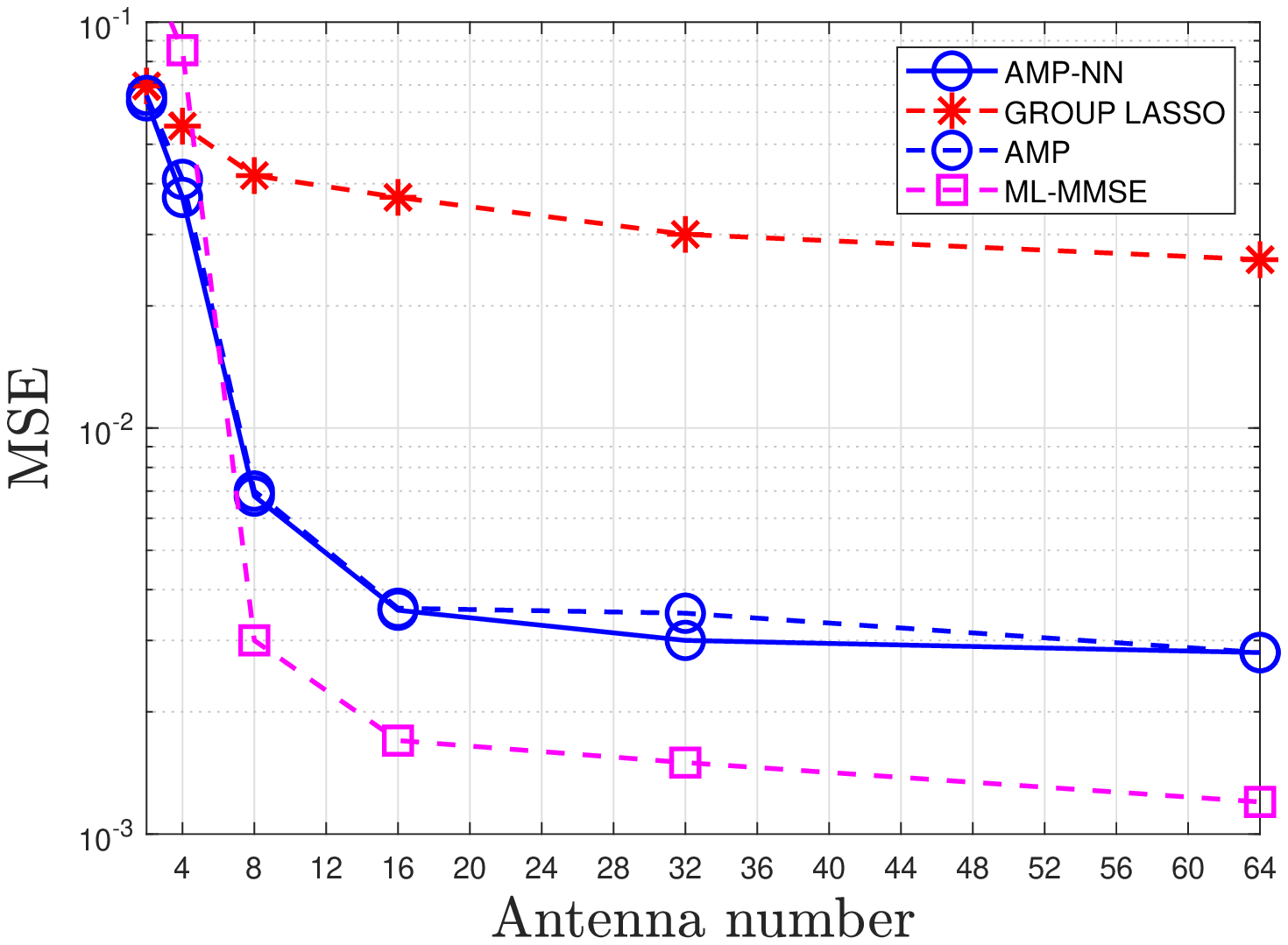}}}
 \subfigure[\scriptsize{MSE versus $p_1/p_2$ at $L/N=0.11$, $M=16$, $p=0.1$.}]
 {\resizebox{4.3cm}{!}{\includegraphics{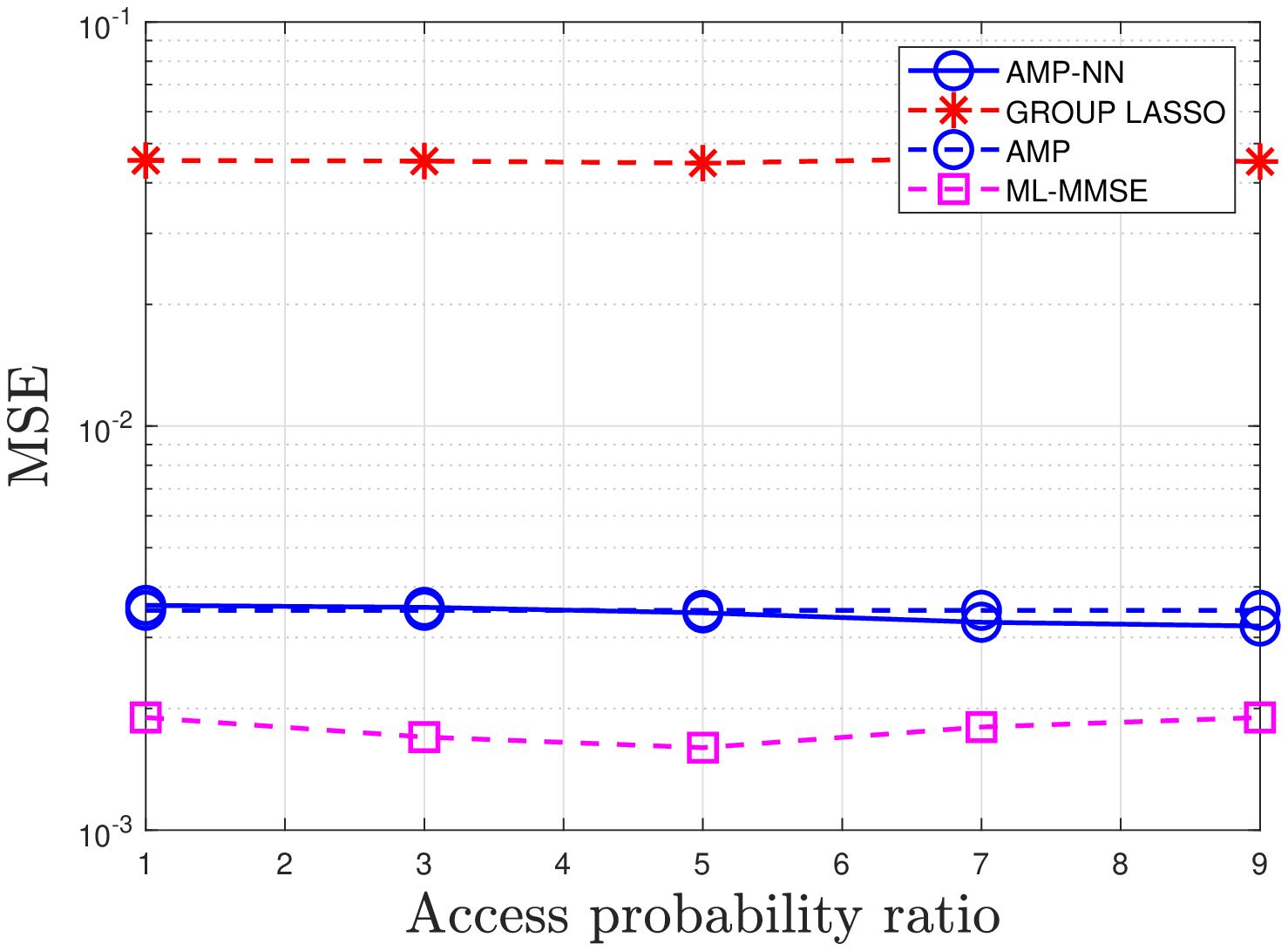}}}
  \subfigure[\scriptsize{Coherence versus $L/N$ \lscca{at $p=0.1$, $M=16$, $p_1/p_2=3$}.}]
 {\resizebox{4.4cm}{!}{\includegraphics{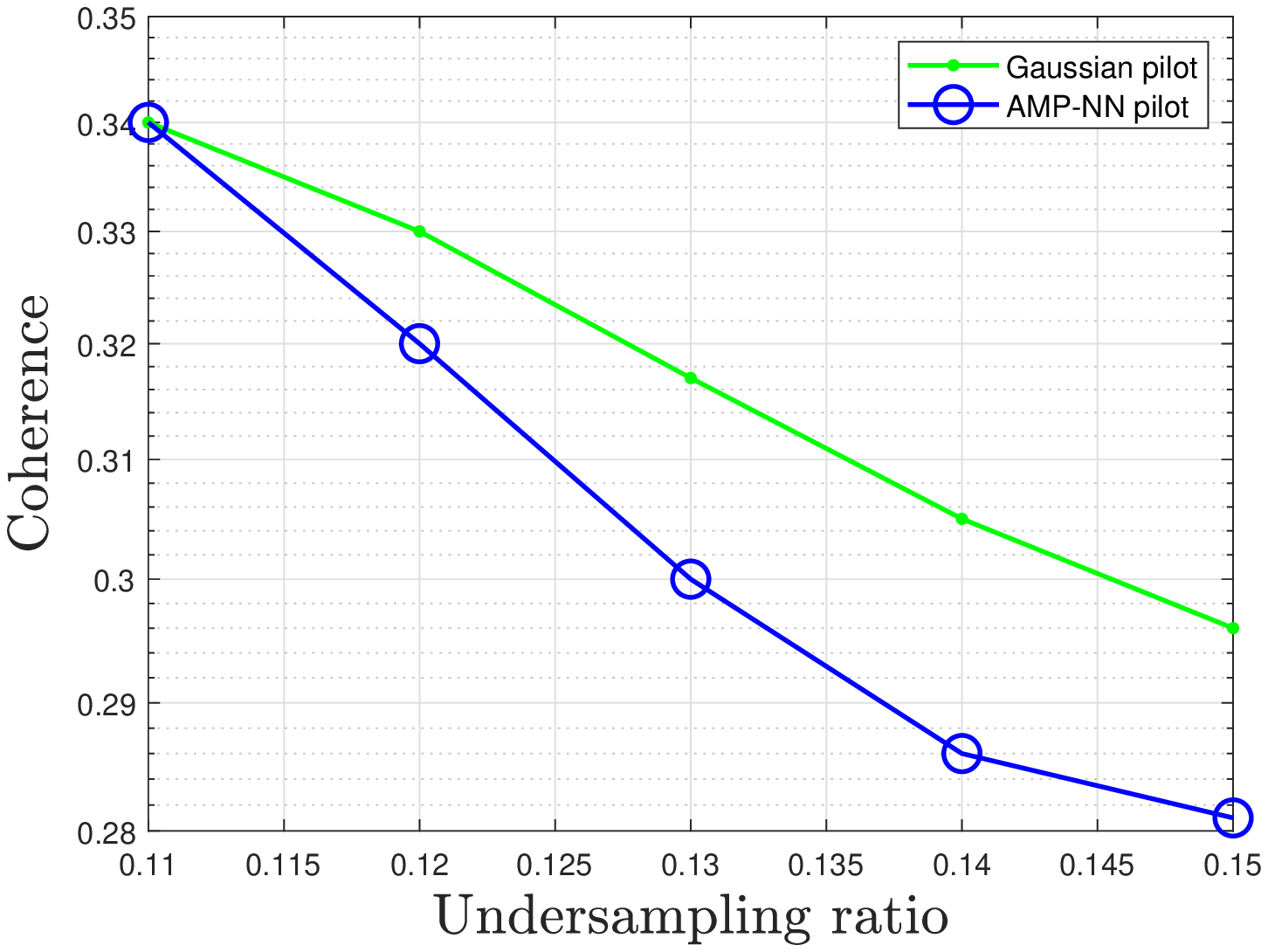}}}
  \end{center}
 \vspace*{-0.3cm}
   \caption{\small{\lsccc{Channel estimation in the} independent case at $N=1000$.}}
   \label{signal1000}
\end{figure}
\begin{figure}[tp]
\begin{center}
 \subfigure[\scriptsize{MSE versus $L/N$ at $p=0.1$, $M=16$.}]
 {\resizebox{4.3cm}{!}{\includegraphics{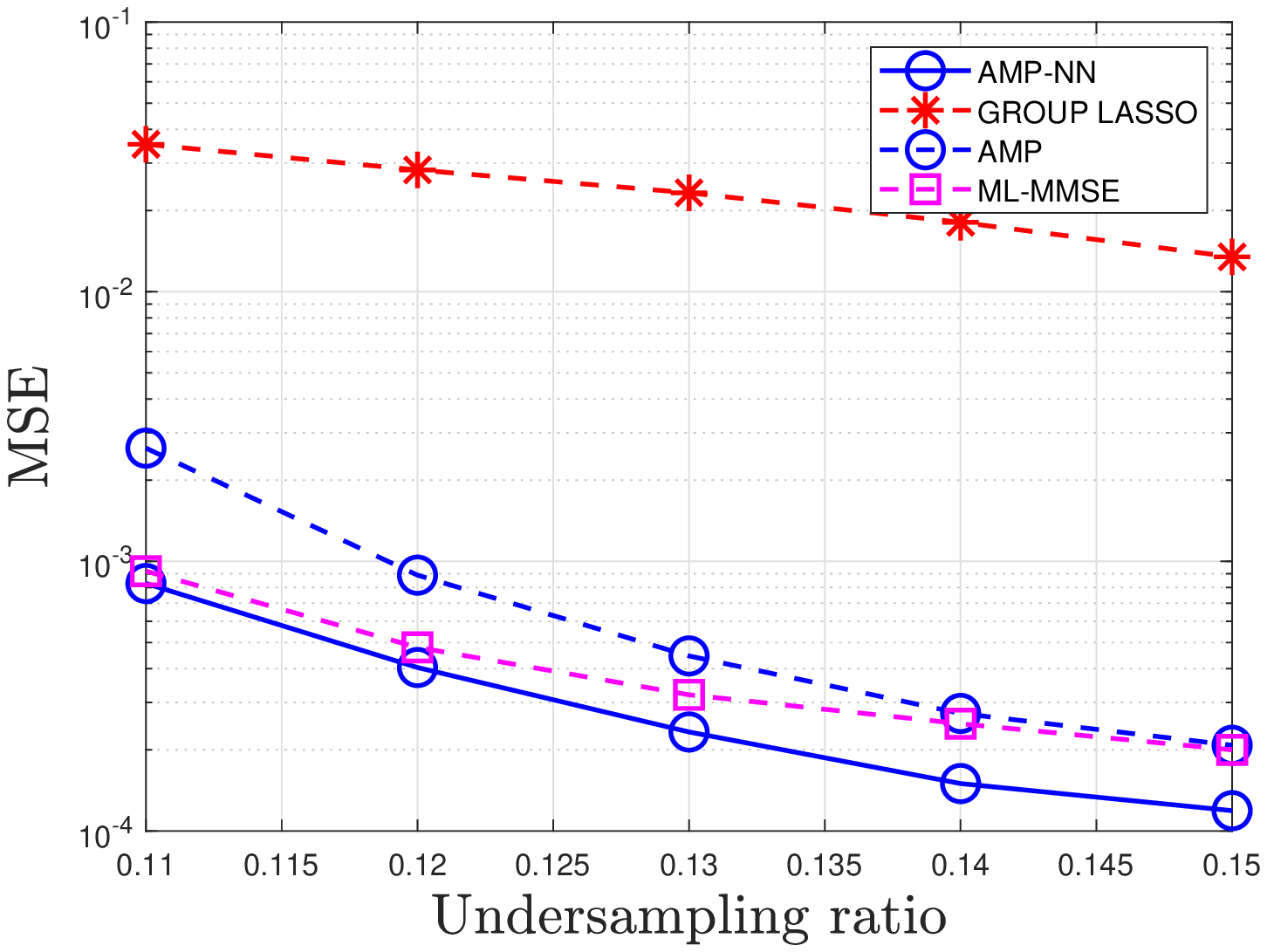}}}
 \subfigure[\scriptsize{MSE versus $p$ at $L/N=0.15$, $M=16$.}]
 {\resizebox{4.3cm}{!}{\includegraphics{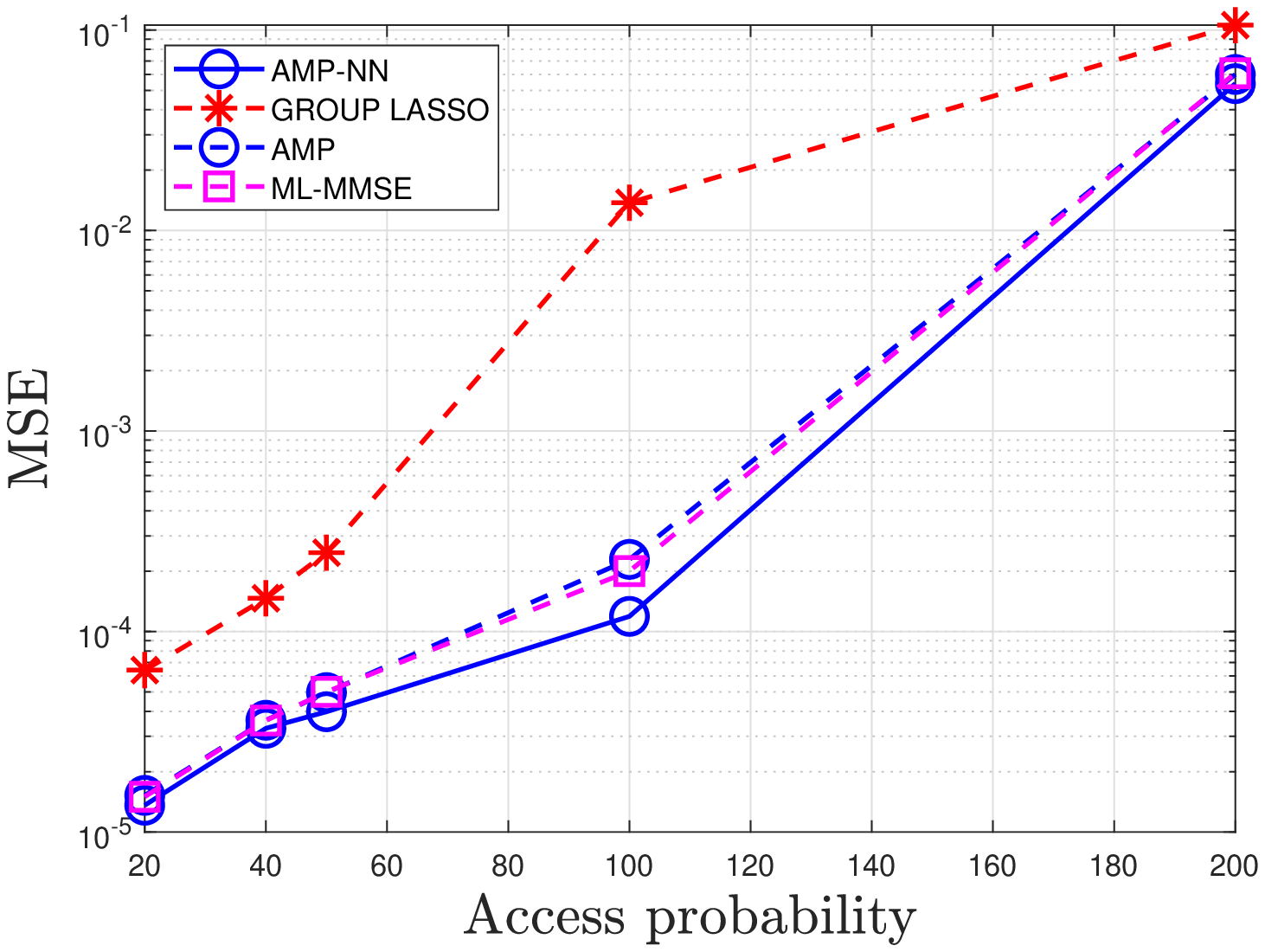}}}
 \subfigure[\scriptsize{\lscr{MSE versus $M$ at $L/N=0.11$, $p=0.1$.}}]
 {\resizebox{4.3cm}{!}{\includegraphics{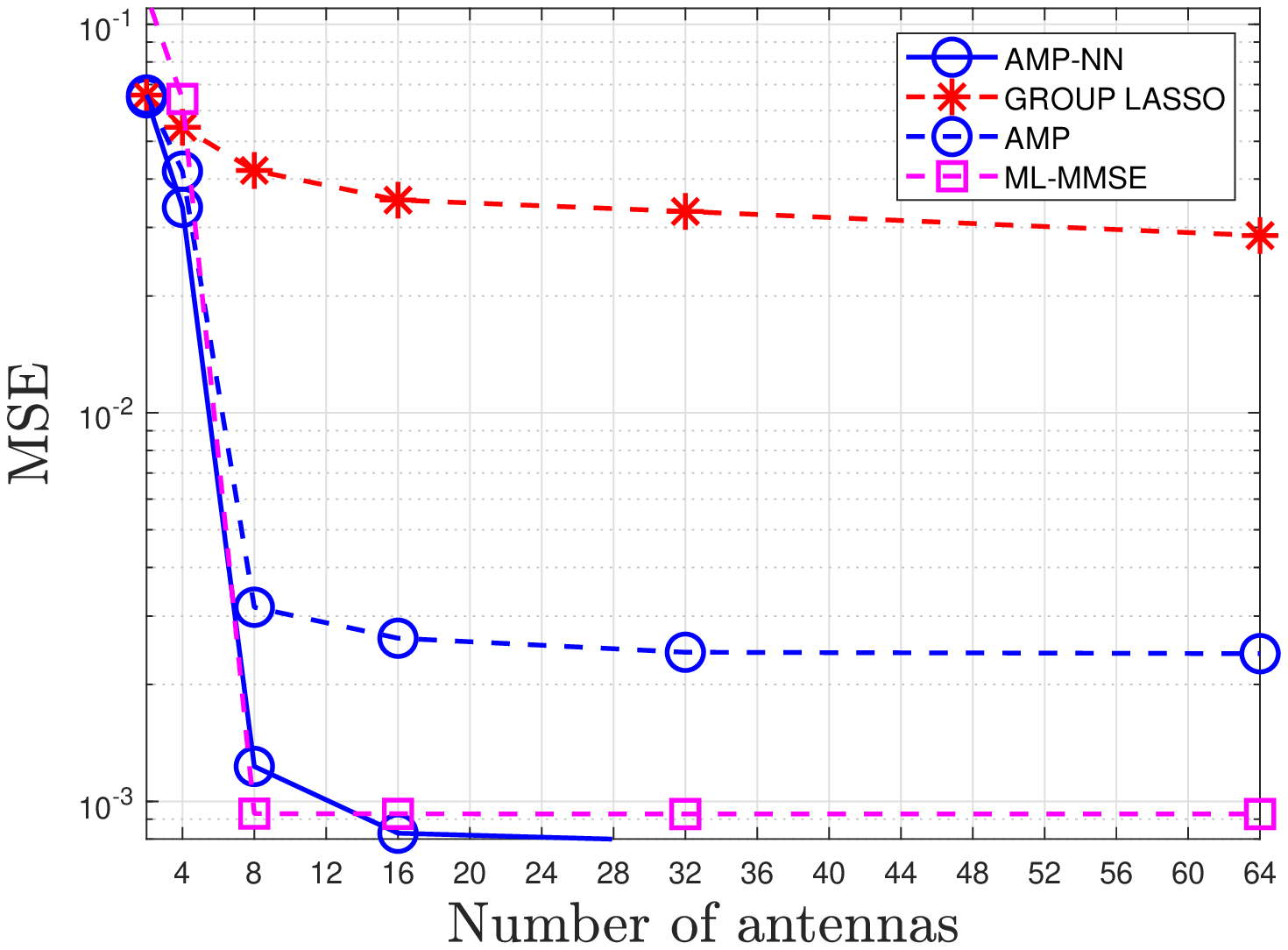}}}
  \subfigure[\scriptsize{Block-coherence across groups versus $L/N$ \lscca{at $p=0.1$, $M=16$.}}]
 {\resizebox{4.3cm}{!}{\includegraphics{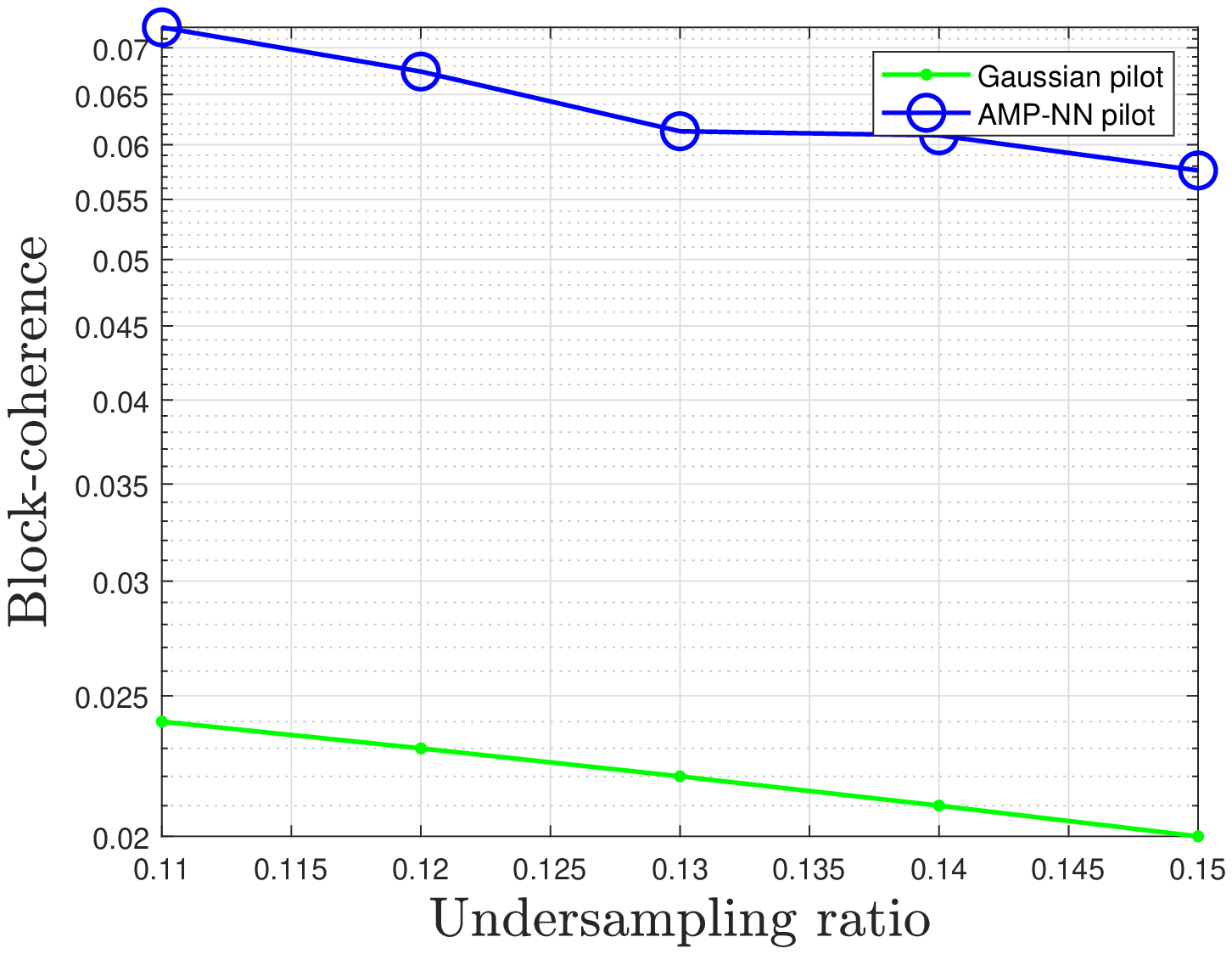}}}
 \subfigure[\scriptsize{Sub-coherence within one group versus $L/N$ \lscca{at $p=0.1$, $M=16$}.}]
 {\resizebox{4.3cm}{!}{\includegraphics{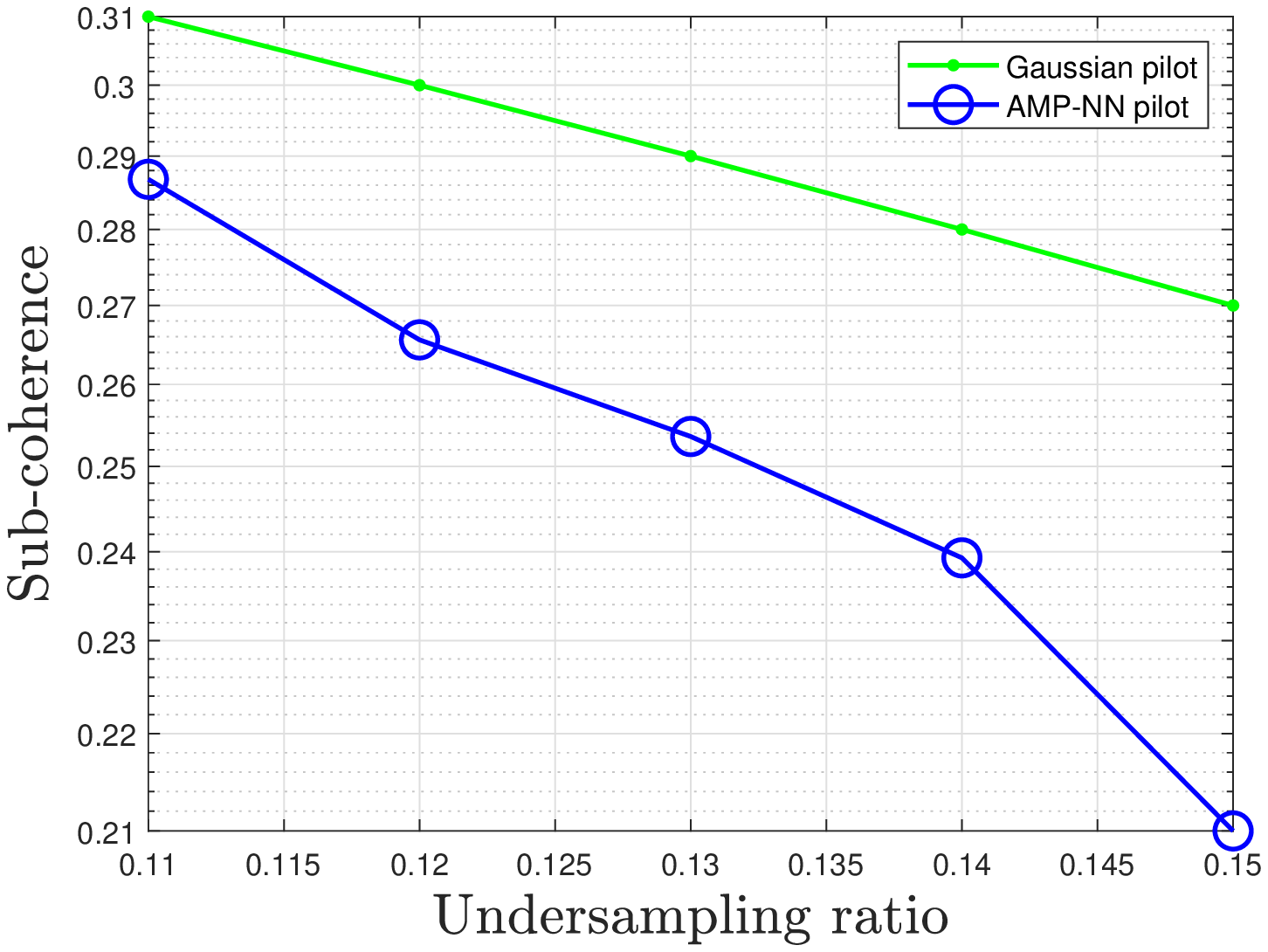}}}
  \end{center}
 \vspace*{-0.3cm}
   \caption{\small{\lsccc{Channel estimation in the} correlated case with a single active group at $N=1000$.}}
   \label{signalco1000}
\end{figure}

Fig.~\ref{signal100} (a), (b), (c), (d) and Fig.~\ref{signal1000} (a), (b), (c), (d) illustrate the MSE versus the undersampling ratio $L/N$, access probability $p$, \lscv{\lsccc{number of antennas} $M$ and access ratio $p_1/p_2$} in the independent case at $N=100$ and $N=1000$, respectively. Fig.~\ref{signalco100} and Fig.~\ref{signalco1000} (a), (b), (c) illustrate the MSE versus the undersampling ratio $L/N$, access probability $p$ and number of antennas $M$ in the correlated case with a single active group at $N=100$ and $N=1000$, respectively. From all these figures, we make the following observations. \lscc{The MSE of each scheme always decreases with $L/N$ and increases with $p$. In each case, the MSEs of all schemes \lscv{except} for AMP \lsccc{and AMP-NN} decrease with $M$ at $N=100$, \lscv{and the MSE of each scheme decreases with $M$ at $N=1000$}. The trend for AMP \lscv{at $N=100$} is abnormal, as AMP is not \lscv{suitable} for small $N$.  In the independent case, the MSEs of GROUP LASSO-NN and AMP-NN \lscv{decrease} with $p_1/p_2$, while the MSEs of the baseline schemes almost do not change with $p_1/p_2$. This phenomenon indicates that our proposed model-driven approach can exploit the difference in device activity probabilities to improve signal recovery accuracy, as shown in Fig.~\ref{signal100} (e).} \lscr{Our proposed GROUP LASSO-NN \nwqa{and AMP-NN always} outperform \lscv{the underlying} GROUP LASSO \nwqa{and AMP}, \wq{respectively}, demonstrating the competitive advantage of the proposed model-driven approach in designing effective measurement matrices and adjusting correction layers for improving signal recovery accuracy.}
\nwqa{The gains in the correlated case with a single active group are larger than those in the independent case.
In the independent case, the gains are larger \lscv{at larger} $p_1/p_2$.
\lscv{Note that larger} gains \lscv{appear} when more features of sparsity pattens exist and can be utilized by the proposed model-driven approach.}
\lscr{In the independent case with $N=1000$, the MSEs of AMP-NN and AMP are \lscv{close.} The reason is that \lsccc{not many features of} sparsity patterns \lscv{can be utilized for AMP-NN,} and AMP already achieves very excellent recovery accuracy \lscv{at large $N$}.} GROUP LASSO-NN outperforms AMP-NN when $N$ and $L/N$ are small, while AMP-NN outperforms GROUP LASSO-NN when $N$ and $L/N$ are large. \lscv{In most cases,} ML-MMSE has the smallest MSE, at the cost of \nwqa{computation time increase (\lscb{due to} ML}) \cite{8437359}. \lscv{In the correlated case with a single active group, when $N=1000$,} AMP-NN \lscv{achieves the smallest MSE. This is because the underlying} AMP functions well for large $N$ and the encoder and correction layers \nwqa{of AMP-NN} effectively utilize the sparsity patterns for further improving signal recovery accuracy.

Fig.~\ref{signal1000} (e) shows the coherence of our learned measurement matrix in the independent case and Gaussian matrix versus the undersampling ratio $L/N$ at $N=1000$. Fig.~\ref{signalco1000} (d), (e) show the block-coherence across groups and sub-coherence within one group of our learned measurement matrix in the correlated case with a single active group and Gaussian matrix versus the undersampling ratio $L/N$ at $N=1000$. The definitions of coherence, block-coherence and sub-coherence of a matrix can be found in \cite{8264817}. Roughly speaking, the coherence and sub-coherence of a matrix reflect the orthogonality of all the columns \lscb{and orthogonality of all the} columns within one group, \lscb{respectively}, and block-coherence reflects the overall orthogonality between two groups of columns. Note that the measures for Gaussian matrix are obtained by averaging over $10000$ realizations.  The measures for the learned measurement matrix in the correlated case with a single active group are averaged over all groups. From Fig.~\ref{signal1000} (e) and Fig.~\ref{signalco1000} (e), we can observe that the learned measurement matrix of our proposed AMP-NN has smaller coherence and sub-coherence than Gaussian matrix, \lscca{to more effectively differentiate devices that may be active simultaneously and devices that} are always active simultaneously, \lscca{respectively}. From Fig.~\ref{signalco1000} (d), we can see that the learned measurement matrix of our proposed AMP-NN has larger block-coherence than Gaussian matrix. This is to give higher priority to the differentiability of the devices that are always active simultaneously, at the sacrifice of the differentiability of the devices that never activate at the same time.

\begin{figure}[tp]
\begin{center}
 \subfigure[\scriptsize{\lscr{Computation time versus $L/N$ at $M=4$ and $N=100$.}}]
 {\resizebox{4.3cm}{!}{\includegraphics{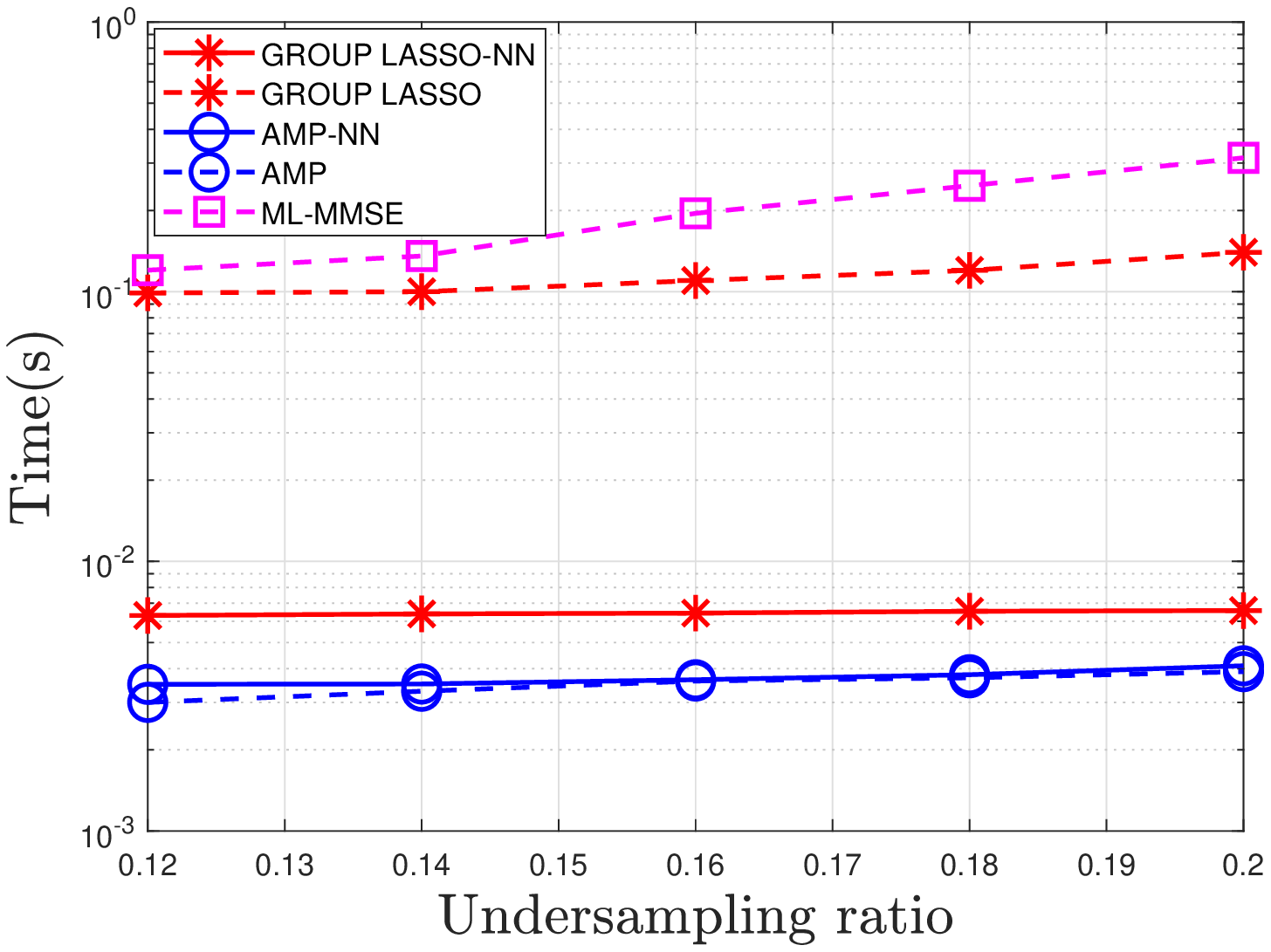}}}
 \subfigure[\scriptsize{\lscr{Computation time versus $L/N$ at $M=16$ and $N=1000$.}}]
 {\resizebox{4.3cm}{!}{\includegraphics{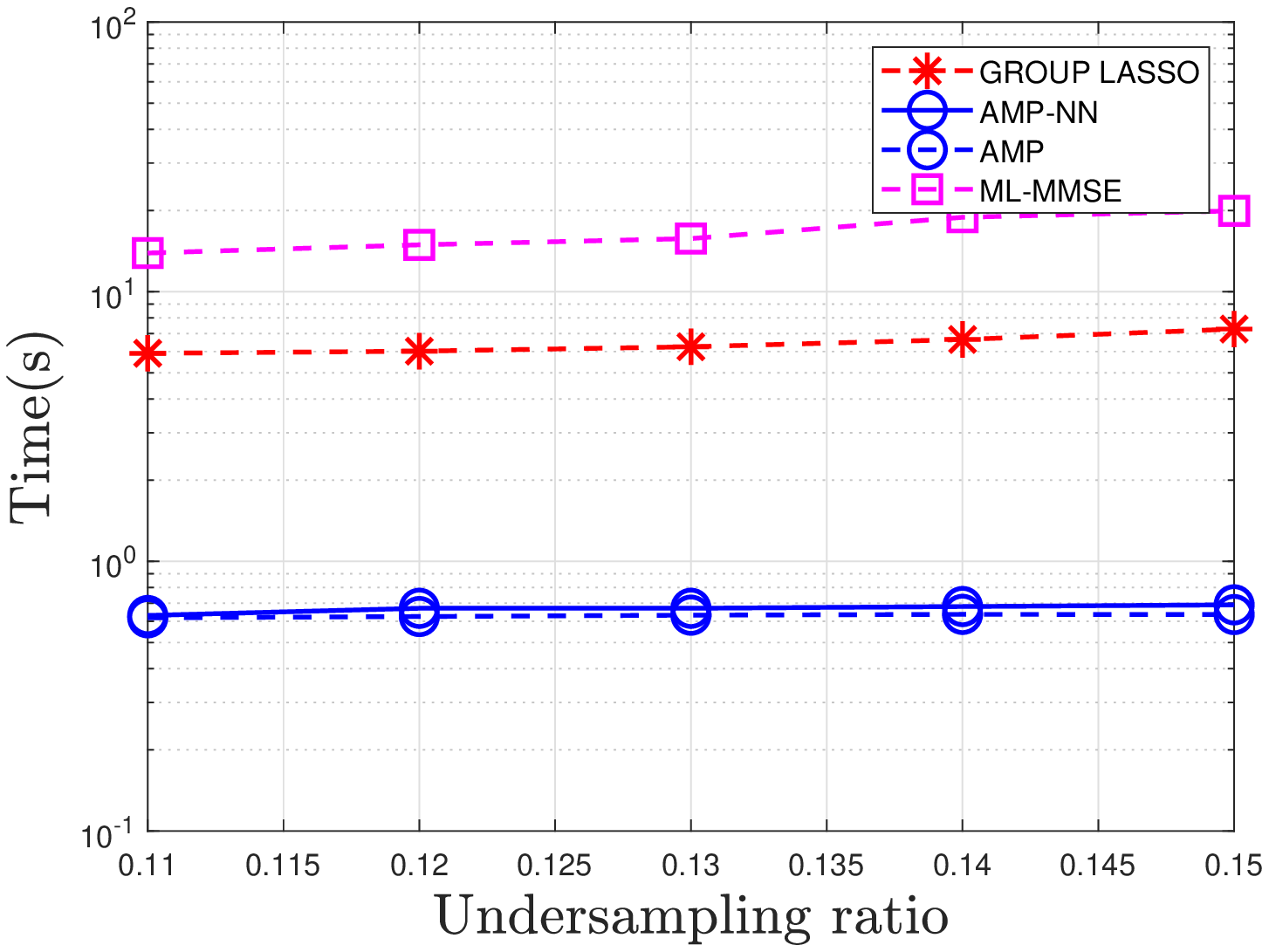}}}
  \end{center}
 \vspace*{-0.3cm}
   \caption{\small{Computation time \lsccc{for channel estimation} in the independent case \lsccc{at $p=0.1$ and $p_1/p_2=3$}.}}
   \label{timesig}
\end{figure}

Fig.~\ref{timesig} illustrates the computation time versus the undersampling ratio $L/N$ in the independent case at $N=100$ and $N=1000$.\footnote{\lscb{Note that in our setup, the computation time of a scheme in the independent case is identical to that in \lscca{a} correlated case.}} \lscc{From Fig.~\ref{timesig}, we can make the following observations. The computation time of each scheme increases with $L/N$. GROUP LASSO-NN has shorter computation time than GROUP LASSO. \lscn{The gain derives from the fact that Algorithm~1 allows parallel computation}. GROUP LASSO and ML-MMSE have much longer computation time than the other schemes, especially at large $N$. Thus, \lscv{they may not be applicable for channel estimation in} practical mMTC (with large $N$)}. AMP-NN and AMP have the \lscb{similar} computation time \lscb{(which is the shortest), as they have the same number of iterations and $V=3$ is \lsccc{quite} small}.

\subsection{Device Activity Detection}
\label{sim}
In this subsection, we present numerical results on \lscr{device activity detection} in MIMO-based grant free random access. We evaluate the proposed model-driven approaches with \lscr{the} covariance-based decoder, i.e., NN, \lscr{and the MAP-based decoder, i.e., MAP-NN, both} at $N = 100$\lscr{,} and \lscr{with the} AMP-based decoder, i.e., AMP-NN, at $N = 100$ and $N=1000$.\footnote{When $N$ is large, training the proposed NN needs a large number of samples, and it takes a large number of iterations for Algorithm~3 to \lscv{converge to a reasonable} estimate. Thus, we do not adopt NN and MAP-NN at $N=1000$. \lscn{From numerical results, we find that i.i.d. Gaussian pilots are quite suitable for Algorithm~3 and the proposed approach cannot find better measurement matrix for the MAP-based decoder}.} \lscr{We consider four baseline schemes,} namely AMP \cite{8323218}, ML \cite{8437359}, \lscr{GROUP LASSO \lscb{(which adopts the block coordinate descent algorithm)} \cite{qin2013efficient}} and covariance-based LASSO \lscb{(which adopts the coordinate descent algorithm)} \cite{6994860}. \lscr{The computational complexities for AMP, ML, GROUP LASSO and covariance-based LASSO are $\mathcal{O}(LNM)$, $\mathcal{O}(NL^2)$, $\mathcal{O}(LNM)$ and $\mathcal{O}(LNM)$, respectively.} We set $U=0$ and $V=3$ for NN, set $U=55$ and $V=3$ for MAP-NN\lscr{,} and set $U=50$ and $V=3$ for AMP-NN. The choices are based on a large number of experiments and the tradeoff between  recovery accuracy and computation time. \lscn{For \lscb{ease} of comparison,  the total numbers of iterations for the block coordiante descent algorithm for ML, the AMP algorithm, the block coordinate descent algorithm for GROUP LASSO and the block coordiante descent algorithm for covariance-based LASSO are chosen as $55$, $50$, $200$ and $200$, respectively.} We numerically evaluate the average error rate of device activity detection \lscb{$\frac{1}{NT} \sum_{t=1}^T\|\boldsymbol{\alpha}^{[t]}-\hat{\boldsymbol{\alpha}}^{[t]} \|_1$} and computation time (on the same server) of each scheme over $T=10^3$ testing samples.

\begin{figure}[tp]
\begin{center}
 \subfigure[\scriptsize{Error rate versus $L/N$ at $p=0.1$, $M=4$, $p_1/p_2=3$.}]
 {\resizebox{4.3cm}{!}{\includegraphics{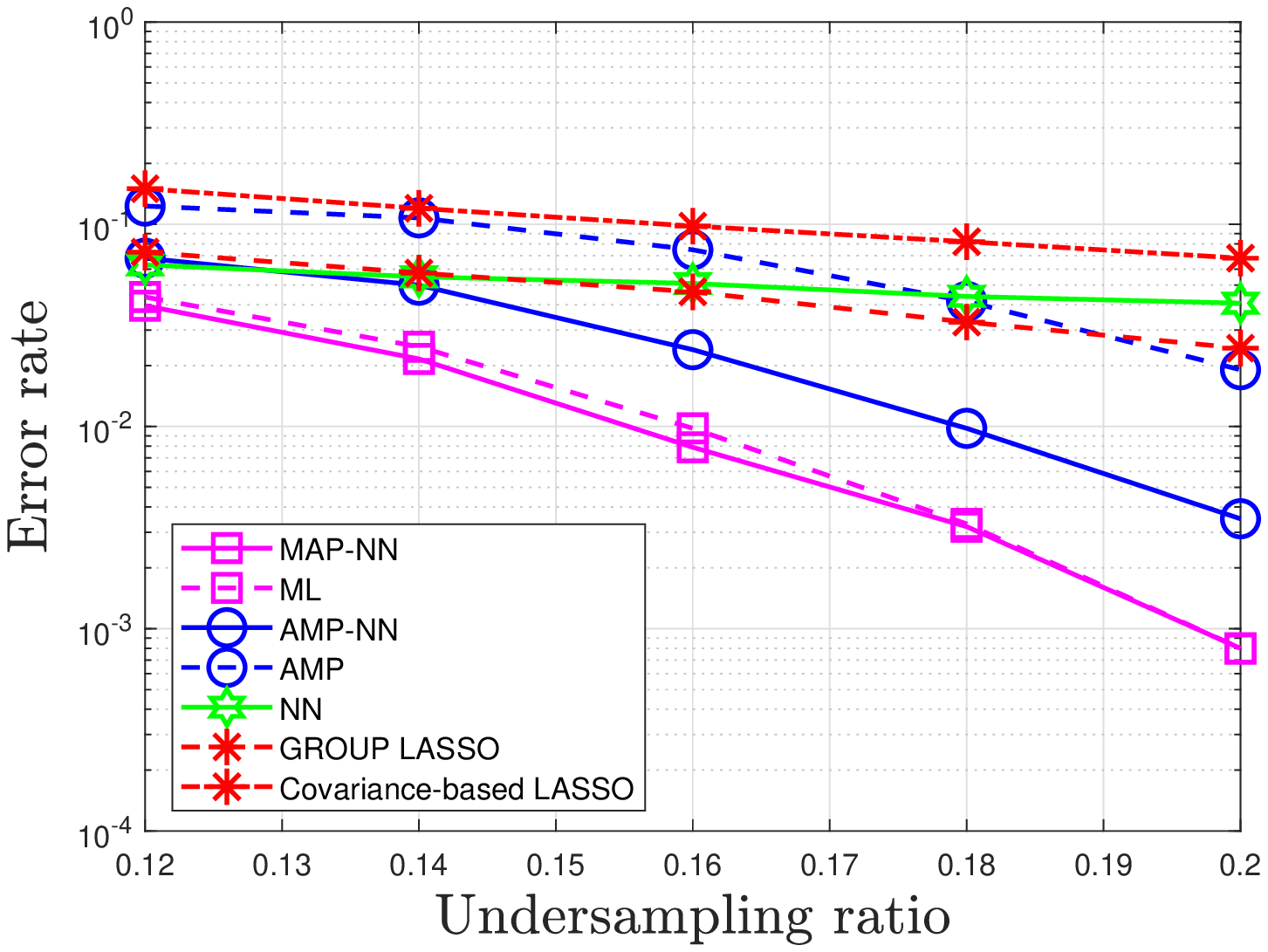}}}
 \subfigure[\scriptsize{Error rate versus $p$ at $L/N=0.2$, $M=4$, $p_1/p_2=3$.}]
 {\resizebox{4.3cm}{!}{\includegraphics{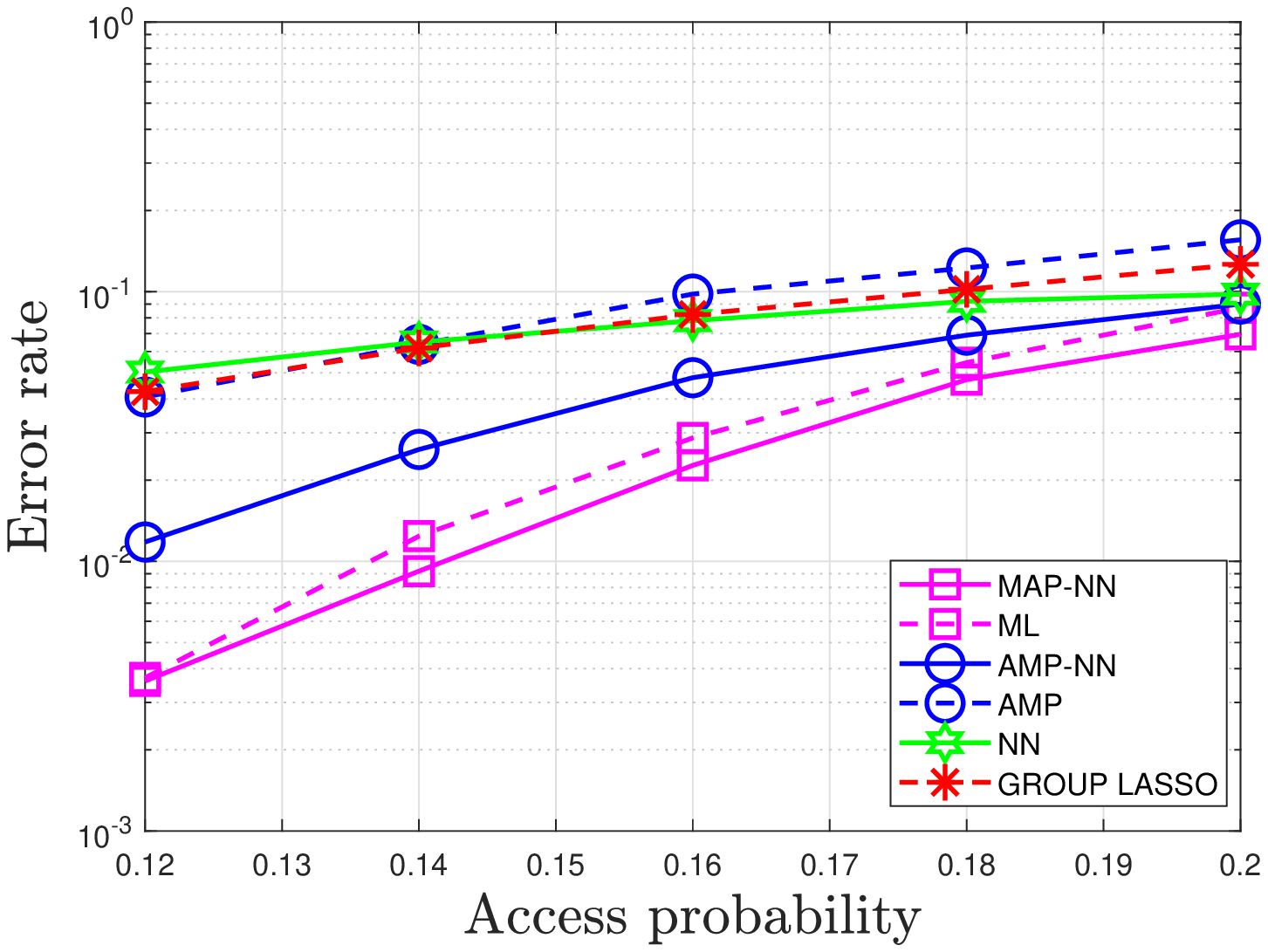}}}
 \subfigure[\scriptsize{Error rate versus $M$ at $L/N=0.12$, $p=0.1$, $p_1/p_2=3$.}]
 {\resizebox{4.3cm}{!}{\includegraphics{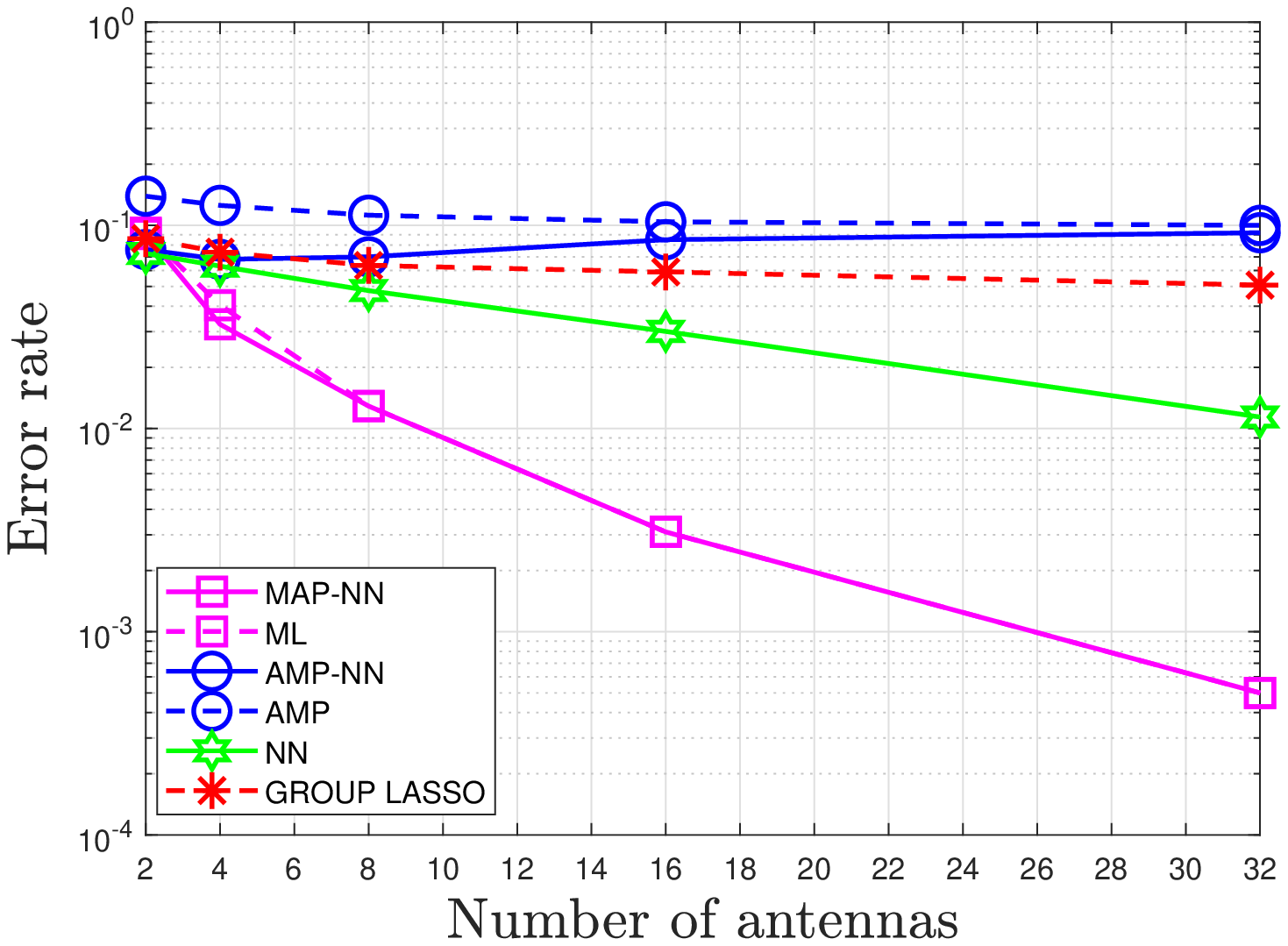}}}
 \subfigure[\scriptsize{Error rate versus $p_1/p_2$ at $L/N=0.12$, $M=4$, $p=0.1$.}]
 {\resizebox{4.4cm}{!}{\includegraphics{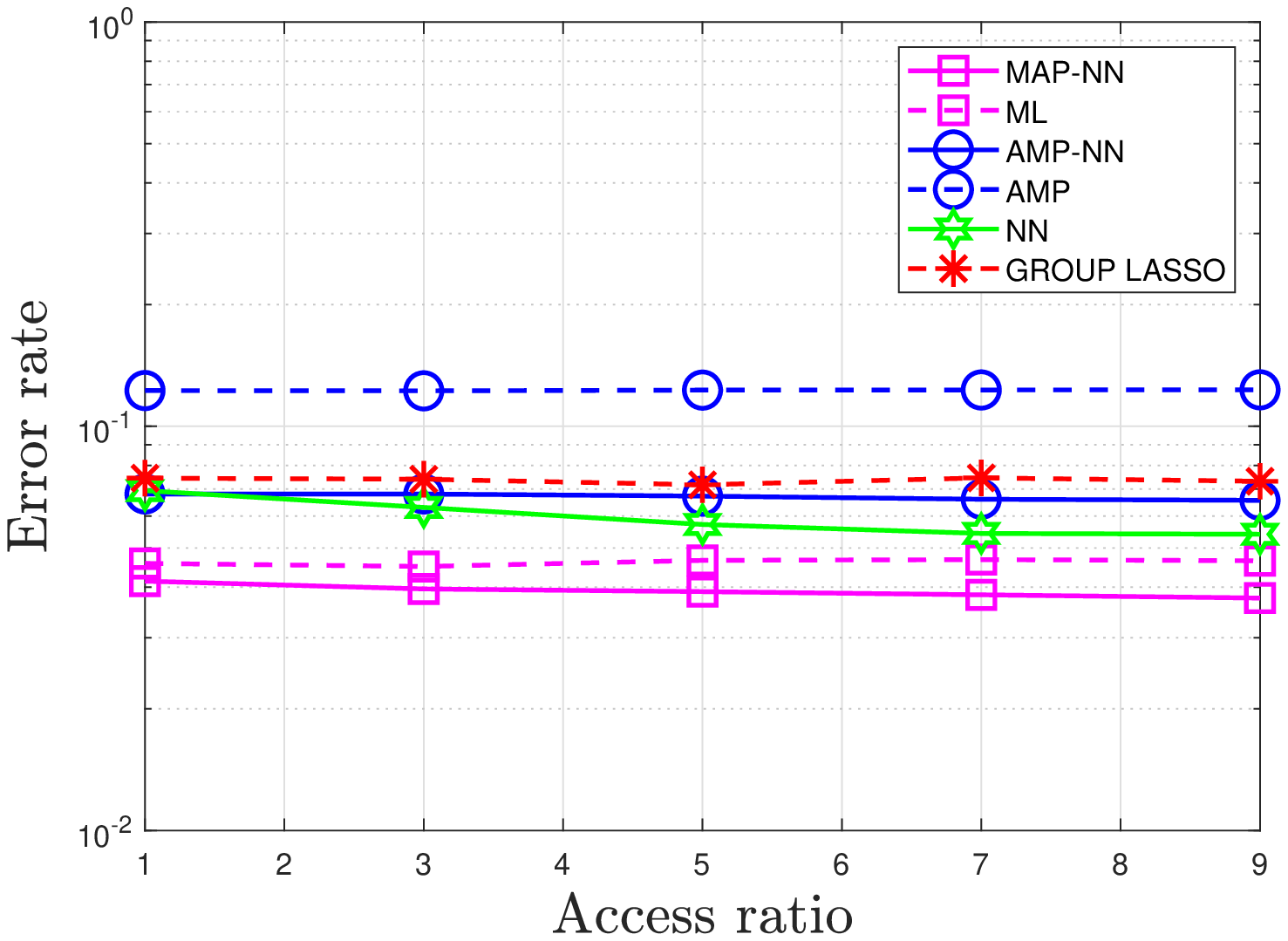}}}
   \subfigure[\scriptsize{$\epsilon_1/\epsilon_2$ versus $p_1/p_2$ at $L/N=0.12$, $M=4$, $p=0.1$.}]
 {\resizebox{4.3cm}{!}{\includegraphics{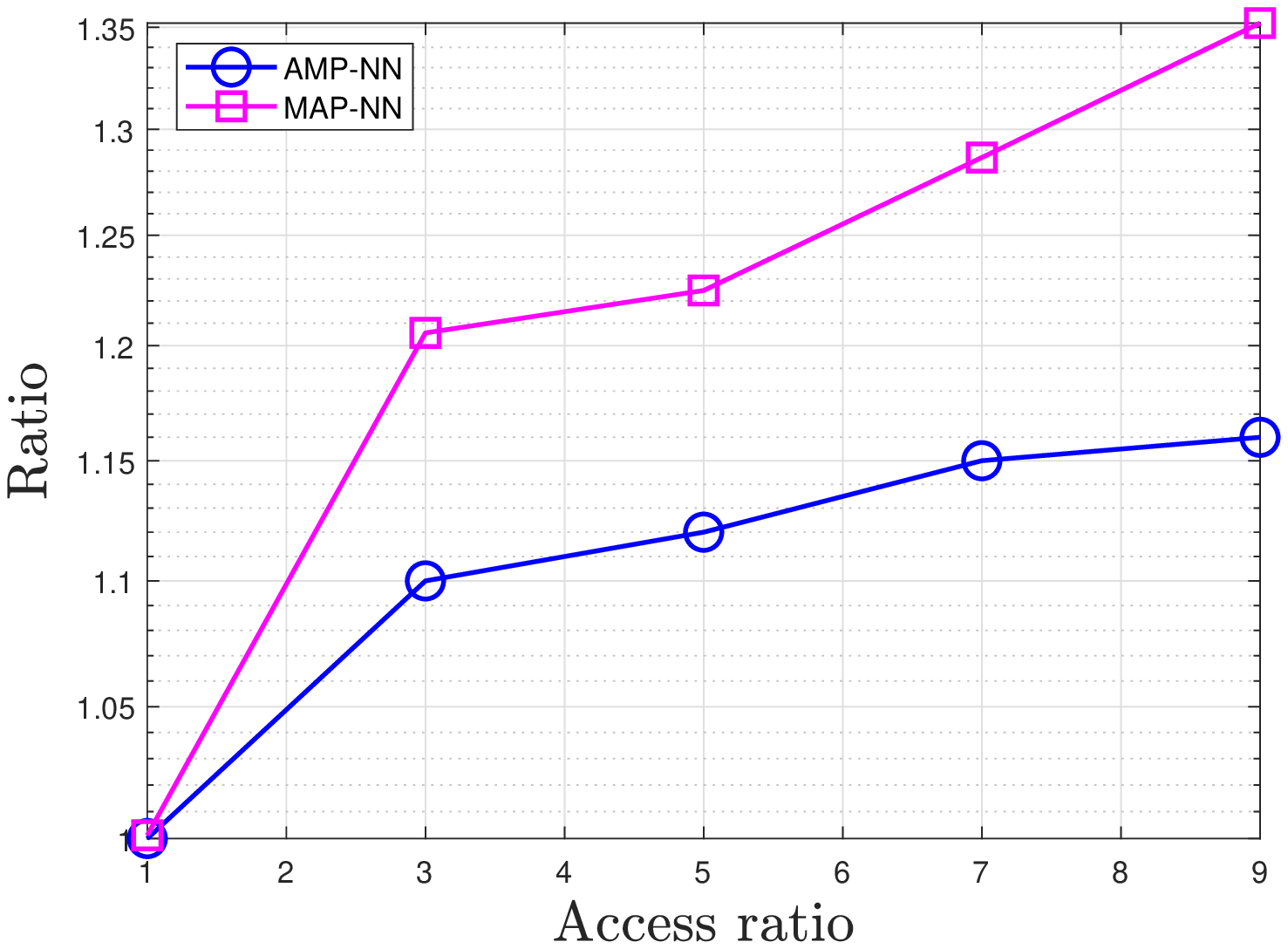}}}
  \end{center}
 \vspace*{-0.3cm}
   \caption{\small{\lsccc{Device activity detection in the} independent case at $N=100$, $\epsilon_1=\frac{2}{N}\sum_{n\in\mathbb{N}_1}\epsilon(n)$ and $\epsilon_2=\frac{2}{N}\sum_{n\in\mathcal{N}_2}\epsilon(n)$, where $\epsilon(n),n\in \mathcal{N}$, playing the role of the access probabilities in Algorithm~3, and are extracted from the trained AMP-based decoder.}}
   \label{support100}
\end{figure}

\begin{figure}[tp]
\begin{center}
 \subfigure[\scriptsize{Error rate versus $L/N$ at $p=0.1$, $M=4$, $G=20$.}]
 {\resizebox{4.3cm}{!}{\includegraphics{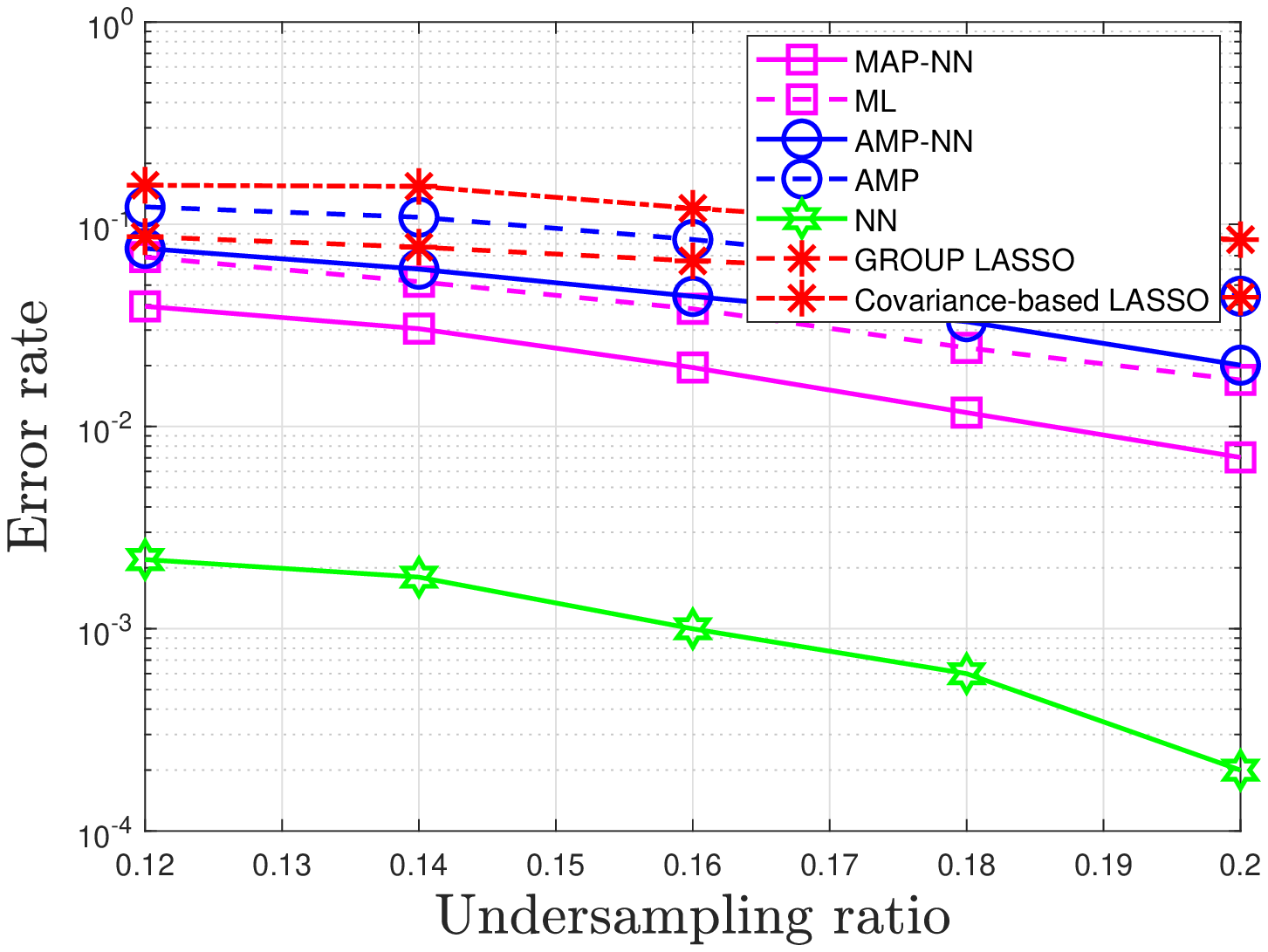}}}
 \subfigure[\scriptsize{Error rate versus $p$ at $L/N=0.2$, $M=4$, $G=20$.}]
 {\resizebox{4.3cm}{!}{\includegraphics{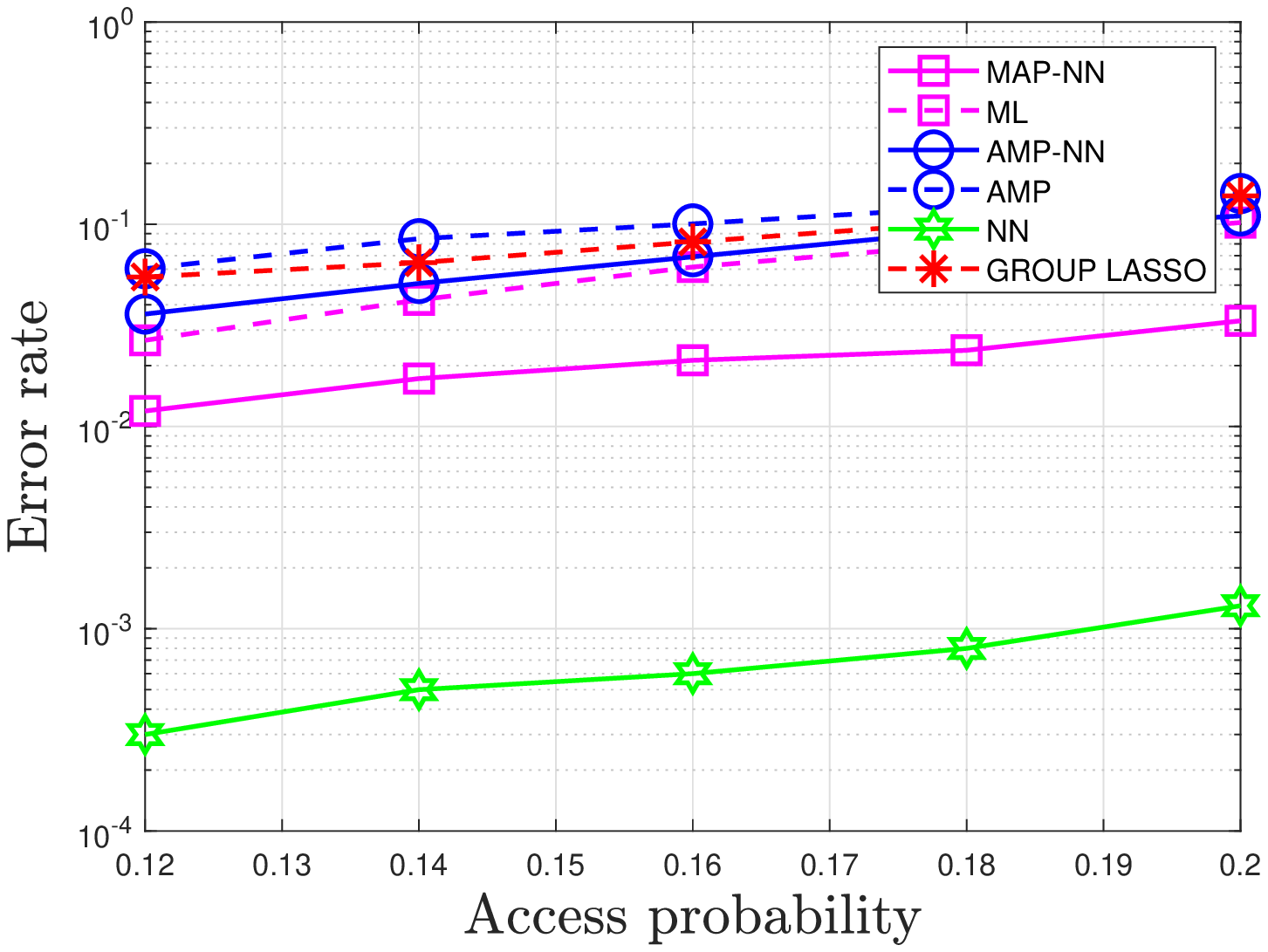}}}
 \subfigure[\scriptsize{Error rate versus $M$ at $L/N=0.12$, $p=0.1$, $G=20$.}]
 {\resizebox{4.3cm}{!}{\includegraphics{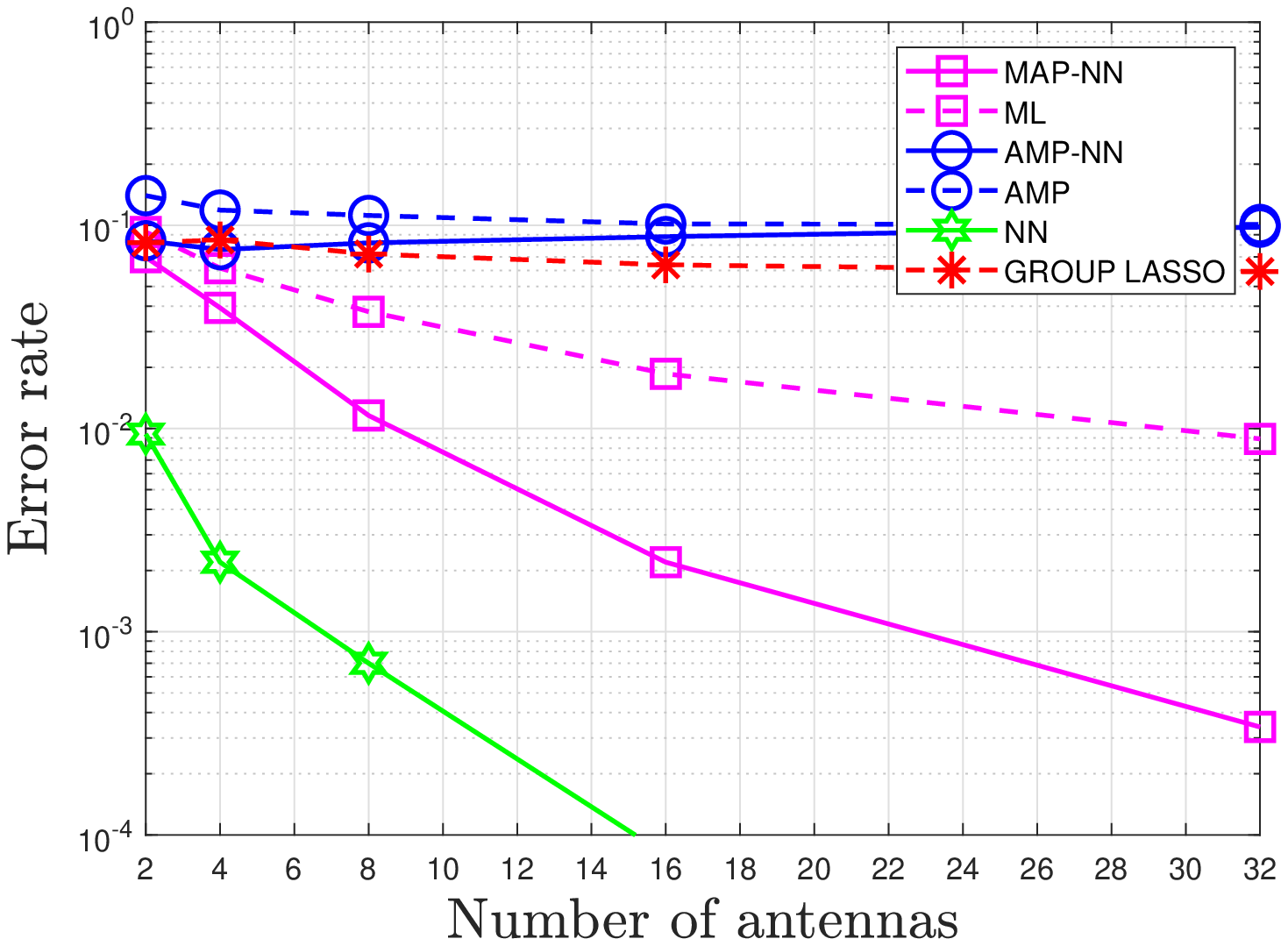}}}
 \subfigure[\scriptsize{Error rate versus $G$ at $L/N=0.12$, $M=4$, $p=0.1$.}]
 {\resizebox{4.3cm}{!}{\includegraphics{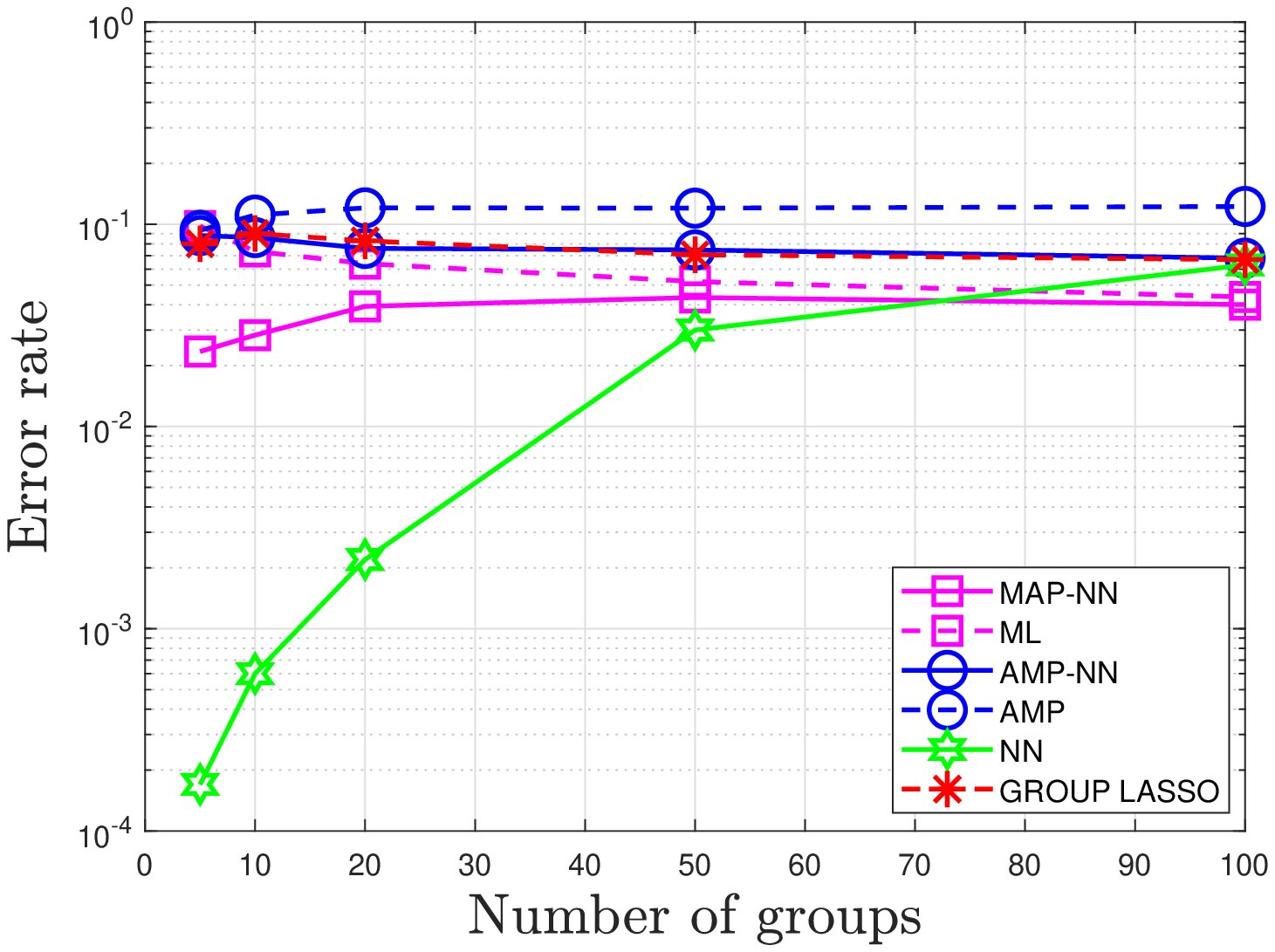}}}
  \end{center}
 \vspace*{-0.3cm}
   \caption{\small{\lsccc{Device activity detection in the} correlated case with i.i.d. group activity at $N=100$.}}
   \label{supportco100}
\end{figure}

\begin{figure}[tp]
\begin{center}
 \subfigure[\scriptsize{Error rate versus $L/N$ at $p=0.1$, $M=16$, $p_1/p_2=3$.}]
 {\resizebox{4.3cm}{!}{\includegraphics{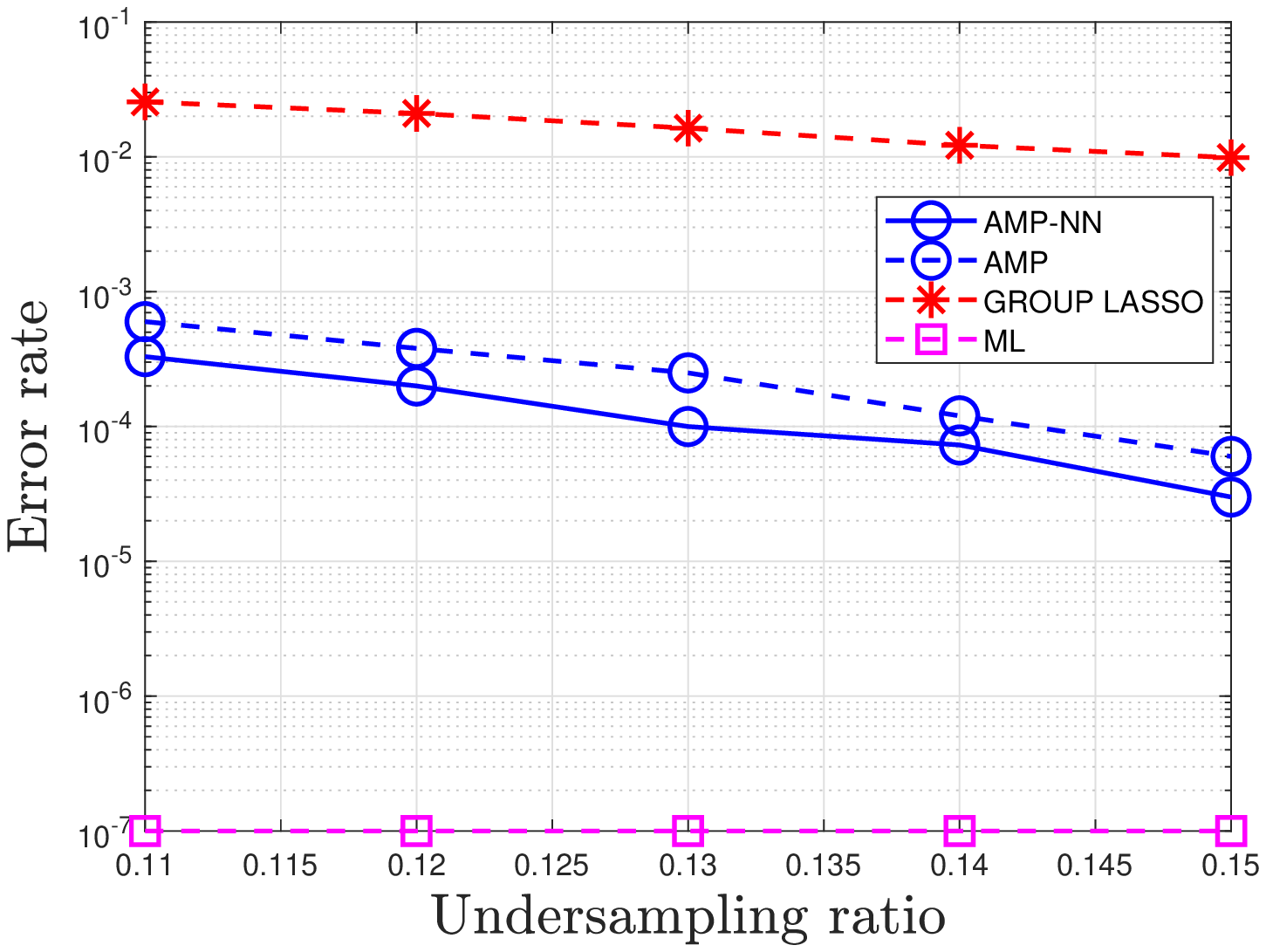}}}
 \subfigure[\scriptsize{Error rate versus $p$ at $L/N=0.15$, $M=16$, $p_1/p_2=3$.}]
 {\resizebox{4.3cm}{!}{\includegraphics{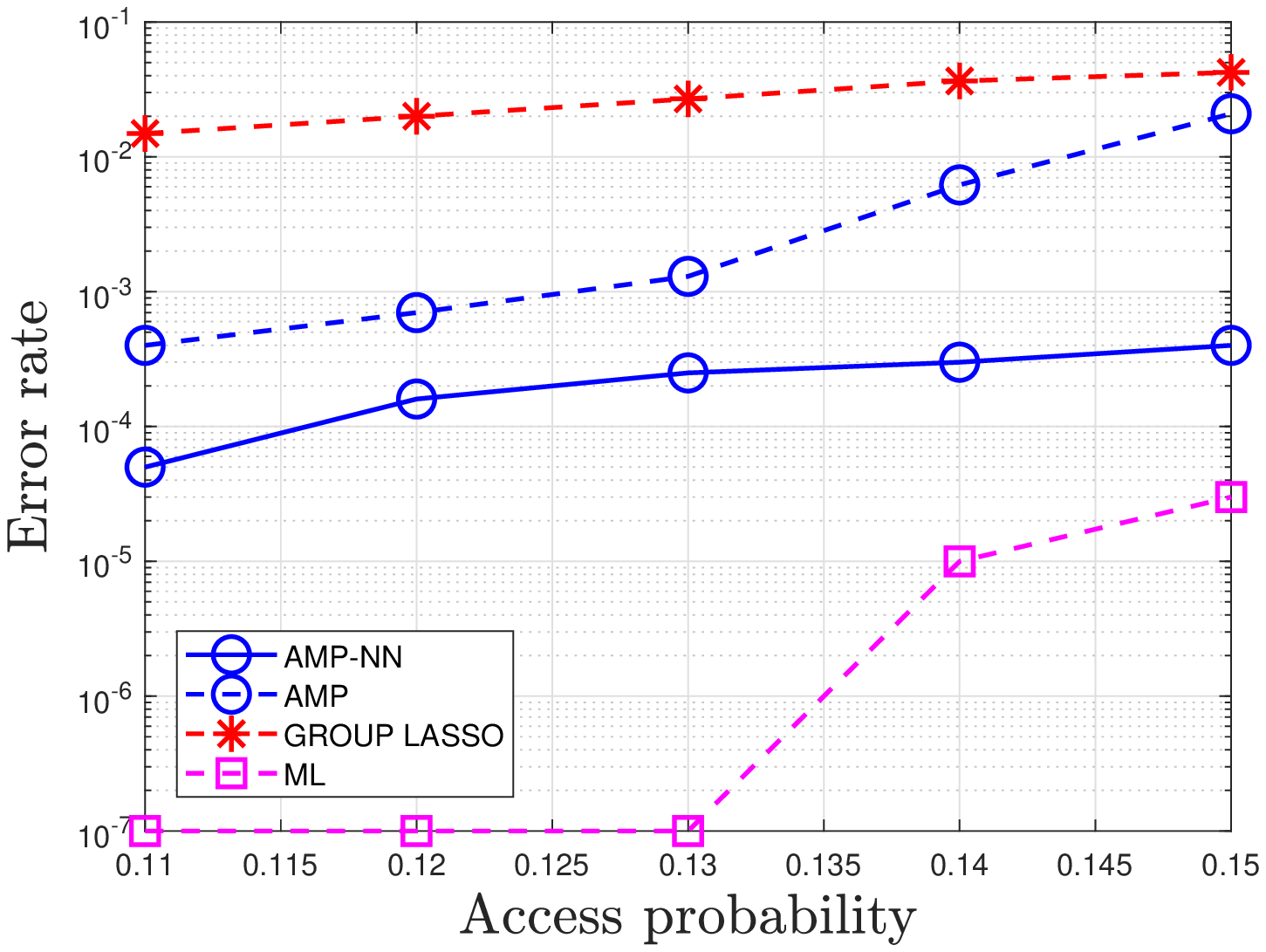}}}
 \subfigure[\scriptsize{\lscr{Error rate versus $M$ at $L/N=0.11$, $p=0.1$, $p_1/p_2=3$.}}]
 {\resizebox{4.3cm}{!}{\includegraphics{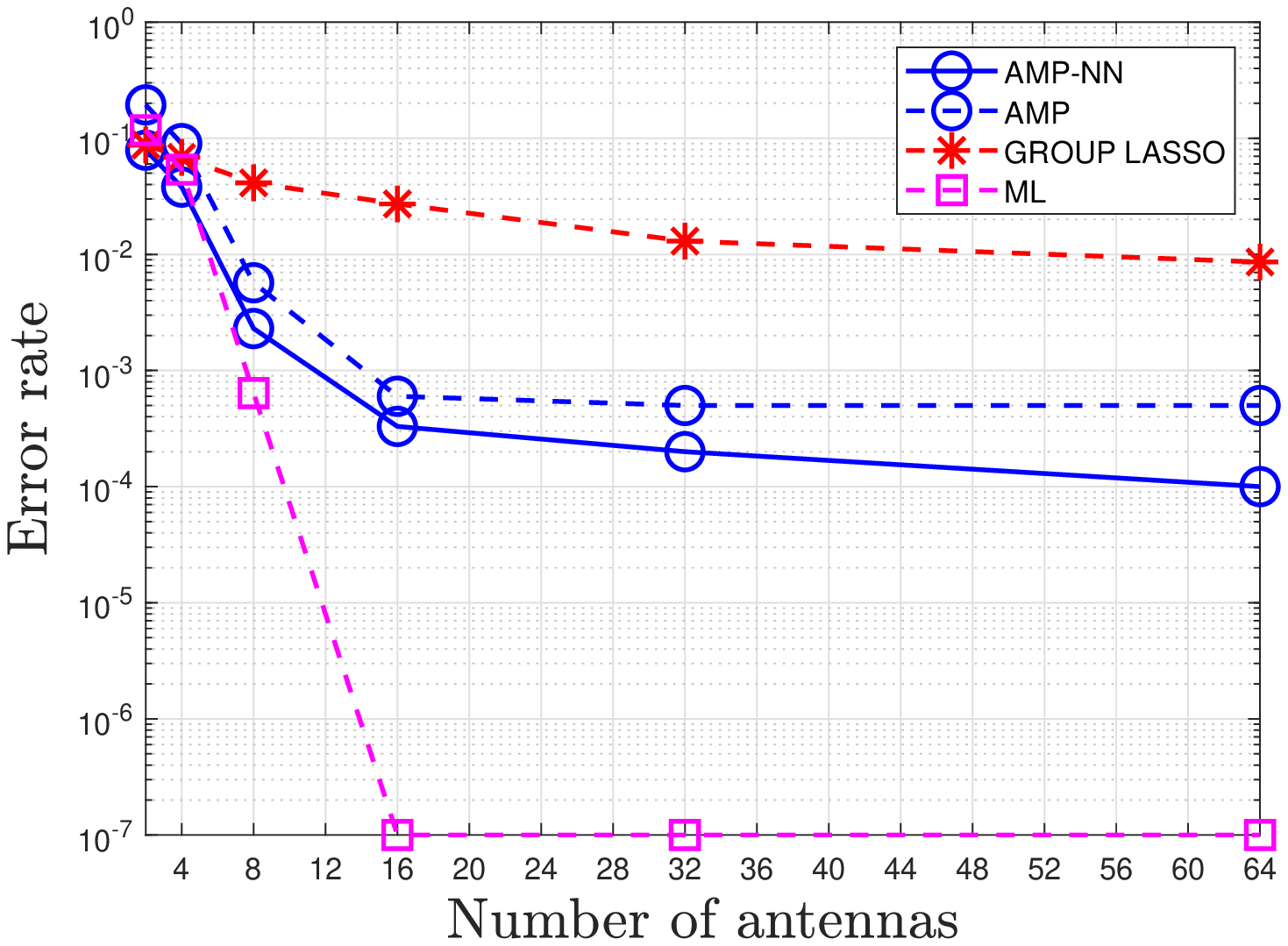}}}
 \subfigure[\scriptsize{Error rate versus $p_1/p_2$ at $L/N=0.11$, $M=16$, $p=0.1$.}]
 {\resizebox{4.3cm}{!}{\includegraphics{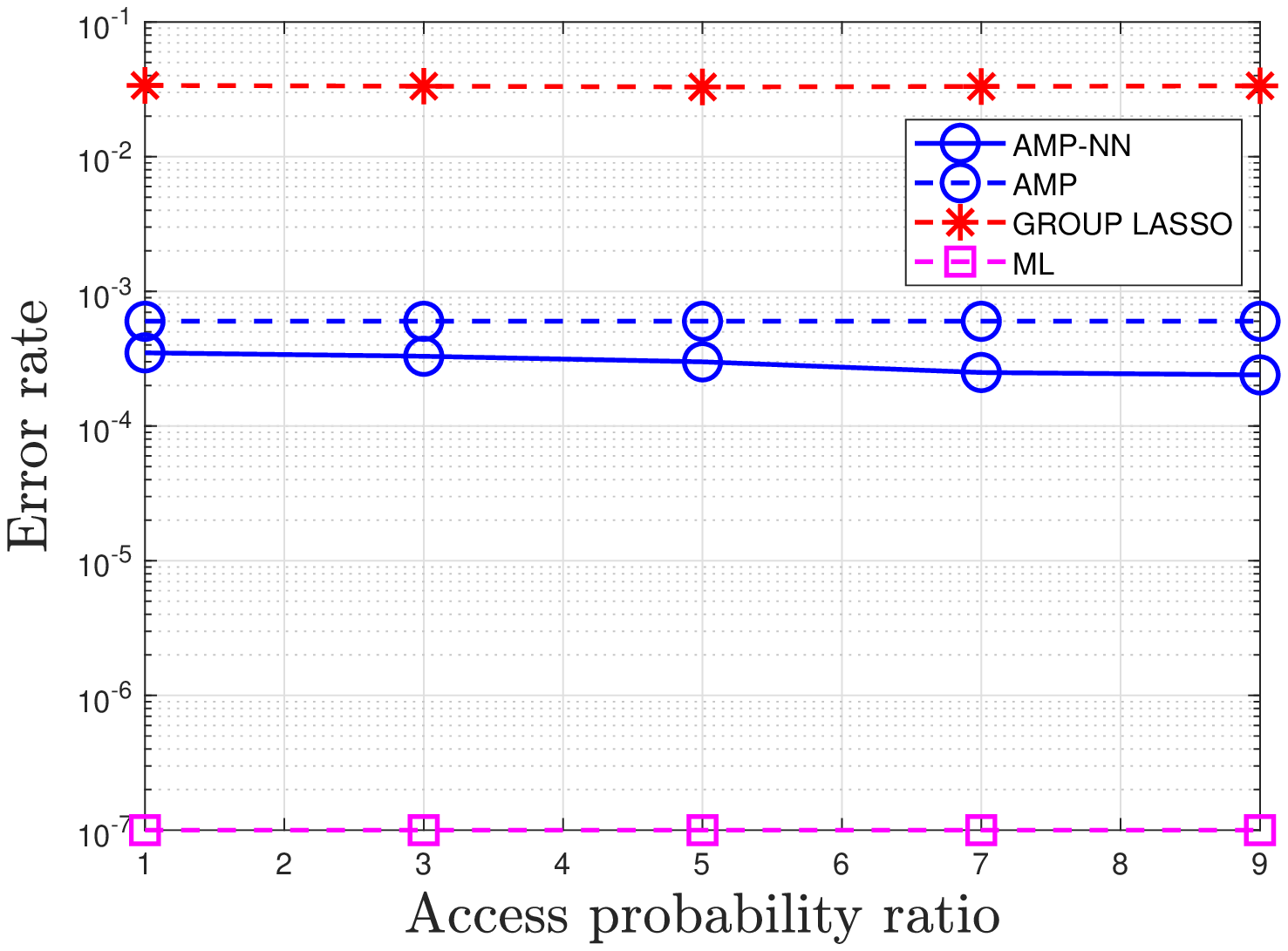}}}
  \subfigure[\scriptsize{\lsccc{Coherence versus $L/N$ \lscca{at $p=0.1$, $M=16$, $p_1/p_2=3$.}.}}]
 {\resizebox{4.4cm}{!}{\includegraphics{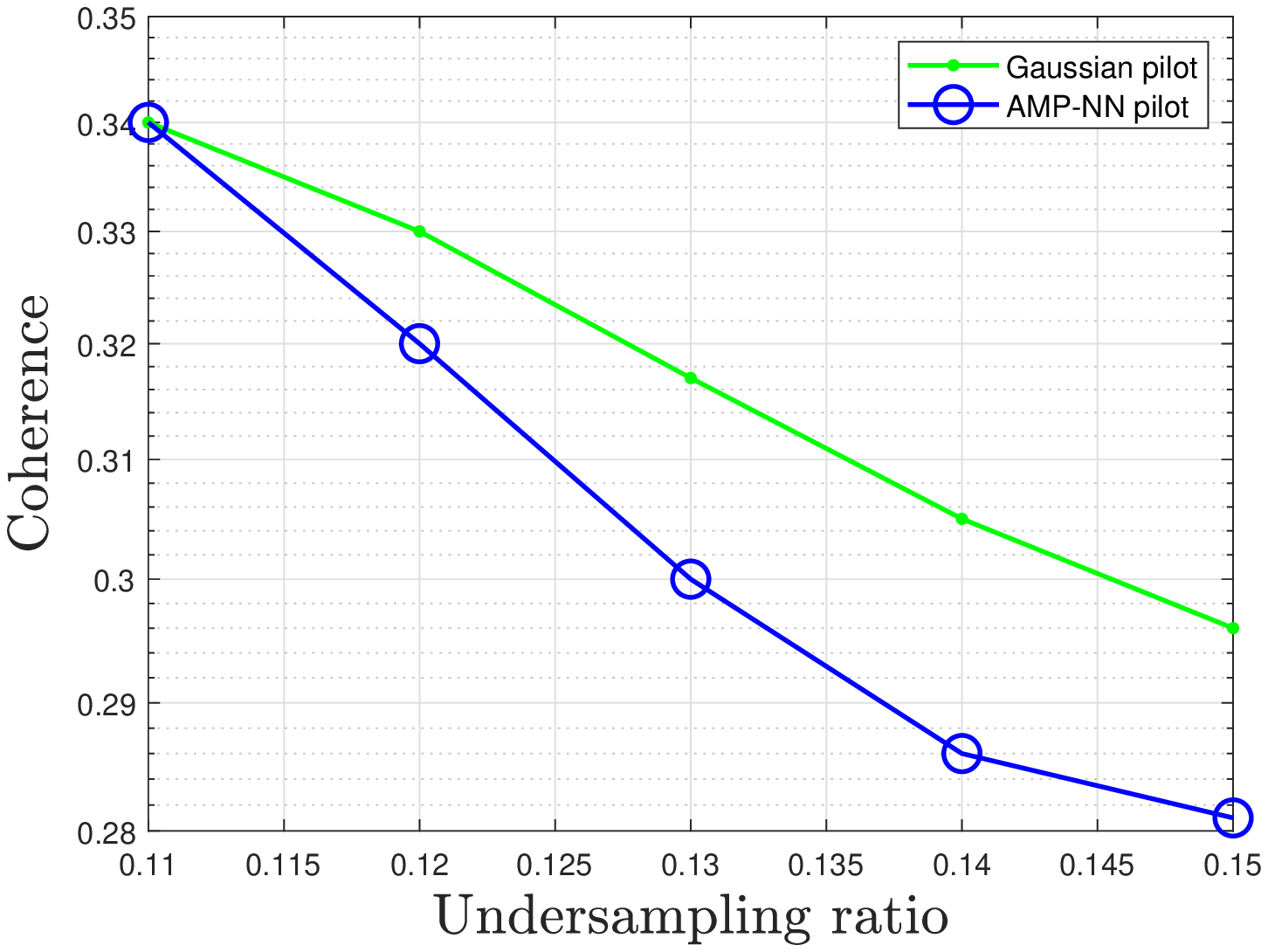}}}
  \end{center}
 \vspace*{-0.3cm}
   \caption{\small{\lsccc{Device activity detection in the} independent case at $N=1000$.}}
   \label{support1000}
\end{figure}

\begin{figure}[tp]
\begin{center}
 \subfigure[\scriptsize{Error rate versus $L/N$ at $p=0.1$, $M=16$, $G=200$.}]
 {\resizebox{4.3cm}{!}{\includegraphics{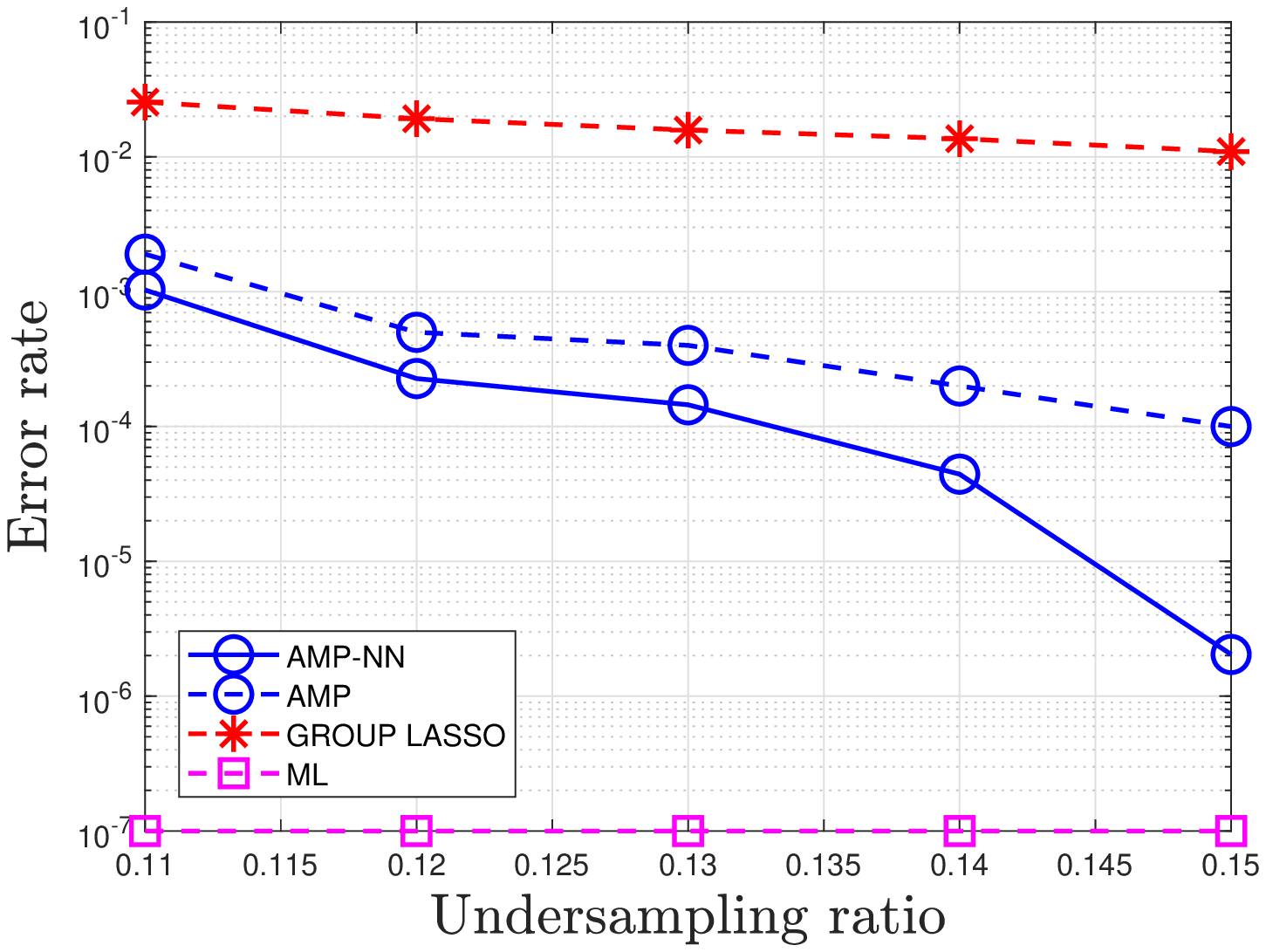}}}
 \subfigure[\scriptsize{Error rate versus $p$ at $L/N=0.15$, $M=16$, $G=200$.}]
 {\resizebox{4.3cm}{!}{\includegraphics{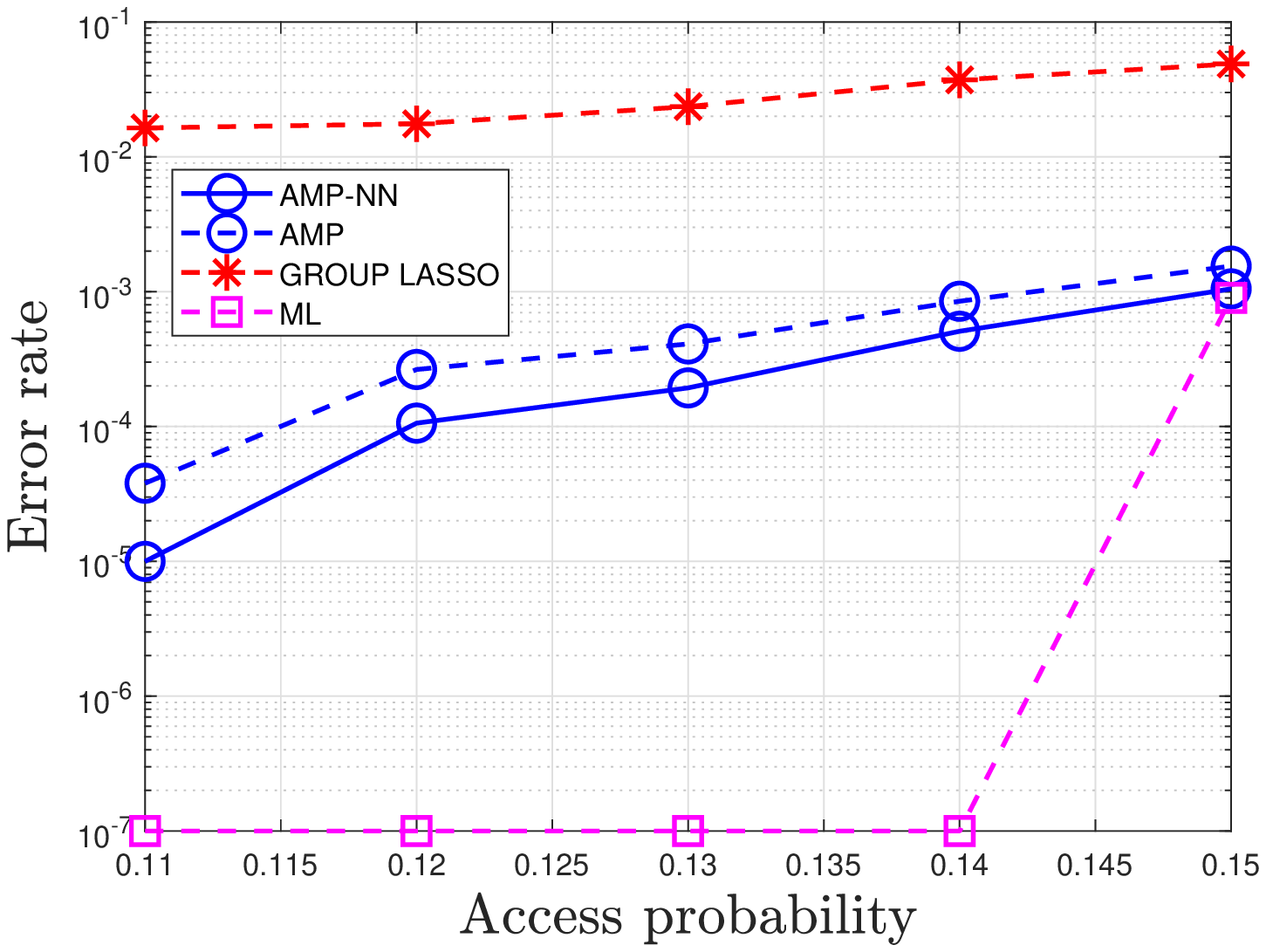}}}
 \subfigure[\scriptsize{\lscr{Error rate versus $M$ at $L/N=0.11$, $p=0.1$, $G=200$.}}]
 {\resizebox{4.3cm}{!}{\includegraphics{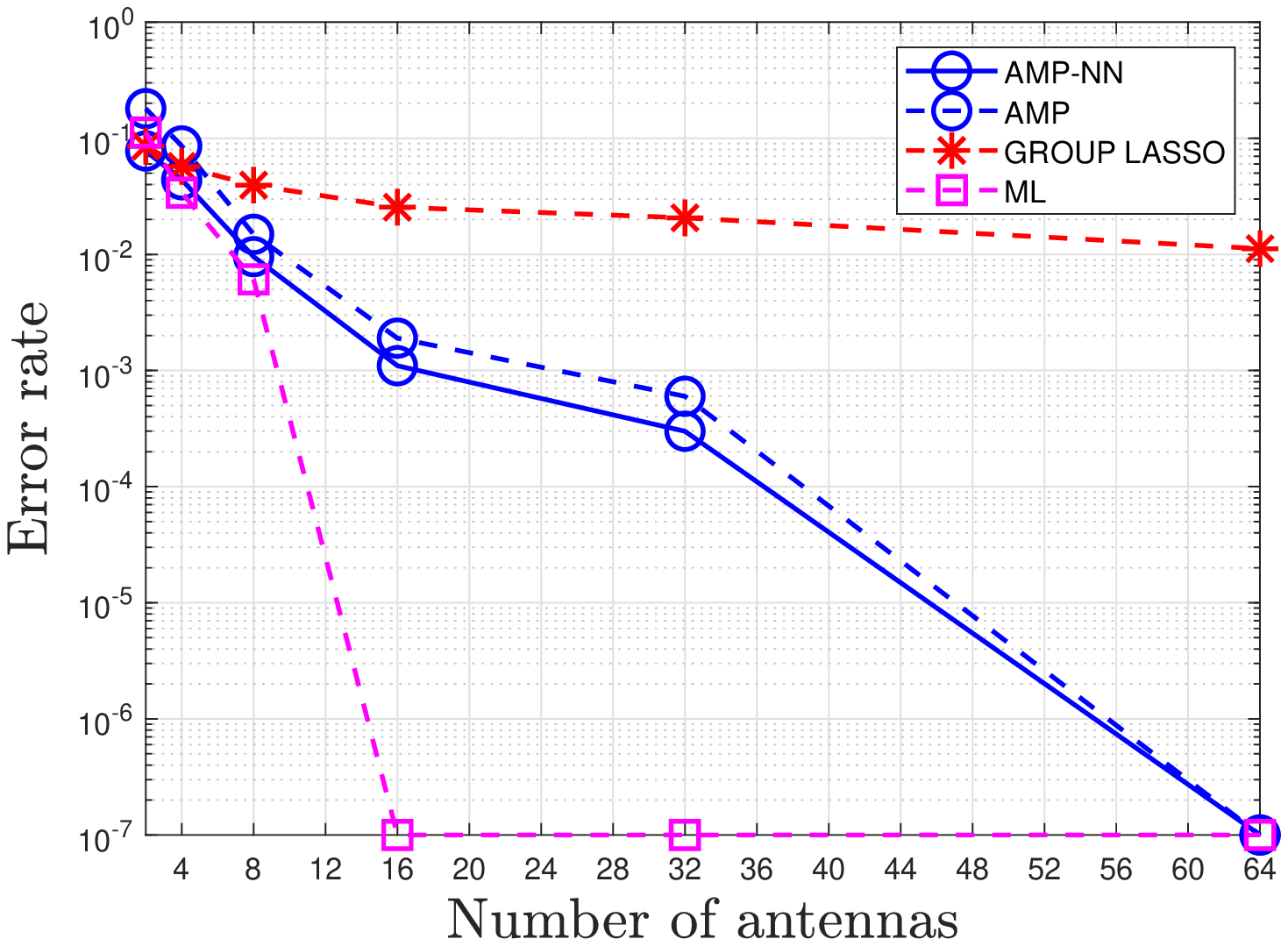}}}
 \subfigure[\scriptsize{Error rate versus $G$ at $L/N=0.11$, $M=16$, $p=0.1$.}]
 {\resizebox{4.3cm}{!}{\includegraphics{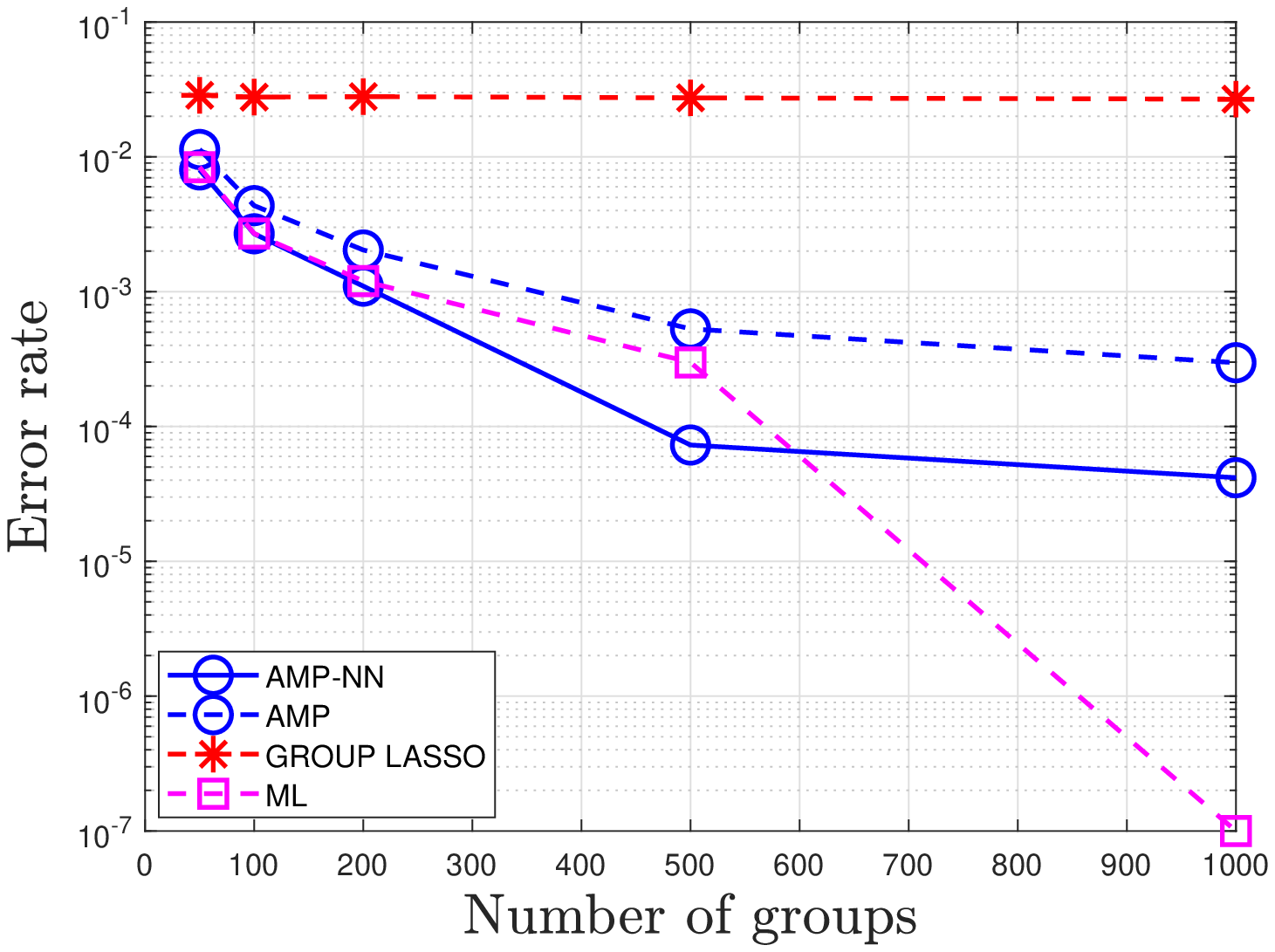}}}
  \subfigure[\scriptsize{Block-coherence across groups versus $L/N$ \lscca{at $p=0.1$, $M=16$, $G=200$}.}]
 {\resizebox{4.3cm}{!}{\includegraphics{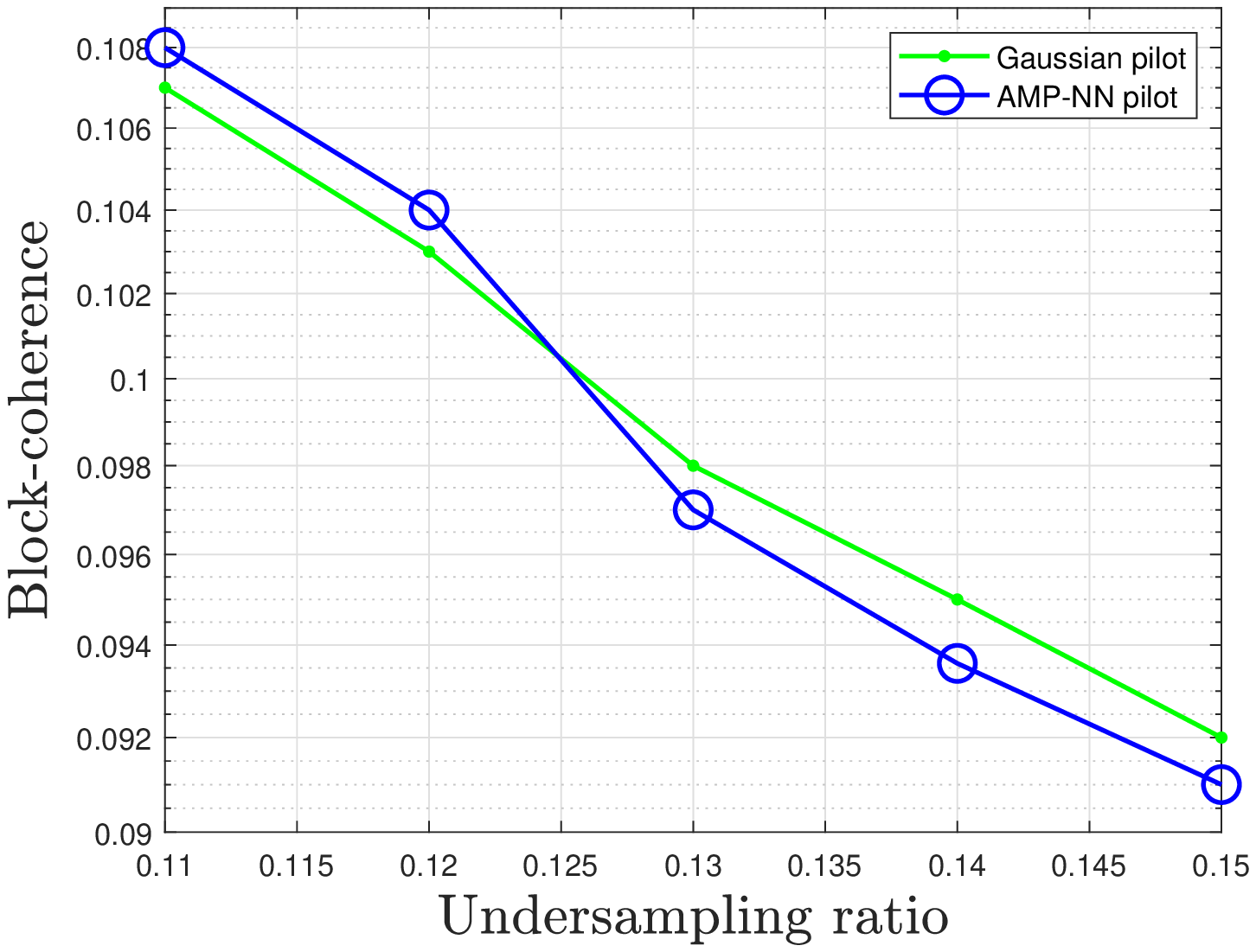}}}
 \subfigure[\scriptsize{Sub-coherence within one group versus $L/N$ \lscca{at $p=0.1$, $M=16$, $G=200$}.}]
 {\resizebox{4.3cm}{!}{\includegraphics{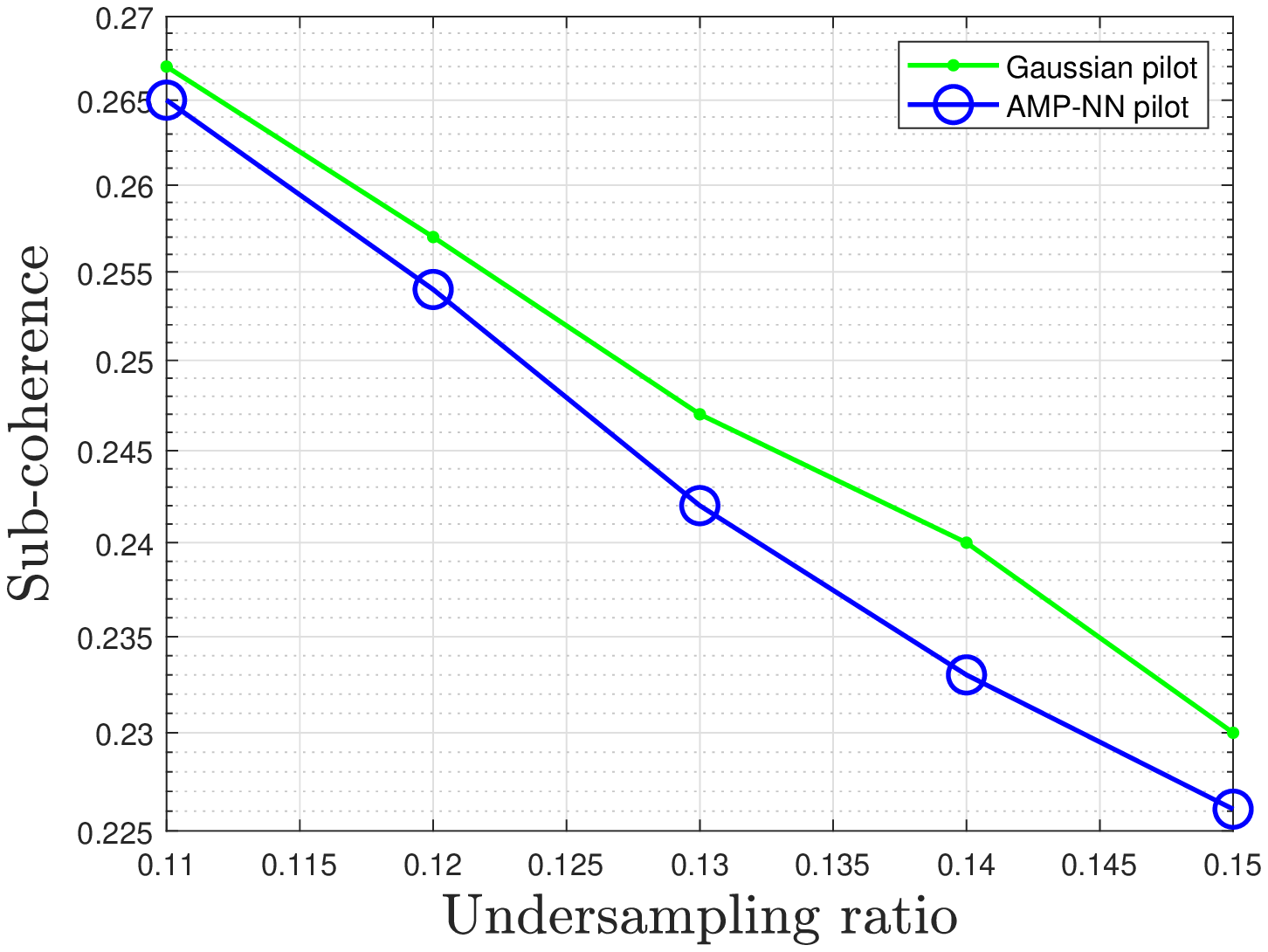}}}
  \end{center}
 \vspace*{-0.3cm}
   \caption{\small{\lsccc{Device activity detection in the} correlated case with i.i.d. group activity at $N=1000$.}}
   \label{supportco1000}
\end{figure}

Fig.~\ref{support100} (a), (b), (c), (d) and Fig.~\ref{support1000} \lscc{(a), (b), (c), (d)} illustrate the error rate versus the undersampling ratio $L/N$, access probability $p$, \lscc{\lsccc{number of antennas} $M$ and access ratio $p_1/p_2$} in the independent case at $N=100$ and $N=1000$, respectively. Fig.~\ref{supportco100} \lscc{(a), (b), (c), (d)} and Fig.~\ref{supportco1000} (a), (b), (c), (d) illustrate the error rate versus the undersampling ratio $L/N$, access probability $p$,  \lscc{\lsccc{number of antennas} $M$ and number of groups $G$} in the correlated case with i.i.d. group activity at $N=100$ and $N=1000$, respectively. From Fig.~\ref{support100} (a) and Fig.~\ref{supportco100} (a), we see that covariance-based LASSO performs much worse than Group LASSO and AMP at small $M$, as explained in Section~\ref{supco}. Given its unsatisfactory recovery accuracy, we no longer compare with covariance-based LASSO in the remaining figures. From all these figures, we make the following observations. \lscc{For each scheme, we observe similar trends with respect to $L/N$, $p$ and $M$}, \lscv{as in Section~\ref{resultsig}. Here we focus on the trends with respect to $G$. Note that correlation decreases with $G$. When $N=100$ and $N=1000$, the error rate of GROUP LASSO almost does not change with $G$, and the error rate of ML always decreases with $G$. When $N=100$, the error rates of NN and MAP-NN increase with $G$, as not much correlation can be utilized by the proposed model-driven approach for improving support recovery accuracy when $G$ is large.}  Our proposed  MAP-NN and AMP-NN outperform ML and AMP, respectively, under all considered parameters, demonstrating the competitive advantage of the proposed model-driven approach in designing effective measurement matrix and correction layers for improving support recovery accuracy. \lscv{When $N=100$, MAP-NN outperforms NN in the independent case, as Algorithm~3, which is designed under the independent assumption, is very effective in such case. When $N=100$, NN outperforms MAP-NN in the correlated case with  i.i.d. group activity, as NN can better utilize the correlation information in such case. When $N=100$,} our proposed MAP-NN and NN have the smallest error rate in the independent case and in the correlated case with i.i.d. group activity, respectively. \lscv{When $N=1000$, ML achieves the smallest error rate in the independent case, and AMP-NN achieves the smallest error in the correlated case with i.i.d. group activity. The reasons are as follows. ML achieves higher \lscb{support} recovery accuracy than AMP at the cost of computation time increase. However, AMP-NN can utilize correlation information to improve \lscb{support} recovery accuracy in the correlated case with i.i.d. group activity.}

\lscca{Fig.~\ref{support1000} (e) shows the coherence of our learned measurement matrix in the independent case and Gaussian matrix versus the
undersampling ratio $L/N$ at $N = 1000$. Note that Fig.~\ref{support1000} (e) for device activity detection in the independent case is the same as Fig.~\ref{signal1000} (e) for channel estimation in the independent case. This is because the proposed approaches for jointly sparse support recovery and signal recovery with the AMP-based decoder share the same auto-encoder structure and training process.} Fig.~\ref{supportco1000} (e), (f) show the block-coherence across groups and sub-coherence within one group for all devices of our learned measurement matrix in the correlated case with i.i.d. group activity and Gaussian matrix versus the undersampling ratio $L/N$ at $N=1000$.
As from Fig.~\ref{signalco1000} (e), from Fig.~\ref{supportco1000} (f), we can see that the learned measurement matrix in our proposed AMP-NN has smaller sub-coherence than i.i.d. Gaussian matrix, to more effectively differentiate devices that are always active simultaneously. From Fig.~\ref{supportco1000} (e), we can see that when $L/N$ is small, the learned measurement matrix of our proposed AMP-NN has larger block-coherence than Gaussian matrix, as the differentiability of the devices that have a smaller chance of being  active at the same time is sacrificed at small $L/N$; when $L/N$ is large, the learned measurement matrix of our proposed AMP-NN has smaller block-coherence than Gaussian matrix, as the differentiability of the devices that have a smaller chance of being  active at the same time can be taken into account at large $L/N$.

\begin{figure}[tp]
\begin{center}
 \subfigure[\scriptsize{\lscr{Computation time versus $L/N$ at $M=4$ and $N=100$.}}]
 {\resizebox{4.3cm}{!}{\includegraphics{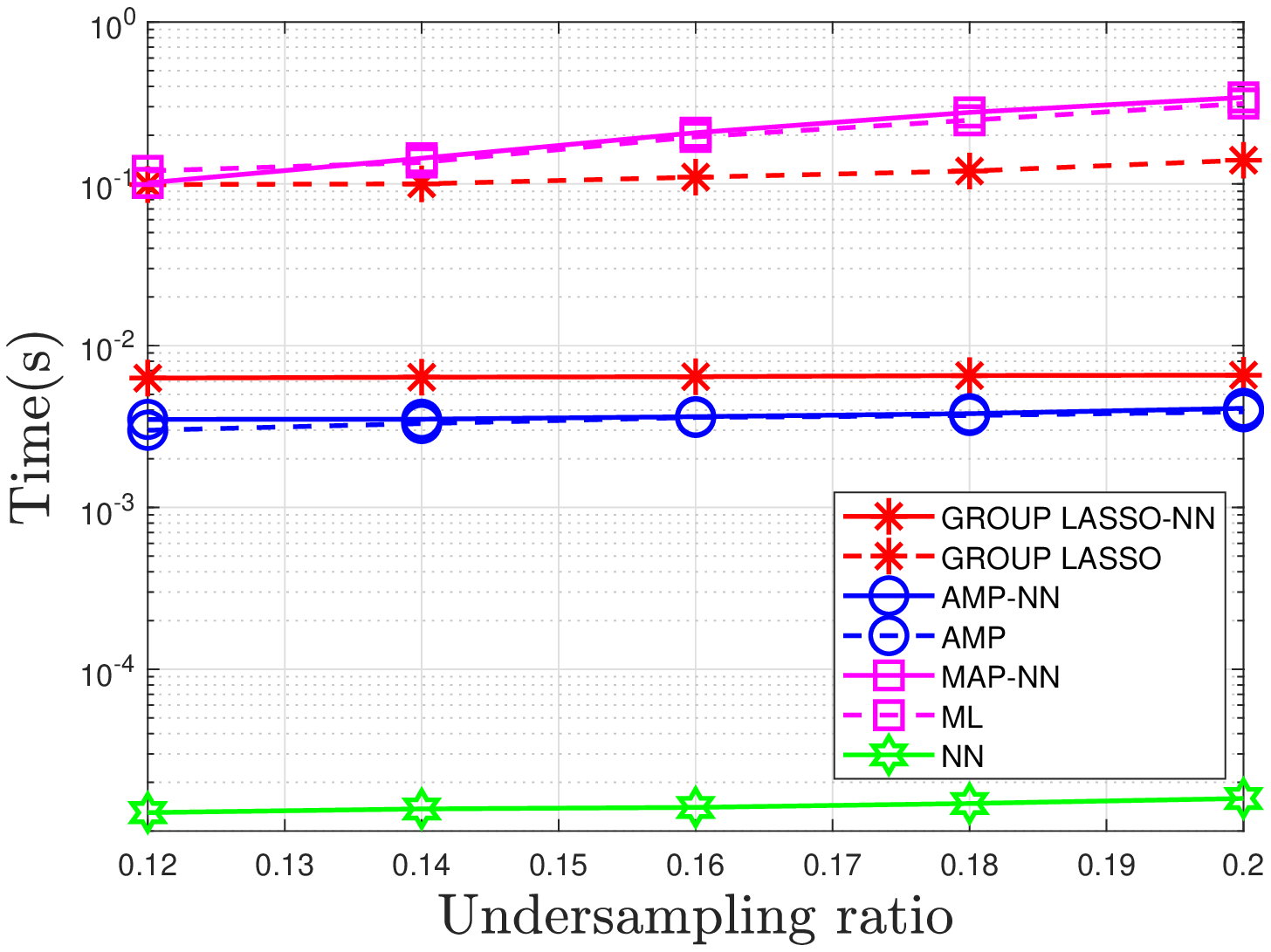}}}
 \subfigure[\scriptsize{\lscr{Computation time versus $L/N$ at $M=16$ and $N=1000$.}}]
 {\resizebox{4.3cm}{!}{\includegraphics{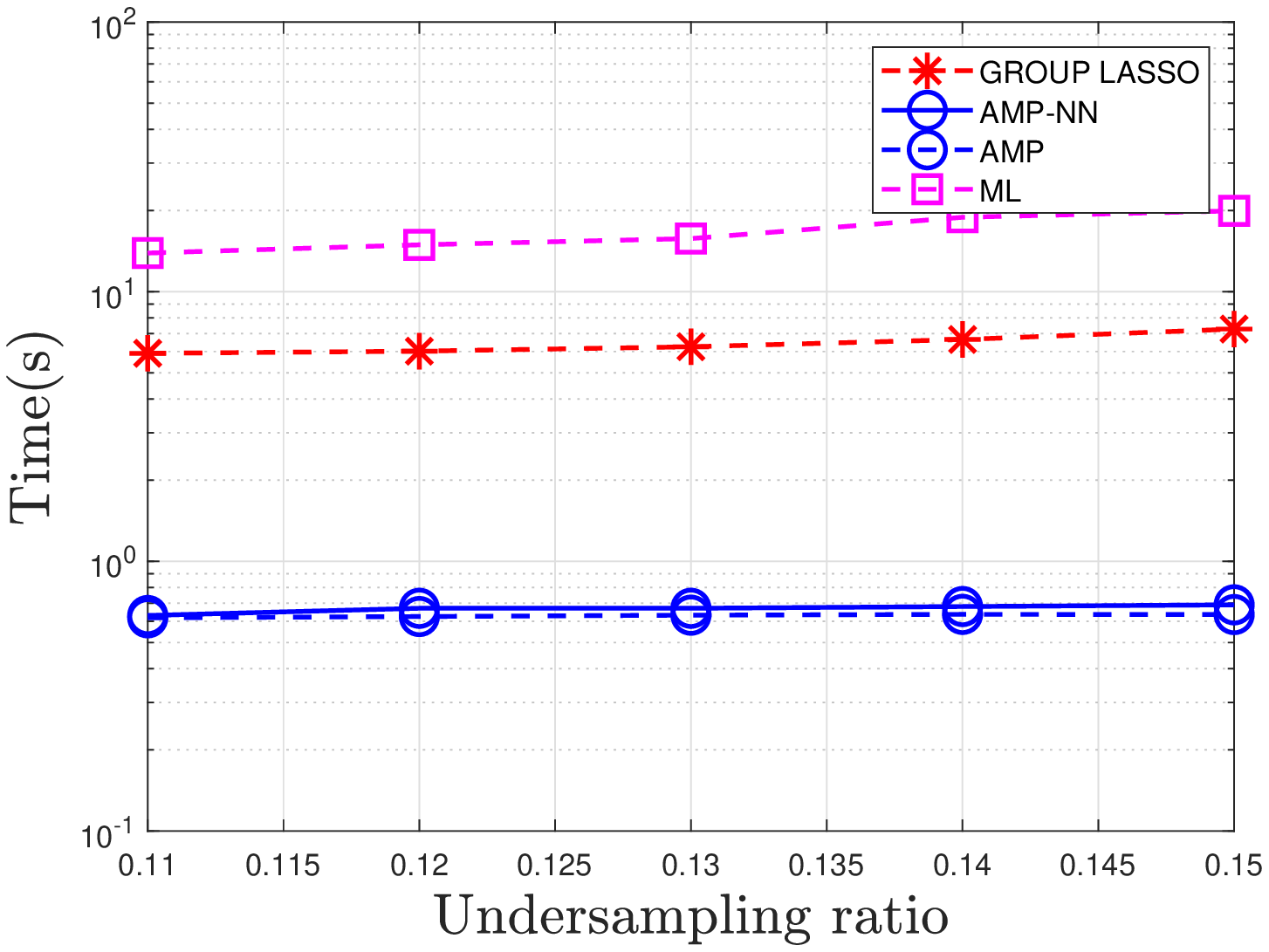}}}
  \end{center}
 \vspace*{-0.3cm}
   \caption{\small{Computation time \lsccc{for device activity detection} in the independent case \lsccc{at $p=0.1$ and $p_1/p_2=3$}.}}
   \label{timesup}
   \vspace{-0.1cm}
\end{figure}

Fig.~\ref{timesup} \lscr{illustrates} the computation \lscr{time} versus the undersampling ratio $L/N$ in independent case at $N=100$ and $N=1000$. From Fig.~\ref{timesup}, \lscr{we can make the following observations.} \lscn{The proposed AMP-NN and MAP-NN have similar computation time to AMP and ML, respectively}. \lscb{Because AMP-NN and AMP have the same number of iterations, MAP-NN and ML have the same number of iterations, and $V=3$ in AMP-NN and MAP-NN is \lsccc{quite} small.} \lscv{When $N=100$}, our proposed NN has the \lscc{shortest} computation time, \lscv{as the covariance-based decoder does not require an approximation part}. \lscn{When $N=1000$, AMP and our proposed AMP-NN have the shortest computation time.} \lscc{GROUP LASSO and ML have much longer computation time than the other schemes, especially at large $N$. Thus, they \lscv{may not be suitable for device activity detection in} practical mMTC (with large $N$).}

\section{Conclusion}
In this paper, we propose two-model driven approaches, each consisting of an encoder and a model-driven decoder, using the standard auto-encoder structure for real numbers in deep learning. One aims to jointly design the common measurement matrix and jointly sparse signal recovery methods for complex signals, and can be used in the joint design of pilot sequences and channel estimation methods in MIMO-based grant-free random access. The other is to jointly design the common measurement matrix and jointly sparse support recovery methods for complex signals, and can be applied to the joint design of pilot sequences and device activity detection in MIMO-based grant-free random access. We propose the Group LASSO-based decoder and AMP-based decoder, as \lscb{key} instances for the model-driven decoder for jointly sparse signal recovery. In addition, we propose the covariance-based decoder, MAP-based decoder and AMP-based decoder, as important instances for the model-driven decoder for jointly sparse support recovery. These decoders are all based on the state-of-the-art recovery methods. The proposed model-driven approaches can greatly benefit from the \lscr{underlying advanced} methods with theoretical performance guarantee via the approximation \lscr{parts} of the model-driven decoders. They can also effectively utilize \lscr{features} of sparsity patterns in designing the encoders for \lscr{obtaining} effective common measure matrix and adjusting the correction \lscr{parts} of the model-driven decoders. Last but not the least, they \lscb{can} provide higher recovery accuracy with \lscb{shorter} computation time than the underlying advanced recovery methods. We conduct extensive numerical results on channel estimation and device activity detection in MIMO-based grant-free random access\lscr{. The numerical results} show that the proposed approaches can achieve better detection and estimation accuracy with shorter computation time than the state-of-the-art recovery methods, including Group LASSO, ML and AMP, etc. Furthermore, the numerical results explain how such gains are achieved via the proposed approaches. The obtained results are of critical importance for achieving massive access in mMTC.
%

%
\bibliographystyle{IEEEtran}
	\begin{IEEEbiography}[{\includegraphics[height=1.25in]{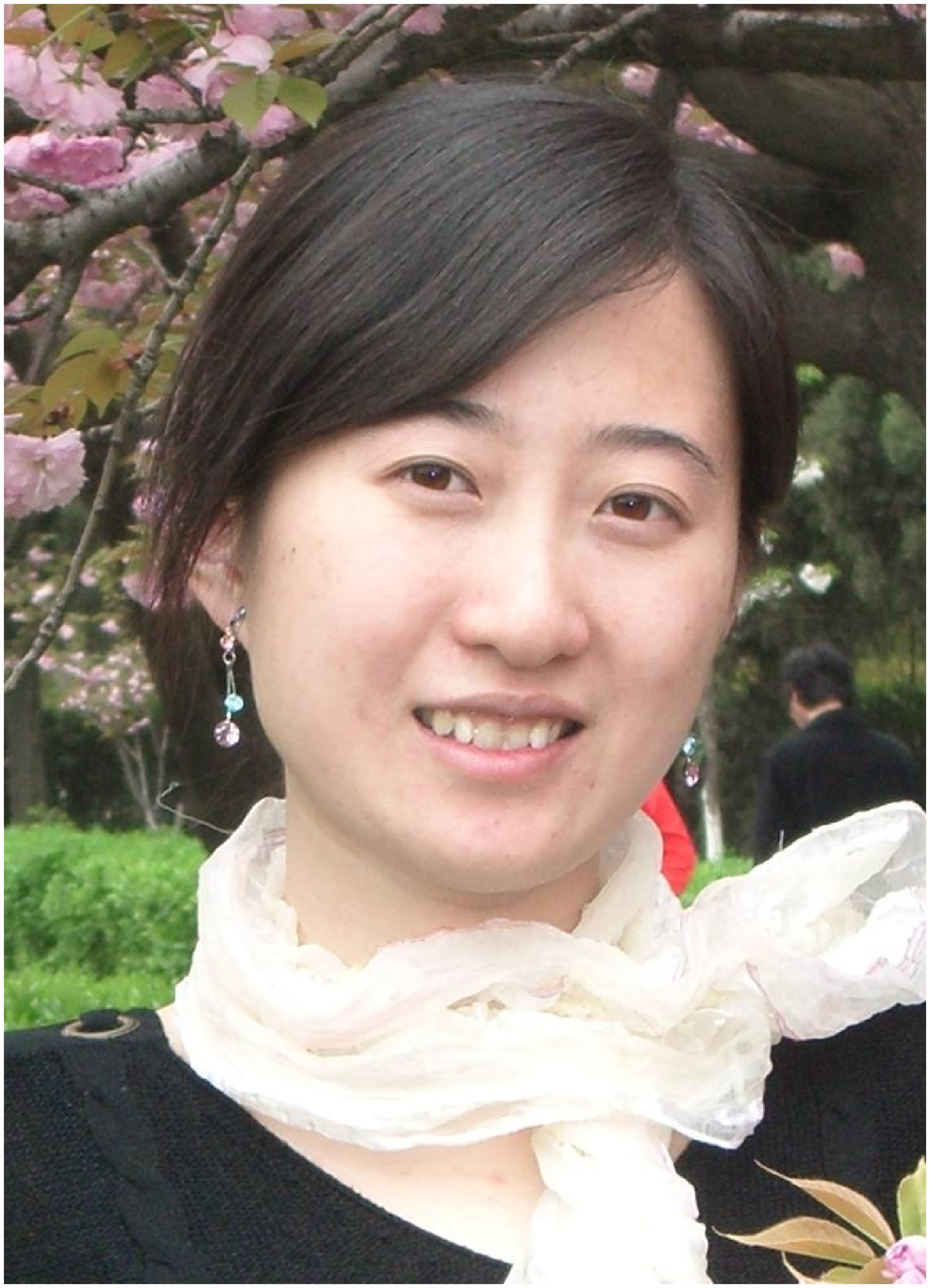}}]{Ying Cui}
		received the B.E. degree in electronic and information engineering from Xi’an Jiao Tong University, China, in 2007, and the Ph.D. degree in electronic and computer engineering from the Hong Kong University of Science and Technology (HKUST), Hong Kong, in 2011. From 2012 to 2013, she was a Post-Doctoral Research Associate with the Department of Electrical and Computer Engineering, Northeastern University, Boston, MA, USA. From 2013 to 2014, she was a Post-Doctoral Research Associate with the Department of Electrical Engineering and Computer Science, Massachusetts Institute of Technology (MIT), Cambridge, MA. Since 2015, she has been an Associate Professor with the Department of Electronic Engineering, Shanghai Jiao Tong University, China. Her current research interests include optimization, cache-enabled wireless networks, mobile edge computing, and delay-sensitive cross-layer control. She was selected to the Thousand Talents Plan for Young Professionals of China in 2013. She was a recipient of the Best Paper Award at IEEE ICC, London, U.K., June 2015. She serves as an Editor for IEEE TRANSACTIONS ON WIRELESS COMMUNICATIONS.
	\end{IEEEbiography}

	\begin{IEEEbiography}[{\includegraphics[height=1.25in]{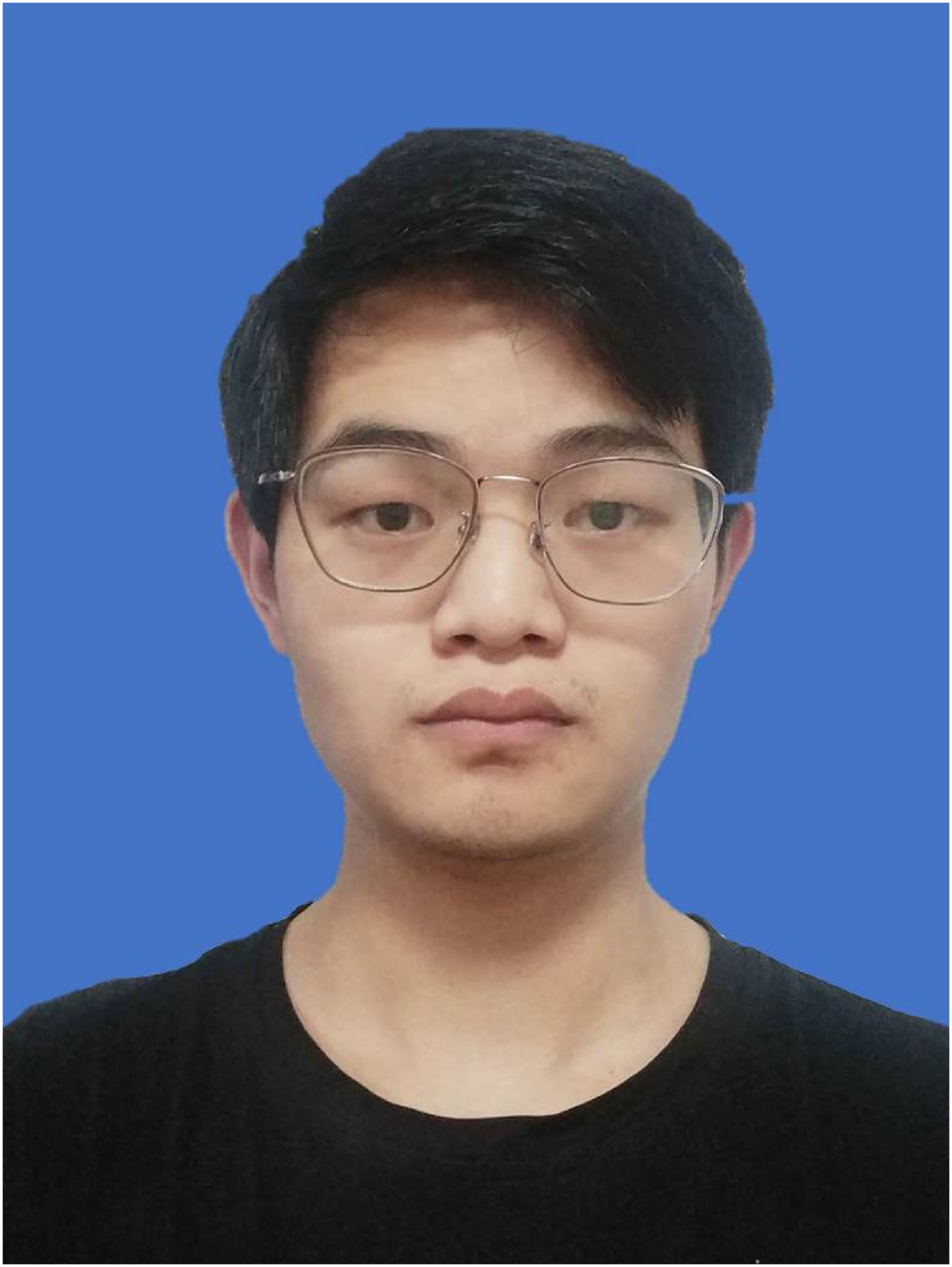}}]{Shuaichao Li}
		received the B.S. degree in University of Electronic Science and Technology of China, China, in 2018. He is currently pursuing the master’s degree with the Department of Electronic Engineering, Shanghai Jiao Tong University, China. His research interests include grant-free random access, deep learning and convex optimization.
	\end{IEEEbiography}

	\begin{IEEEbiography}[{\includegraphics[height=1.25in]{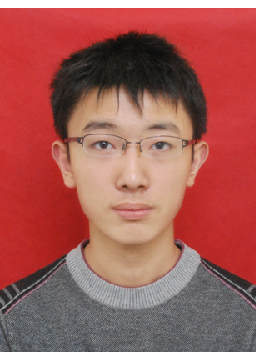}}]{Wanqing Zhang}
		received the B.S. degree in Shandong University, China, in 2018. He is currently pursuing the Ph.D. degree with the Department of Electronic Engineering, Shanghai Jiao Tong University, China. His research interests include grant-free random access, deep learning and convex optimization.
	\end{IEEEbiography}
\end{document}